\def\eps{\epsilon}
\theoremstyle{definition}
\newtheorem{question}{Question}
\newtheorem{theorem}{Theorem}[section]
\newtheorem{observation}{Observation}
\newtheorem{definition}[theorem]{Definition}
\newtheorem{lemma}[theorem]{Lemma}
\newtheorem{corollary}[theorem]{Corollary}
\newtheorem{remark}[theorem]{Remark}
\newcommand{\parent}{\textnormal{parent}}
\newcommand{\opt}{\textnormal{OPT}}
\newcommand{\sol}{\textnormal{SOL}}
\newcommand{\vx}{\boldsymbol{x}}
\newcommand{\polylog}{\operatorname{polylog}}
\newcommand{\Min}{\textsc{Min}}
\newcommand{\Max}{\textsc{Max}}
\newcommand{\argmin}{\textnormal{argmin}}
\newcommand{\argmax}{\textnormal{argmax}}
\definecolor{mygreen}{RGB}{10,150,110}
\definecolor{myred}{RGB}{150,10,20}
\providecommand{\email}[1]{\href{mailto:#1}{\nolinkurl{#1}\xspace}}
\title{Streaming Algorithms for Network Design}
\author{Chandra Chekuri\thanks{University of Illinois at Urbana-Champaign: \email{chekuri@illinois.edu}. Supported
in part by NSF grant CCF-2402667.} 
\and Rhea Jain\thanks{University of Illinois at Urbana-Champaign: \email{rheaj3@illinois.edu}. Supported
in part by NSF grant CCF-2402667.} 
\and Sepideh Mahabadi\thanks{Microsoft Research: \email{smahabadi@microsoft.com}.} 
\and Ali Vakilian\thanks{Toyota Technological Institute at Chicago (TTIC): \email{vakilian@ttic.edu}.}}
\date{}
\begin{document}

\maketitle

\begin{abstract}
We consider the Survivable Network Design problem (SNDP) in the single-pass insertion-only streaming model. The input to SNDP is an edge-weighted graph $G = (V, E)$ and an integer connectivity requirement $r(uv)$ for each $u, v \in V$. The objective is to find a minimum-weight subgraph $H \subseteq G$ such that, for every pair of vertices $u, v \in V$, $u$ and $v$ are $r(uv)$-edge/vertex-connected. Recent work by Jin {\em et al.}~\cite{JKMV24} obtained approximation algorithms for edge-connectivity augmentation, and via that, also derived algorithms for edge-connectivity SNDP (EC-SNDP). In this work we consider \emph{vertex-connectivity} setting (VC-SNDP) and obtain several results for it as well as improved results for EC-SNDP.
    \begin{itemize}
        \item We provide a general framework for solving connectivity problems including SNDP and others in streaming; this is based on a connection to \emph{fault-tolerant spanners}. For VC-SNDP, we provide an $O(tk)$-approximation in $\tilde O(k^{1-1/t}n^{1 + 1/t})$ space, where $k$ is the maximum connectivity requirement, assuming an exact algorithm at the end of the stream. Using a refined LP-based analysis, we provide an $O(\beta t)$-approximation where $\beta$ is the integrality gap of the natural cut-based LP relaxation. These are the first approximation algorithms in the streaming model for VC-SNDP. When applied to the EC-SNDP, our framework provides an $O(t)$-approximation in $\tilde O(k^{1/2-1/(2t)}n^{1 + 1/t} + kn)$ space, improving the $O(t \log k)$-approximation of \cite{JKMV24} using $\tilde{O}(kn^{1+1/t})$ space; this also extends to element-connectivity SNDP. 
        \item We consider vertex connectivity-augmentation in the link-arrival model.  The input is a $k$-vertex-connected spanning subgraph $G$, and additional weighted links $L$ arrive in the stream; the goal is to store the minimum-weight set of links such that $G \cup L$ is $(k+1)$-vertex-connected. We obtain constant-factor approximations in near-linear space for $k = 1, 2$. Our result for $k=2$ is based on using the SPQR tree, a novel application for this well-known representation of $2$-connected graphs.
    \end{itemize}
\end{abstract}

\newpage
\section{Introduction}

Network design is a classical area of research in combinatorial optimization that has been instrumental in the development and advancement of several important algorithmic techniques. Moreover, it has practical applications across a wide variety of domains. In many modern real-world settings, graphs are extremely large, making it impractical to run algorithms that require access to the entire graph at once. This motivates the study of graph algorithms in the streaming model of computation; this is a commonly used model for handling large-scale or real-time data. There has been an extensive line of work on graph algorithms in the streaming setting. In addition to its practical utility, this line of work has led to a variety of new theoretical advances that have had auxiliary benefits well-beyond what was initially anticipated. 
Some well-studied problems in the streaming model include matching~\cite{McGregor05,GoelKK12,assadi2016maximum,assadi2017estimating,Kapralov21}, max-cut \cite{kapralov2014streaming}, spanners \cite{Baswana08,Elkin11, ahn2012graph,kapralov2014spanners,fernandez2020graph,filtser2021graph}, sparsifiers \cite{ahn2012graph,kapralov2014spanners}, shortest paths~\cite{FeigenbaumKMSZ08,guruswami2016superlinear}, and minimum spanning tree \cite{ahn2012graph,sun2015tight,nelson2019optimal}, among many others.

We consider the Survivable Network Design problem (SNDP). The input to this problem is an undirected graph $G = (V, E)$ with non-negative edge-weights $w: E \to \mathbb{R}_{\geq 0}$ and an integer connectivity requirement $r(uv)$ for each unordered pair of vertices $u, v \in V$. The objective is to find a minimum-weight subgraph $H \subseteq G$ such that, for every pair of vertices $u, v \in V$, there exist $r(uv)$ disjoint $uv$-paths in $H$. If the paths for the pairs are required to be {\em edge-disjoint}, the problem is referred to as {\em edge-connectivity} SNDP (EC-SNDP), and if they are required to be {\em vertex-disjoint}, it is known as the {\em vertex-connectivity} SNDP (VC-SNDP). We refer to the maximum connectivity requirement as $k := \max_{uv} r(uv)$. SNDP is a fundamental problem that generalizes many well-known polynomial-time solvable problems, including minimum spanning tree (MST) and $s$-$t$ shortest path, as well as several NP-hard problems, including Steiner Tree and Steiner Forest. We define some special cases of interest:
\begin{itemize}
  \item \emph{$k$-Connected Subgraph:} This is the special case in which $r(uv) = k$ for all $u,v \in V$. We denote the edge version as $k$-ECSS and the vertex version as $k$-VCSS; the goal is to find a min-weight $k$-edge-connected and $k$-vertex-connected spanning subgraph respectively. 
  \item \emph{Connectivity Augmentation:} In this problem, we are given a partial solution $G' = (V, E') \subseteq G$ for ``free'', with the guarantee that each $u, v \in V$ is at least $r(uv)-1$ connected in $G'$. The goal is to find a min-weight set of edges $F \subseteq E \setminus E'$ that \emph{augments} the connectivity of each $u,v$ pair to $r(uv)$. The edges $E \setminus E'$ are often referred to as \emph{links}. We refer to the augmentation version of the edge/vertex spanning problems as $k$-EC-CAP and $k$-VC-CAP respectively: here we are given a $k$-connected graph and the goal is to increase its connectivity to $k+1$. 
\end{itemize}
In this work, we study SNDP in the {\em insertion-only} streaming model. Formally, the algorithm reads the edges of a graph sequentially in an arbitrary order, processing each edge as it arrives. The goal is to solve graph problems over the streamed edges in a single pass, using a memory significantly smaller than storing the entire set of edges. 

While there has been significant recent progress for EC-SNDP in the streaming model, VC-SNDP remains essentially unexplored. One reason for this is that vertex-connectivity network design problems often do not share the same structural properties as their edge-connectivity counterparts. For instance, in the offline setting, EC-SNDP admits a $2$-approximation via a seminal iterated rounding algorithm of Jain \cite{jain2001factor}. However, these techniques  fail to extend to VC-SNDP. The best known approximation for VC-SNDP is $O(k^3 \log n)$ \cite{chuzhoy2009k}; moreover, the dependence on $k$ is known to be necessary \cite{chakraborty2008network}. Another example highlighting the difficulty of vertex-connectivity problems is $k$-CAP: $k$-EC-CAP is known to reduce to $1$ or $2$-EC-CAP for all $k$ \cite{DinitsKL76} (see also~\cite{KhullerT93,cheriyan19992}), but no such structure exists for the vertex-connectivity setting. Thus the main motivating question for this paper is the following:
\begin{question}\label{question:vc_sndp}
  What is the approximability of vertex-connectivity network design problems in the streaming setting?
\end{question}

We briefly discuss prior work on streaming algorithms for EC-SNDP and several of its special cases. As mentioned earlier, streaming algorithms for shortest path and MST are both well-studied. We note an important distinction between these two problems: while MST admits an exact algorithm in $O(n)$ words of space, the current best known streaming algorithm for $s$-$t$ shortest path is an $O(t)$ approximation in $\tilde{O}(n^{1+1/t})$-space. Thus, any problem that contains $s$-$t$ shortest path as a special case incurs this limitation,  while global connectivity problems such as MST are likely to have better trade-offs. Some other special cases of SNDP that have been studied in the streaming model include the Steiner forest problem in geometric setting~\cite{czumaj2022streaming}, and testing $k$-connectivity \cite{zelke2006k,zelke2011intractability,sun2015tight,assadi2021simple}. EC-SNDP in the streaming model was first studied in generality very recently in the work of \cite{JKMV24}. Their work primarily focused on $k$-EC-CAP in two models:
\begin{itemize}
  \item \emph{Link arrival:} In this model, the partial solution $G'$ is given up-front and does not count towards the space complexity of the algorithm. The weighted links $E \setminus E(G')$ arrive in the stream.
  \item \emph{Fully streaming:} In this model, the edges of the partial solution, along with the additional weighted links used in augmentation, both arrive in the stream. They may arrive in an interleaved fashion; that is, the algorithm may not know the full partial solution when processing some links in $E \setminus E(G')$. 
\end{itemize}
Both models are practically useful in their own right. Note that an $\alpha$-approximation for $k$-CAP in the link-arrival model implies an $(\alpha k)$-approximation for $k$-ECSS if we make $k$ passes over the stream, as we can augment the connectivity of the graph by one in each pass. This is particularly useful in situations where $k$ is a small constant and the best approximation ratio for link-arrival is significantly better than that of fully-streaming. 

In the link-arrival model, \cite{JKMV24} obtained a tight $(2+\eps)$-approximation for $k$-EC-CAP in $O(\frac{n\log n}{\eps})$ space; note that while one can obtain a better than $2$ in the offline setting, there is a lower bound of $2$ in the semi-streaming setting. In the fully streaming model, they obtained an $O(t)$-approximation in $\tilde O(kn + n^{1+1/t})$ space  for the connectivity augmentation problem
using a \emph{spanner} approach (a $t$-spanner is a sparse subgraph that preserves all distances to within a factor of $t$). By building on this, and using the reverse augmentation framework of Goemans {\em et al.}~\cite{GoemansGPSTW94}, they achieved an $O(t\log k)$-approximation for EC-SNDP with maximum connectivity requirement $k$; the space usage is $\tilde{O}(k n^{1+1/t})$. They further showed that any $O(t)$-approximation for this problem requires at least $\tilde{\Omega}(kn + n^{1+1/t})$ space.\footnote{While they only mentioned $\tilde{\Omega}(n^{1+1/t})$, it is straightforward to show that in general, the number of edges in an feasible solution of an SNDP instance with maximum connectivity requirement $k$ can be as large as $nk$. Hence, $nk$ is a trivial lower bound too.} Although the space complexity of their algorithm nearly matches the lower bound, their approximation for EC-SNDP is worse by a $\log k$ factor. A natural question is the following.

\begin{question}\label{question:ec_sndp}
   Is there an $O(t)$-approximation for EC-SNDP using $\tilde{O}(n^{1+1/t} + kn)$-space?
\end{question}


\subsection{Results and Techniques}
We make significant progress towards our motivating questions and obtain a number of results. We refer the reader to a summary of our results and prior work on streaming network design problems, provided in Tables~\ref{tab:results-ec} and~\ref{tab:results-vertex}. Note that all results in the table assume only one pass over the stream. In all weighted graphs $G = (V,E)$, we assume that the weight function $w: E \rightarrow \{0,1,\dots,W\}$ where $W = n^{\polylog(n)}$. This, in particular, implies that $\log W = \mathrm{poly}(\log n)$ and we will not mention $\log W$ explicitly in the space complexity. While all of our algorithms have a $\log W$ dependence in their space complexity (in terms of words of space), designing streaming algorithms where the total number of stored edges is independent of $W$, as explored in~\cite{JKMV24}, is an interesting technical question in its own right. 
We also note that all our algorithms assume an exact algorithm at the end of the stream, as the focus of this paper is optimizing the tradeoff between space and approximation ratio, rather than focusing on runtime. We refer the reader to Section \ref{sec:polytime} and Remark \ref{rem:vc_link_polytime} for a discussion on how these algorithms can be made efficient. 

\paragraph{A framework for SNDP:} Our first contribution is a general and broadly applicable framework 
to obtain streaming algorithms with low space and good approximation ratios for SNDP; this is provided in Section \ref{sec:generic}. This framework applies to both edge and vertex connectivity, as well as an intermediate setting known as element-connectivity (formally defined in Section \ref{sec:prelim}). This framework addresses Question \ref{question:vc_sndp} and partially resolves Question \ref{question:ec_sndp} affirmatively.\footnote{The upper and lower bounds match for several interesting/practical values of $k$, e.g. $k = O(\polylog n)$ or $k = \Omega(n)$. However, there is still a small gap between the bounds in general.}
All the results below use $\tilde{O}(k^{1-1/t}n^{1+1/t})$-space for vertex-connectivity and element-connectivity problems, and $\tilde{O}(k^{1/2-1/(2t)}n^{1+1/t} + kn)$-space for edge-connectivity problems, where $k$ is the maximum connectivity requirement and $t$ is a parameter that allows for an approximation vs. space tradeoff. \footnote{When $t$ is an even integer, the space used is a factor of $k^{1/(2t)}$ more than when $t$ is odd. The results stated in this section assume that $t$ is an odd integer.}

\begin{itemize}
\item For EC-SNDP, the framework yields $8t$-approximation, improving upon the $O(t \log k)$-approximation from \cite{JKMV24}. For an $O(t)$-approximation, the space bound of our algorithm is off by only a factor of $k^{1/2 - 1/(2t)}$ from the known lower bound.
These bounds also hold for element-connectivity, providing the first such results for this variant. 
\item For VC-SNDP, the framework yields an $O(tk)$-approximation.
\item For VC-SNDP, the framework yields an $O(\beta t)$-approximation where $\beta$ is the integrality gap of the natural LP relaxation. 
Using this, we obtain improved algorithms for several important special cases including VC-SNDP when $k \le 2$, $k$-VCSS, and $k$-VC-CAP.  
\end{itemize}
\begin{remark}
  Extending the lower bound construction from~\cite{JKMV24} to the vertex connectivity setting, we show that approximating VC-SNDP within a factor better than $2t+1$ requires $\Omega(nk + n^{1+1/t})$ space, which points out that  our upper bounds are nearly tight. The formal statement of the lower bound is in Section~\ref{sec:vc-sndp-lb}. 
\end{remark}
\begin{remark}
  This framework also extends to \emph{non-uniform} network design models. By Menger's theorem (see Section \ref{sec:prelim}), the goal in SNDP is to construct graphs that maintain connectivity between terminal pairs despite the failure of \emph{any} $k-1$ edges/vertices. Non-uniform network design models scenarios in which only certain specified subsets of edges can fail. This model is relevant in settings where edges failures are correlated in some way. For brevity we omit a detailed discussion here, and instead refer the reader to the following works on some problems non-uniform network design for which our framework holds: Bulk-Robust Network Design \cite{AdjiashviliSZ15,Adjiashvili15}, Flexible Graph Connectivity \cite{Adjiashvili13,AdjiashviliHM22,AdjiashviliHMS20,BoydCHI22,BansalCGI22,bansal2023constant,ChekuriJ23,nutov2024_fgc,HJS24_fgc}, and Relative Survivable Network Design \cite{DinitzKK22,DinitzKKN23}. We note that while these problems have mostly been studied in edge-connectivity settings, our streaming framework can also handle vertex-connectivity versions.
\end{remark}

\paragraph{Algorithms for $k$-VC-CAP:}
Our second contribution is for $k$-VC-CAP in the link-arrival model; this is provided in Section \ref{sec:vc_link_arrival}. Note that this is a global connectivity problem
and does not include the $s$-$t$-shortest path problem as a special case; hence we hope to obtain better bounds
that avoid the overhead of using spanners. 
We obtain the following approximation ratios for $k$-VC-CAP; both algorithms use $\tilde O(n/\eps)$ space.
\begin{itemize}
    \item For $k = 1$, we obtain a $(3 + \eps)$-approximation algorithm. 
    \item For $k = 2$, we obtain a $(7 + \eps)$-approximation algorithm.
\end{itemize}
Following the earlier discussion, this implies that with $k$ passes of the stream, we obtain a constant-factor approximation in near-linear space for $k$-VCSS for $k \leq 3$.

\begin{table}[!ht]
\centering
{\renewcommand{\arraystretch}{1.5}%
\begin{tabular}{c|c|c|c}\toprule
{\bf Problem} & {\bf Approx.} & {\bf Space} & {\bf Note} \\ \midrule\midrule
\multirow{4}{*}{$k$-EC-CAP} & $2+\eps$ & $O(\frac{n}{\eps}\log n)$~\cite{JKMV24} & {link arrival} \\
& $2-\eps$ & $\Omega(n^2)$ bits~\cite{JKMV24} & {link arrival lower bound}\\
\cmidrule{2-4}
& \multirow{2}{*}{$O(t)$} & $\tilde{O}(n^{1+\frac{1}{t}} + kn)$~\cite{JKMV24}  & {fully streaming}\\
& & $\Omega(n^{1+\frac{1}{t}} + kn)$ bits~\cite{JKMV24} & {fully streaming lower bound}\\ 
\midrule\midrule

\multirow{3}{*}{EC-SNDP} & $O(t \log k)$ & $\tilde{O}(kn^{1+\frac{1}{t}})$~\cite{JKMV24} &  \\ 
& $8t$ & $\tilde{O}(k^{\frac{1}{2}-\frac{1}{2t}} n^{1+\frac{1}{t}} + kn)$ \textcolor{blue}{[Here]} &  \\ 
& $O(t)$ & $\Omega(n^{1+\frac{1}{t}} + kn)$ bits~\cite{JKMV24} & lower bound\\ 
\midrule
\multirow{2}{*}{$k$-ECSS} & $2$ & $O(kn)$~\cite{zelke2006k,cheriyan1993scan,nagamochi1992linear} & for unweighted graphs \\ 
\cmidrule{2-4}
&  $8t$ &$\tilde{O}(k^{\frac{1}{2}-\frac{1}{2t}} n^{1+\frac{1}{t}} + kn)$ \textcolor{blue}{[Here]} &  \\
& $O(t)$ & $\Omega(n^{1+\frac{1}{t}} + kn)$ bits~\cite{JKMV24} & lower bound\\ 
\bottomrule
\end{tabular}
}
\caption{Summary of results for edge-connectivity network design.}
\label{tab:results-ec}
\end{table}

\begin{table}[!ht]
\centering
{\renewcommand{\arraystretch}{1.5}%
\begin{tabular}{c|c|c|c}\toprule
{\bf Problem} & {\bf Approx.} & {\bf Space} & {\bf Note} \\ \midrule\midrule
\multirow{4}{*}{$k$-VC-CAP} & $O(t)$ & $\tilde{O}(k^{1 - \frac{1}{t}} n^{1+\frac{1}{t}})$ \textcolor{blue}{[Here]} & fully streaming ($n = \Omega(k^3)$) 
\\
& $O(t)$ & $\Omega(n^{1+\frac{1}{t}})$ bits \textcolor{blue}{[Here]}& fully streaming lower bound \\ 
& $3 + \eps$ & $\tilde O(n/\eps)$ \textcolor{blue}{[Here]} & $k = 1$ (link arrival) \\
& $7 + \eps$ & $\tilde O(n/\eps)$ \textcolor{blue}{[Here]} & $k = 2$ (link arrival) \\
\midrule\midrule
\multirow{3}{*}{VC-SNDP} & $2tk$ & $\tilde{O}(k^{1 - \frac{1}{t}} n^{1+\frac{1}{t}})$ \textcolor{blue}{[Here]} &  \\ 
& $8t$ & $\tilde{O}(n^{1+\frac{1}{t}})$ \textcolor{blue}{[Here]}& $\{0,1,2\}$-VC-SNDP \\
& $O(t)$ & $\Omega(n^{1+\frac{1}{t}} + kn)$ bits \textcolor{blue}{[Here]} & lower bound\\ 
\midrule
\multirow{4}{*}{$k$-VCSS} & $2$ & $O(nk)$~\cite{zelke2006k,cheriyan1993scan,nagamochi1992linear} & for unweighted graphs \\
\cmidrule{2-4}
& $O(t)$ & $\tilde{O}(k^{1-\frac{1}{t}} n^{1+\frac{1}{t}})$ \textcolor{blue}{[Here]} & $n = \Omega(k^3)$
\\ 
& $O(t)$ & $\Omega(n^{1+\frac{1}{t}} + kn)$ bits \textcolor{blue}{[Here]} & lower bound \\ 
\bottomrule
\end{tabular}
}
\caption{Summary of our results for vertex-connectivity network design.}
\label{tab:results-vertex}
\end{table}

\paragraph{Techniques:} 
Our general framework for SNDP and related problems is based on a connection to \emph{fault-tolerant spanners} which were first introduced in~\cite{levcopoulos1998efficient} in geometric settings and then extensively studied in the graph setting. These objects generalize the notion of spanners to allow faults and are surprisingly powerful. Recent work has shown that, similar to ordinary spanners, optimal vertex fault-tolerant spanners and near-optimal edge fault-tolerant spanners can be constructed using a simple greedy algorithm, which enables their use in the streaming setting~\cite{bodwin2019trivial,bodwin2018optimal}.
Using these spanners alone suffices to obtain an $O(tk)$-approximation assuming an exact algorithm at the end of the stream. Obtaining our refined approximation bounds requires additional insight: we combine the use of fault-tolerant spanners with an analysis via natural LP relaxations for SNDP. This improved analysis has a key benefit. It allows us to directly improve the factor $O(tk)$ for problems with corresponding integrality gap better than $k$---this applies to several of the problems we consider, such as EC-SNDP, ELC-SNDP, $k$-VCSS and $\{0,1,2\}$-VC-SNDP. 

For $1$-VC-CAP, we aim to augment a spanning tree to a $2$-vertex-connected graph. To do so, we borrow ideas from the edge-connectivity setting: we root the tree arbitrarily, and for each node $u \in V$, we store the link that ``covers'' the most edges in the path from $u$ to the root. This does not quite suffice in the vertex-connectivity setting, thus some additional care is required. Our main contribution is an algorithm for $2$-VC-CAP. Here we need to augment a $2$-vertex-connected graph to a $3$-vertex-connected graph. In edge-connectivity, this reduces to cactus-augmentation (which can essentially be reduced to augmenting a cycle). However this does not hold for the vertex-connectivity setting. Instead, we use the SPQR-representation of a $2$-vertex-connected graph: this is a compact tree-like data structure that captures all the 2-node cuts of a graph and was introduced by Di Battista and Tamassia \cite{spqr-tree} that dynamically maintains the triconnected components (and thus all $2$-node cuts) of a graph. It was initially developed in the context of planar graphs~\cite{di1989incremental,di1996line}. We use this tree-like structure to combine ideas from $1$-VC-CAP with ideas from cactus augmentation in the edge-connectivity setting to obtain our result. The algorithm is tailored to the particulars of the SPQR representation and is technically quite involved. We refer the reader to
Figure~\ref{fig:vc_2_to_3_SPQR} for an example of a $2$-connected graph and its SPQR tree.

\subsection{Related Work}
\label{sec:rel-work}

\paragraph{Offline Network Design:} 
There is substantial literature on network design; here we briefly discuss some relevant literature.
EC-SNDP admits a $2$-approximation via the seminal work on iterated rounding by Jain \cite{jain2001factor} and this has been extended to ELC-SNDP \cite{fleischer2006iterative,CheriyanVV06}. We note that the best known approximation for Steiner Forest, the special case with connectivity requirements
in $\{0,1\}$, is $2$. Steiner tree is another important special case, and for this there is a $(\ln 4 +\eps)$-approximation \cite{byrka2013steiner}. 
Even for $2$-ECSS the best known approximation is $2$, although better bounds are known for the unweighted case.
Recently there has been exciting progress on $k$-EC-CAP, starting with progress on the special case of weighted TAP (tree augmentation problem which 
corresponds to $k=1$). For all $k$, $k$-EC-CAP admits a ($1.5 +\eps)$-approximation \cite{TraubZ23}---see also~\cite{traub2022better,traub2022local,byrka2020breaching,cecchetto2021bridging,garg2023improved}. 
In contrast to the constant factor approximation results for edge-connectivity, the best known approximation for VC-SNDP is $O(k^3 \log n)$ due to Chuzhoy and Khanna \cite{chuzhoy2009k}. Moreover, even for the single-source setting it is known that the approximation ratio needs to depend on $k$ for sufficiently large $k$~\cite{chakraborty2008network}. The single-source problem admits an $O(k^2)$-approximation \cite{Nutov12,nutov-erratum} and subset $k$-connectivity admits
an $O(k\log^2 k)$-approximation \cite{laekhanukit2015improved}.
Several special cases have better approximation bounds. When $k \le 2$, VC-SNDP admits a $2$-approximation \cite{fleischer2006iterative,CheriyanVV06} which also implies
that $k$-VCSS for $k=2$ and $k$-VC-CAP for $k=1$ admit a  $2$-approximation.
For $k$-VCSS a $(4+\eps)$-approximation is known
when $n$ is large compared to $k$ \cite{Nutov22,CheriyanV14}. For smaller value of $k \le 6$
improved bounds are known---see \cite{nutov2018improved}. These are also the best known bounds for $k$-VC-CAP when $n$ is large compared to $k$.
For large $k$, $k$-VC-CAP and $k$-VCSS admit an $O(\log \frac{n}{n-k})$ and $O(\log k \log \frac{n}{n-k})$-approximation respectively with more precise bounds known in various regimes~\cite{nutov2018improved}.
We note that many of the approximation results (but not all) are with respect to a natural cut-cover LP relaxation. This is important to
our analysis since some of our results are based in exploiting the integrality gap of this LP relaxation. Many improved results are known
for special cases of graphs including unweighted graphs, planar and graphs from proper minor-closed families, and graphs arising from geometric instances. We focus on general graphs in this work.

\paragraph{Streaming Graph Algorithms:} 
Graph problems in the streaming model have been studied extensively, particularly in the context of compression methods that reduce the graph size and preserve connectivity within a factor of $t$ (i.e., {\em spanners})~\cite{feigenbaum2005graph,Baswana08,Elkin11} or approximate cuts within a $(1+\eps)$ factor (i.e., {\em cut sparsifiers} and related problems)~\cite{ahn2009graph,kelner2013spectral,KapralovLMMS17,KapralovMMMNST20}. These problems have also been widely explored in the {\em dynamic} streaming model, where edges are both inserted and deleted. While graph sketching approaches have sufficed to yield near-optimal algorithms for sparsifiers~\cite{ahn2012graph}, the state-of-the-art for spanners in dynamic streams remained multi-pass algorithms until recently~\cite{ahn2012graph,kapralov2014streaming}. Filtser {\em et al.}~\cite{filtser2021graph} developed the first single-pass algorithm for $\tilde{O}(n^{2/3})$-spanners using $\tilde{O}(n)$ space in dynamic streams. Though this result has a large approximation factor, Filtser {\em et al.} conjectured that it may represent an optimal trade-off.    

Another line of research related to our work is testing connectivity in graph algorithms. This problem has been studied for both edge-connectivity and vertex-connectivity~\cite{feigenbaum2005graph,zelke2006k,sun2015tight}, as well as in dynamic settings~\cite{ahn2012graph,CrouchMS13,guha2015vertex,assadi2023tight}. Specifically, $k$-connectivity for both edge- and vertex-connectivity can be tested using $\tilde{O}(nk)$ space even in dynamic streams~\cite{ahn2012graph,assadi2023tight}. 

Finally, for the problem of $k$-ECSS,~\cite{JKMV24} designed a $k$-pass algorithm within the augmentation framework~\cite{GoemansGPSTW94} that finds an $O(\log k)$-approximate solution using $\tilde{O}(nk)$ space.

\section{Preliminaries}
\label{sec:prelim}
In this section, we provide preliminary background on connectivity. 
In a graph $G$, two vertices $s$ and $t$ are $k$-edge ($k$-vertex) connected if $G$ contains $k$ edge-disjoint (vertex-disjoint) $st$-paths.
There is a close relation between the maximum connectivity of a pair of vertices $s, t$, and the minimum cut separating $s$ and $t$, as characterized by Menger's theorem. 
Let $S\subset V(G)$ be a subset of vertices in $G$. We denote the set of edges crossing $S$ by $\delta_G(S)$; $\delta_G(S)=\{uv\in E(G)\;|\; u\in S, v\in V\setminus S\}$. We drop $G$ when it is implicit from the context.
Menger's theorem is a key result in problems involving connectivity requirements. Its formulation in terms of edge connectivity is as follows:
\begin{theorem}[Edge-connectivity Menger's theorem]\label{thm:ec-menger}
Let $G=(V, E)$ be an undirected graph. Two vertices $s,t \in V$ are $k$-edge connected iff for each set $S\subset V$ such that $s\in S$ and $t\in V\setminus S$, $|{\delta(S)}|\geq k$.
\end{theorem}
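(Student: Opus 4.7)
The plan is to prove the equivalence by handling the two directions separately. The forward (``only if'') direction is straightforward: assuming $s$ and $t$ are $k$-edge connected, fix any $S \subset V$ with $s \in S$ and $t \in V \setminus S$, and let $P_1, \dots, P_k$ be $k$ pairwise edge-disjoint $st$-paths. Each $P_i$ starts in $S$ and ends outside $S$, so it must traverse at least one edge of $\delta(S)$; edge-disjointness of the paths guarantees these $k$ edges are distinct, yielding $|\delta(S)| \geq k$.

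For the reverse (``if'') direction I would reduce to the integral max-flow min-cut theorem in a directed unit-capacity network. Construct $G'$ from $G$ by replacing each undirected edge $uv$ with two antiparallel arcs $(u,v)$ and $(v,u)$, each of capacity one. For any set $S$ separating $s$ from $t$, the $st$-cut in $G'$ induced by $S$ has value exactly $|\delta_G(S)|$, since only the arcs oriented from $S$ to $V \setminus S$ count toward a directed cut, and these are in bijection with the undirected edges of $\delta_G(S)$. Hence the hypothesis $|\delta_G(S)| \geq k$ for all such $S$ implies that the minimum $st$-cut in $G'$ has value at least $k$, and the integral max-flow min-cut theorem then yields an integral $st$-flow $f$ of value $k$ in $G'$.

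The final step, and the only place where I expect care to be required, is extracting $k$ pairwise edge-disjoint undirected $st$-paths from $f$. The subtlety is that $f$ might send a unit of flow along both arcs of a pair $\{(u,v),(v,u)\}$ corresponding to a single undirected edge $uv$, in which case a naive path decomposition would reuse that edge. I would handle this by iteratively cancelling such opposing flow: subtracting one unit from each of $(u,v)$ and $(v,u)$ preserves flow conservation at every vertex and does not change the total value of $f$. After exhaustive cancellation, $f$ uses each undirected edge in at most one direction, so standard decomposition of the resulting integral flow into $st$-paths and cycles (discarding the cycles) produces $k$ arc-disjoint directed $st$-paths whose underlying undirected edge sets are pairwise disjoint, yielding $k$ edge-disjoint $st$-paths in $G$. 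The flow-cancellation step is the main conceptual hurdle; once that is done, the rest of the argument is entirely standard.
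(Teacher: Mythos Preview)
Your proof is correct and follows a standard route via max-flow min-cut; the flow-cancellation step is handled properly. However, the paper does not actually prove this statement: Theorem~\ref{thm:ec-menger} is stated in the preliminaries as a classical result (Menger's theorem) without proof, serving only as background for the later analysis. So there is nothing to compare against---you have supplied a valid proof where the paper simply cites the result.
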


\noindent
To state Menger's theorem for vertex-connectivity requirements, we first introduce some notation known as a {\em biset}, following prior work on vertex connectivity~\cite{Nutov12,CheriyanVV06,fleischer2006iterative}. A biset $\hat{X} = (X, X^+)$ is a pair of sets where $X \subseteq X^+ \subseteq V$. 
We say that an edge crosses a biset $\hat{S}$ if one of its endpoints is in $S$ and the other is in $V \setminus S^+$. We define $\delta(\hat{S}) = \{uv \in E(G) \mid u \in S, v \in V \setminus S^+\}$.

\begin{theorem}[Vertex-connectivity Menger's theorem]\label{thm:vc-menger}
Let $G=(V,E)$ be an undirected graph. Two vertices $s, t \in V$ are $k$-vertex connected iff for each biset $\hat{S}\subset V\times V$ such that $s\in S$ and $t\in V\setminus S^+$, $|\delta(\hat{S})| + |S^+\setminus S|\geq k$.
\end{theorem}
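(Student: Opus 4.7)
I would prove the two directions of the ``iff'' separately. The forward direction ($k$-vertex connected implies every biset cut has weight at least $k$) admits a direct charging argument. The reverse direction reduces to edge-connectivity Menger (Theorem~\ref{thm:ec-menger}) applied to an auxiliary directed graph built by the standard vertex-splitting gadget, together with max-flow min-cut.

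For the forward direction, suppose $s, t$ are $k$-vertex connected, fix a biset $\hat{S}=(S,S^+)$ with $s\in S$ and $t\in V\setminus S^+$, and take $k$ internally vertex-disjoint $s$-$t$ paths $P_1,\ldots,P_k$. Walking each $P_i$ from $s$, let $v_i$ be the first vertex on $P_i$ that is not in $S$, and let $e_i$ be the edge of $P_i$ entering $v_i$. If $v_i\in S^+\setminus S$, I charge $P_i$ to the boundary vertex $v_i$; otherwise $v_i\in V\setminus S^+$, $e_i\in\delta(\hat{S})$, and I charge $P_i$ to the edge $e_i$. The condition $v_i\in S^+$ forces $v_i\neq t$, so boundary-charged $v_i$'s are internal to their respective paths and hence pairwise distinct by internal vertex-disjointness. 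Pairwise edge-disjointness of internally vertex-disjoint simple paths (the only potential shared edge is the direct edge $st$, usable by at most one such path) then makes the edge-charged $e_i$'s distinct as well, giving $|\delta(\hat{S})|+|S^+\setminus S|\geq k$.

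For the reverse direction, I construct a directed graph $D$ by splitting each $v\in V\setminus\{s,t\}$ into two nodes $v^{\text{in}},v^{\text{out}}$ joined by a capacity-$1$ arc, and replacing each $\{u,v\}\in E$ with the capacity-$1$ arcs $u^{\text{out}}\to v^{\text{in}}$ and $v^{\text{out}}\to u^{\text{in}}$, leaving $s,t$ un-split. Two correspondences drive the proof. First, an integral $s$-$t$ flow of value $f$ in $D$ decomposes into $f$ arc-disjoint paths that, because of the unit-capacity splitting arcs, project to internally vertex-disjoint $s$-$t$ paths in $G$; conversely, any such family in $G$ lifts to an equal-value flow. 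Second, for any edge cut $C$ of $D$, letting $R$ be the set of nodes reachable from $s$ in $D\setminus C$ and setting $S=\{v\in V:v^{\text{out}}\in R\}\cup\{s\}$ and $S^+=\{v\in V:v^{\text{in}}\in R\}\cup\{s\}$, the splitting arcs in $C$ are in bijection with $S^+\setminus S$ and the original-edge arcs in $C$ with $\delta(\hat{S})$, so $|C|=|\delta(\hat{S})|+|S^+\setminus S|$; conversely, every biset $\hat{S}$ yields an edge cut in $D$ of the same size. Applying max-flow min-cut to $D$ (equivalently, the directed analog of Theorem~\ref{thm:ec-menger}) finishes the proof.

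The main technical care lies in the bookkeeping in the gadget: verifying that the correspondences are exact at the boundary cases $v\in\{s,t\}$, handling the direct edge $st$ if present (it appears as a length-one $s$-$t$ arc in $D$ and always lies in $\delta(\hat{S})$), and tracking both orientations of each undirected edge. I do not anticipate any deeper obstacle beyond this careful casework.
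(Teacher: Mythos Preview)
The paper does not prove this theorem; it is stated without proof in the preliminaries as a classical background result (the biset formulation of Menger's theorem for vertex connectivity). Your proposal is a correct and standard proof of this classical fact: the forward direction via charging each of the $k$ internally vertex-disjoint paths to either a boundary vertex in $S^+\setminus S$ or a crossing edge in $\delta(\hat{S})$ is the usual argument, and the reverse direction via the vertex-splitting gadget and max-flow min-cut is the textbook reduction. One small point worth tightening in the reverse direction: when you write that for an arbitrary $s$-$t$ edge cut $C$ the arcs of $C$ are ``in bijection'' with $S^+\setminus S$ and $\delta(\hat{S})$, you really only get $|C|\ge |\delta(\hat{S})|+|S^+\setminus S|$ in general (since $C$ may contain arcs not leaving $R$); equality holds when $C$ is the canonical cut consisting exactly of the arcs leaving $R$, which is all you need for the min-cut direction. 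Also, to confirm that your $(S,S^+)$ is a biset you should observe that reaching $v^{\text{out}}$ in $D$ forces passing through $v^{\text{in}}$, so $v^{\text{out}}\in R$ implies $v^{\text{in}}\in R$ and hence $S\subseteq S^+$. These are exactly the kind of bookkeeping details you already flagged, and none of them is a genuine obstacle.
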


An intermediate connectivity notion between edge-connectivity and vertex-connectivity, proposed by Jain {\em et al.}~\cite{jain2002primal}, is {\em element-connectivity}, in which the input set of vertices $V$ is divided into two types: {\em reliable} ($R$) and {\em non-reliable} ($V\setminus R$). Elements denote the set of edges $E$ and the set of non-reliable vertices $V\setminus R$.
For a pair of vertices in $u,v\in R$, a set of $uv$-paths are \emph{element-disjoint} if they are disjoint on elements (i.e., edges and non-reliable vertices). Note that the main distinction between element and vertex connectivity is that element-disjoint paths are not necessarily disjoint in the reliable vertices, and the requirements are only on reliable vertices.

Menger's theorem for element-connectivity can be stated as follows:
\begin{theorem}[Element-connectivity Menger's theorem]\label{thm:elc-menger}
Let $G=(V,E)$ be an undirected graph, whose vertices are partitioned into reliable ($R$) and non-reliable ($V\setminus R$). Two vertices $s, t \in R$ are $k$-element connected iff for each biset $\hat{S}\subset V\times V$ such that $s\in S$, $t\in V\setminus S^+$, and $S^+\setminus S$ containing only {\em non-reliable} vertices (i.e., $S^+\setminus S \subseteq V\setminus R$), we have $|\delta(\hat{S})| + |S^+\setminus S|\geq k$.
\end{theorem}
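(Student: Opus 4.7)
The plan is to derive Theorem 2.3 as a corollary of the classical edge-Menger theorem (Theorem 2.1) via a standard vertex-splitting reduction. I would construct an auxiliary multigraph $G'$ from $G$ by keeping every reliable vertex intact and replacing each non-reliable vertex $v\in V\setminus R$ by two copies $v^-, v^+$ connected by a single ``internal'' edge $e_v$, routing every original edge incident to $v$ to the copy $v^-$. Any $s$-$t$ walk in $G$ traversing a non-reliable vertex $v$ corresponds to a walk in $G'$ that uses the unique edge $e_v$; this yields a natural bijection between element-disjoint $s$-$t$ paths in $G$ and edge-disjoint $s$-$t$ paths in $G'$. Consequently, $s$ and $t$ are $k$-element connected in $G$ if and only if they are $k$-edge connected in $G'$, and it then suffices to translate the edge-Menger characterization of $G'$ back into the biset language of $G$.

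For the forward direction, I would fix any biset $\hat{S}=(S,S^+)$ meeting the theorem's hypotheses and build a vertex set $T\subseteq V(G')$ whose edge boundary in $G'$ has size exactly $|\delta_G(\hat{S})|+|S^+\setminus S|$: take $T$ to be $S$ together with the ``near'' copy $v^-$ of every boundary vertex $v\in S^+\setminus S$. The edges in $\delta_G(\hat{S})$ cross $T$ as original edges in $G'$, and each $v\in S^+\setminus S$ contributes its internal edge $e_v$. Applying Theorem 2.1 to $G'$ then yields $|\delta_{G'}(T)|\ge k$, which is the desired cut inequality.

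For the reverse direction, assume every qualifying biset satisfies the inequality, and take a minimum $s$-$t$ edge-cut $F$ of $G'$ with $s$-side $T$. I would project back to $G$ by setting $S:=\{v\in V:\textrm{every copy of }v\textrm{ lies in }T\}$ and $S^+:=\{v\in V:\textrm{some copy of }v\textrm{ lies in }T\}$. Because reliable vertices are never split, only non-reliable vertices can have their two copies separated by $T$, so $(S,S^+)$ satisfies the structural constraint on $S^+\setminus S$ demanded by the theorem. A short case analysis then shows that $F$ partitions into the internal edges $e_v$ for $v\in S^+\setminus S$ and the original edges in $\delta_G(\hat{S})$, so $|F|=|\delta_G(\hat{S})|+|S^+\setminus S|\ge k$ by hypothesis. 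Theorem 2.1 on $G'$ then produces $k$ edge-disjoint $s$-$t$ paths, which project back to $k$ element-disjoint paths in $G$.

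The main delicate step is verifying that the two maps between bisets of $G$ and $s$-$t$ edge-cuts in $G'$ are cost-preserving and produce objects of the correct type; in particular, one has to check that a minimum edge-cut of $G'$ never separates the two copies of a vertex in a wasteful way that would inflate $|S^+\setminus S|$ beyond the corresponding element-cut size. Once the vertex-splitting gadget is set up cleanly, this and the remaining verifications reduce to routine bookkeeping together with a direct appeal to edge-Menger on the auxiliary graph.
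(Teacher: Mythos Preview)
The paper does not supply a proof of this theorem; it is stated in the preliminaries as a known result, so there is no paper proof to compare against. Your overall strategy---reduce element-connectivity to edge-connectivity via a vertex-splitting gadget and invoke Theorem~2.1---is the standard route and is sound in principle. However, the specific gadget you describe does not work in an undirected graph.

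You say that every original edge incident to a non-reliable vertex $v$ is routed to the single copy $v^-$, with $v^+$ attached only by the internal edge $e_v$. In that construction $v^+$ is a pendant: an $s$--$t$ walk in $G$ passing through $v$, say $u\text{--}v\text{--}w$, becomes $u\text{--}v^-\text{--}w$ in $G'$ and never touches $e_v$. Hence two paths in $G$ that share the non-reliable vertex $v$ (and are therefore \emph{not} element-disjoint) map to edge-disjoint paths in $G'$, and your claimed bijection fails. The cut side breaks correspondingly: a minimum $s$--$t$ edge cut in $G'$ will never include $e_v$ (removing it only isolates the pendant $v^+$), so the reverse translation can never place a non-reliable vertex into $S^+\setminus S$. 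The usual fix is to pass through a directed auxiliary graph---bidirect every edge of $G$, split each non-reliable $v$ into $v^{\mathrm{in}}\!\to v^{\mathrm{out}}$ with incoming arcs entering $v^{\mathrm{in}}$ and outgoing arcs leaving $v^{\mathrm{out}}$, and apply the directed edge-Menger theorem---or to argue directly via max-flow/min-cut with unit capacities on edges and on non-reliable vertices.

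One further remark: your reverse-direction argument (correctly) yields $S^+\setminus S\subseteq V\setminus R$, since only non-reliable vertices are split, whereas the printed statement requires $S^+\setminus S\subseteq R$. This appears to be a typo in the paper's statement; the natural element-cut formulation places non-reliable vertices in the boundary, which is what you are implicitly proving.
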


\subsection{Fault-Tolerant Spanners in Streaming}\label{sec:FT-spanner-streaming}

\begin{definition}[Fault-Tolerant Spanners]
A subgraph $H \subseteq G$ is an {\em $f$-vertex-fault-tolerant ($f$-VFT) $t$-spanner} of $G$ if, for every subset of vertices $F_V \subseteq V$ of size at most $f$, all pairwise distances in $V\setminus F_V$ are preserved within a factor of $t$. That is, for all $u, v \in V\setminus F_V$,
\begin{align*}
    d_{H\setminus F_V}(v, u) \le t \cdot d_{G\setminus F_V}(v, u),
\end{align*}
where $G \setminus F_V$ and $H \setminus F_V$ denote the induced subgraphs $G[V \setminus F_V]$ and $H[V \setminus F_V]$, respectively.

Similarly, a subgraph $H \subseteq G$ is an {\em $f$-edge-fault-tolerant ($f$-EFT) $t$-spanner} of $G$ if, for every subset of edges $F_E \subseteq E$ of size at most $f$, all pairwise distances in $V$ are preserved within a factor of $t$. That is, for all $u, v \in V$,
\begin{align*}
    d_{H\setminus F_E}(v, u) \le t \cdot d_{G\setminus F_E}(v, u),
\end{align*}
where $G \setminus F_E$ and $H \setminus F_E$ denote the induced subgraphs $G[E \setminus F_E]$ and $H[E \setminus F_E]$, respectively.
\end{definition}

A natural algorithm for constructing fault-tolerant spanners, which is a straightforward adaptation of the standard greedy algorithm of~\cite{althofer1993sparse} for spanners, is to process the edges in an arbitrary order and add an edge $(u,v)$ in the spanner if there exists a set of vertices of size at most $f$ such that their removal increases the distance of $u,v$ in the so-far-constructed spanner to at least $t+1$ (refer to Algorithm~\ref{alg:FT-spanner}). 

\begin{algorithm}
    \KwIn{Unweighted graph $G = (V, E)$, a fault-tolerance parameter $f$, and a stretch parameter $t$.}
    \caption{The greedy algorithm for unweighted VFT (EFT) spanners.}
    \label{alg:FT-spanner}
    {\bf initialize} $H \leftarrow (V, \emptyset)$
    
    \For{$(u,v) \in E$ in an arbitrary order}{
        \If{there exists a set of vertices (or edges) $F$ of size $f$ such that $d_{H\setminus F}(u,v) > t$}{
            {\bf add} $(u,v)$ to $H$    
        }
    }
    \Return $H$
\end{algorithm}

\begin{theorem}[\cite{bodwin2019trivial}]\label{thm:vft-spanner} 
    For any $n$-vertex graph $G$, the greedy algorithm for the $f$-VFT $(2t-1)$-spanner (i.e., Algorithm~\ref{alg:FT-spanner}) returns a feasible subgraph $H$ of size $|E(H)| = O(f^{1-1/t} \cdot n^{1 + 1/t})$. 
\end{theorem}
This bound on spanner size is tight and fully matches the known lower bound for VFT~\cite{bodwin2018optimal}. 
\begin{theorem}[\cite{bodwin2022partially}]\label{thm:eft-spanner} 
    For any $n$-vertex graph $G$, the greedy algorithm for the $f$-EFT $(2t-1)$-spanner (i.e., Algorithm~\ref{alg:FT-spanner}) returns a feasible subgraph $H$ of size 
    \begin{align*}
    |E(H)| =
    \begin{cases}
        O\left(t^2 f^{1/2 - 1/(2t)} n^{1 + 1/t} + tfn \right) & \text{$t$ is odd} \\
        O\left(t^2 f^{1/2} n^{1 + 1/t} + tfn \right) & \text{$t$ is even}.
        \end{cases}
    \end{align*}
\end{theorem}

The bound on the spanner size is tight for constant odd $t$,
and is off from the best known lower bound~\cite{bodwin2018optimal} by only quadratic factors of $t$ for non-constant odd
$t$, and only $t^2 f^{1/(2t)}$ for even $t$.

It is straightforward to show that for unweighted graphs, the greedy algorithm for spanners can be implemented in insertion-only streams with a space complexity equal to the size of $H$. Moreover, for the remainder of the paper, we assume that $t = O(\log n)$, as setting it any larger does not yield asymptotic improvements in the space complexity of the spanner.

\paragraph{Weighted graphs.} Although running the greedy algorithm 
with edges in the increasing order of edge weights provides the same guarantee on size and stretch for fault-tolerant spanners (cf.~\cite{bodwin2019trivial}), the algorithm is no longer implementable in the streaming setting. 
A standard technique to address this issue is {\em bucketing}. Given a weighted graph $G = (V, E)$ with weight function $w: E \rightarrow \{0, 1, \cdots, W\}$, we partition the edges of $G$ into $O(\epsilon^{-1} \log W)$ buckets, where the $i$th bucket contains edges with weights in the range $B_ i\coloneqq\left[(1+\epsilon)^{i-1}, (1+\epsilon)^i\right)$, for $i\ge 1$. Furthermore, we use bucket $B_{0}$ for zero-weight edges. 
Then, we construct a fault-tolerant spanner in each bucket (treating it as an unweighted graph) using the greedy algorithm for unweighted $f$-FT spanners. 

\begin{algorithm}
    \KwIn{Weighted graph $(G = (V, E), w)$, a stretch parameter $t$, and a fault tolerance parameter $f$.}
    \caption{The streaming greedy algorithm for VFT (or EFT) spanners on weighted graphs.}
    \label{alg:FT-spanner-weighted}
    \For{$i=1$ to $T = O(\eps^{-1} \log W)$}{
        {\bf initialize} $H_i \leftarrow (V, \emptyset)$
    }
       
    \For{$(u,v) \in E$ in the stream}{
        \tcc{let $H_j$ be the FT spanner corresponding to the weight class of $(u,v)$; i.e. $w(u,v) \in B_j$}
        
        \If{there exists a set of vertices (or edges) $F$ of size $f$ such that $d_{H_j\setminus F}(u,v) > 2t-1$}{
            {\bf add} $(u,v)$ to $H_j$    
        }
    }
    \Return $H \coloneqq H_1 \cup \cdots \cup H_T$
\end{algorithm}

\begin{theorem}[Streaming Weighted VFT Spanners]\label{thm:weighted-VFT-spanner-stream} 
    There exists a single-pass streaming algorithm (i.e., Algorithm~\ref{alg:FT-spanner-weighted}) that uses $O(f^{1-1/t} \cdot n^{1+1/t} \cdot \eps^{-1} \cdot \log W)$ words of space and returns an $f$-VFT  $\big((1+\eps)(2t-1)\big)$-spanner of an $n$-vertex weighted graph $G=(V,E)$ with the weight function $w:E\rightarrow \{0, 1, \cdots, W\}$, of size $O(f^{1-1/t} \cdot n^{1+1/t}\cdot \eps^{-1} \cdot \log W)$. 

    In particular, if $W = n^{O(\log n)}$ by setting $\eps = 1/(2t-1)$, the algorithm uses $\tilde{O}(f^{1-1/t} \cdot n^{1+1/t})$ words of space and returns an $f$-VFT $(2t)$-spanner of size $\tilde{O}(f^{1-1/t} \cdot n^{1+1/t})$. 
\end{theorem}

\begin{theorem}[Streaming Weighted EFT Spanners]\label{thm:weighted-EFT-spanner-stream} 
    There exists a single-pass streaming algorithm (i.e., Algorithm~\ref{alg:FT-spanner-weighted}) that uses $O\big( (f^{1/2-1/(2t)} \cdot n^{1+1/t} + fn) \cdot \eps^{-1} \cdot \log W\big)$ words of space and returns an $f$-EFT $\big((1+\eps)(2t-1)\big)$-spanner of an $n$-vertex weighted graph $G=(V,E)$ with the weight function $w:E\rightarrow \{0, 1, \cdots, W\}$, of size $O\big( (f^{1/2-1/(2t)} \cdot n^{1+1/t} + fn) \cdot \eps^{-1} \cdot \log W\big)$, when $t$ is an odd integer. 

    In particular, if $W = n^{O(\log n)}$ by setting $\eps = 1/(2t-1)$, the algorithm uses $\tilde{O}(f^{1/2-1/(2t)} \cdot n^{1+1/t} + fn)$ words of space and returns an $f$-EFT $(2t)$-spanner of size $\tilde{O}(f^{1/2-1/(2t)} \cdot n^{1+1/t} + fn)$. 
\end{theorem}
\section{Generic Framework for Streaming Algorithms for Network Design}\label{sec:generic}
In this section, we describe our generic framework for streaming algorithms for network design problems. The algorithm, described in Algorithm~\ref{alg:generic-stream-network-design}, is both simple and practical.
\begin{algorithm}
    \KwIn{Stream of weighted edges $\big(e, w(e)\big)$, network design problem $\mathcal{M}$ with maximum connectivity requirement $k$, approximation parameter $t$, and ``offline'' algorithm $\mathcal{A}$ for $\mathcal{M}$.}
    \caption{A Generic framework for solving network design problems in streaming setting.}
    \label{alg:generic-stream-network-design}
    
    {\large \textbf{\tcc{\sc In Stream:}}}
    {\bf construct} an $O(kt)$-FT $(2t-1)$-spanner $H$ of the edges using Algorithm~\ref{alg:FT-spanner-weighted} with $\eps = \frac{1}{2t-1}$.

   
    {\large \textbf{\tcc{{\sc Postprocessing} {\normalsize (\emph{after stream terminates})}:}}}
    \Return the solution $\sol$ of $\mathcal{M}$ on $H$ output by the algorithm $\mathcal{A}$  
\end{algorithm}

As the algorithm only constructs and stores a fault-tolerant spanner with the given parameters $f = O(tk)$ and $t$ in the stream, its space complexity follows directly from Theorems~\ref{thm:weighted-VFT-spanner-stream} and~\ref{thm:weighted-EFT-spanner-stream}. Specifically,
\begin{itemize}
    \item In VC-SNDP, the space complexity is $\tilde{O}\big(f^{1-1/t}\cdot n^{1+1/t}\big) = \tilde{O}\big(k^{1-1/t}\cdot n^{1+1/t}\big)$.
    \item In EC-SNDP, the space complexity is $\tilde{O}\big(f^{1/2-1/(2t)}\cdot n^{1+1/t} + fn\big) = \tilde{O}\big(k^{1/2-1/(2t)}\cdot n^{1+1/t} + kn\big)$.
\end{itemize}
However, the approximation \emph{analysis} of the algorithms is technical. 

The rest of this section is dedicated to analyzing the approximation performance of Algorithm~\ref{alg:generic-stream-network-design}, and we will mainly focus on vertex-connectivity requirements. For the majority of this section, we assume the offline algorithm $\mathcal A$ used at the end of the stream is an exhaustive search: we enumerate all subgraphs of $H$ and check feasibility for each, returning the solution of minimum cost. Note that this is possible since feasibility of SNDP (for both edge and vertex connectivity) can be checked in linear space by checking connectivity via augmenting path based max-flow routines. In Section \ref{sec:polytime} we discuss some alternate choices of algorithm $\mathcal A$ and the resulting tradeoffs of runtime, space, and approximation ratio. 

Before analyzing our framework, we present an observation on the structure of VFT spanners (or EFT spanners) 
that is used in our analysis for all variants. 

\begin{observation}\label{obs:vft-spanner-structure}
    Given a weighted graph $G = (V,E)$, let $H$ denote the VFT spanner of $G$ constructed by Algorithm~\ref{alg:FT-spanner-weighted} with parameters $(t, f, \eps)$. 
    If $e \in E$ with $w(e) \in B_j$ does not belong to $H_j$, i.e., $e \notin H_j$, then $H_j$ contains at least $L = \lfloor f/(t-1) \rfloor + 1$ vertex-disjoint $uv$-paths, each containing at most $t$ edges that all have weights in $B_j$. 
\end{observation}
\begin{proof}
To find the vertex-disjoint paths $P_1, \dots, P_L$, we work with the fault-tolerant spanner $H_j$ corresponding to $B_j$. We perform $L$ iterations, and in each iteration $i$, we find a $uv$-path $P_i$ of length at most $t$ that is vertex-disjoint from the previously constructed $uv$-paths $P_1, \dots, P_{i-1}$. To do this, we define the set of vertices $F_i = \big(P_1 \cup \dots \cup P_{i-1}\big)\setminus \{u,v\}$ and find a $uv$-path in $H_j \setminus F_i$. Initially, set $F_1 = \emptyset$. Since $|F_i| \le (t-1)(i-1) \le (t-1)(L-1) \le f$, by the properties of the $f$-VFT $t$-spanner $H_j$, there exists a $uv$-path $P_i \subseteq H_j \setminus F_i$, which is a path containing at most $t$ edges, each with weight belonging to $B_j$, and is vertex-disjoint from $P_1, \dots, P_{i-1}$. 
\end{proof}

\smallskip

\subsection{Vertex Connectivity Network Design}\label{sec:vc-sndp}
In this section, we consider the VC-SNDP in insertion-only streams. First, in Section~\ref{sec:vc-simple-analysis}, we present a simple analysis of our generic FT spanner based algorithm, Algorithm \ref{alg:generic-stream-network-design}, which yields a $kt$-approximation, where $k$ is the maximum connectivity requirement in the SNDP instance and $t$ is the stretch parameter in the VFT-spanner. 
Then, in Section~\ref{sec:vc-sndp-improved}, we provide a more involved analysis that achieves an approximation based on the integrality gap of the well-studied
cut-based LP relaxation for VC-SNDP. This yields improved results in several special cases of VC-SNDP, such as $k$-VCSS, $k$-VC-CAP and $\{0,1,2\}$-VC-SNDP, with small integrality gap.
We also show that this approach yields near-tight bounds for EC-SNDP and ELC-SNDP due to the small integrality gaps these problems have. 

\subsubsection{A Simple Analysis Based on Integral Solutions}\label{sec:vc-simple-analysis}

\begin{theorem}\label{thm:FT-spanner-VC-SNDP}
    Let $H$ be the VFT spanner of a weighted graph $G$ as constructed in Algorithm~\ref{alg:FT-spanner-weighted} with parameters $(t, f = (2t-2)(k-1), \eps = 1/(2t-1))$. 
    Then an optimal solution of VC-SNDP on ($H, r$) is within a $\big(2tk\big)$-factor of an optimal solution of VC-SNDP on ($G, r$).
\end{theorem}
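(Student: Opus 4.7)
The plan is to construct a feasible VC-SNDP solution $\sol \subseteq H$ satisfying $w(\sol) \le 2tk \cdot w(\opt)$, where $\opt$ denotes an optimal solution to VC-SNDP on $(G, r)$; since the optimum on $(H, r)$ is at most $w(\sol)$, this proves the claim. The construction proceeds edge by edge through $\opt$: if an edge $e = uv \in \opt$ also lies in $H$, include $e$ directly in $\sol$; otherwise, $e$ belongs to some bucket $B_j$, and applying Lemma~\ref{obs:vft-spanner-structure} with $f = (2t-2)(k-1)$ yields $L = 2k-1$ vertex-disjoint $uv$-paths in the bucket spanner $H_j$, each of length at most $t$ and using edges of weight at most $(1+\eps)w(e)$. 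We add any $k$ of these paths to $\sol$.

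The weight bound is immediate. Each $e \in \opt$ contributes at most $kt(1+\eps)w(e)$ to $w(\sol)$---this covers both the case where $e \in H$ (contribution $w(e) \le kt(1+\eps)w(e)$) and the case where we add $k$ replacement paths, each of weight at most $t(1+\eps)w(e)$. Substituting $\eps = 1/(2t-1)$ gives $kt(1+\eps) = 2kt^2/(2t-1) \le 2tk$ for every $t \ge 1$, so $w(\sol) \le 2tk \cdot w(\opt)$.

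The crux is verifying feasibility of $\sol$ for VC-SNDP on $(H, r)$. Invoking the biset version of Menger's theorem (Theorem~\ref{thm:vc-menger}), it suffices to show, for every pair $(s, t)$ with requirement $r(st) \le k$ and every biset $\hat{S} = (S, S^+)$ with $s \in S$ and $t \in V \setminus S^+$, that $|\delta_{\sol}(\hat{S})| + |S^+ \setminus S| \ge r(st)$. Write $s' = |S^+ \setminus S|$; the case $s' \ge r(st)$ is immediate, so assume $s' < r(st) \le k$, in which case feasibility of $\opt$ forces $|\delta_{\opt}(\hat{S})| \ge r(st) - s' \ge 1$. If every edge of $\delta_{\opt}(\hat{S})$ lies in $H$, they all lie in $\sol$ and we are done. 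Otherwise fix a missing edge $e = uv \in \delta_{\opt}(\hat{S})$ and consider its $k$ vertex-disjoint replacement paths in $\sol$. Each such path from $u \in S$ to $v \in V \setminus S^+$ either uses an edge of $\delta_H(\hat{S})$ or passes through a vertex of $S^+ \setminus S$; by internal vertex-disjointness, at most $s'$ paths can take the latter route, so at least $k - s' \ge r(st) - s'$ paths each contribute an edge of $\delta_H(\hat{S}) \cap \sol$, and combining this with the $s'$ term gives the required biset inequality.

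The main obstacle is verifying that these at-least-$(k-s')$ contributed edges are pairwise distinct: if two different replacement paths shared a cut edge $(a,b) \in \delta(\hat{S})$, then internal vertex-disjointness would force $\{a,b\} \subseteq \{u,v\}$, so $(a,b)=(u,v)$, which contradicts $(u,v) \notin H$. This is precisely where \emph{vertex}-disjoint (rather than merely edge-disjoint) replacements are essential. A secondary subtlety is that although Lemma~\ref{obs:vft-spanner-structure} yields $2k-1$ replacement paths per missing edge, we deliberately retain only $k$ of them: including all would nearly double the weight bound, whereas the cut argument above uses only $k$. The extra slack in the choice $f = (2t-2)(k-1)$ is harmless for feasibility and for the space bound stated for the algorithm.
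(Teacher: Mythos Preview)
Your proof is correct and follows essentially the same approach as the paper: replace each missing optimal edge by $k$ vertex-disjoint bucket paths, argue feasibility via Menger on bisets using a single missing crossing edge, and sum the per-edge replacement cost to get the $2tk$ factor. Your treatment is slightly more careful than the paper's in one respect---you explicitly verify that the crossing edges contributed by the $k-s'$ surviving paths are pairwise distinct (forcing any shared edge to be $(u,v)\notin H$)---and you read Lemma~\ref{obs:vft-spanner-structure} literally to obtain $2k-1$ paths of hop-length $t$ (keeping only $k$), whereas the paper's cost calculation uses hop-length $2t-1$; either reading yields the stated $2tk$ bound.
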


\begin{proof}
    Let $H = H_1 \uplus \cdots \uplus H_T$, where each $H_j$ is a $\big((2t-2)(k-1)\big)$-VFT $(2t-1)$-spanner for the edges in $G$ with weights in the range $B_j \coloneqq ((1+\epsilon)^{j-1}, (1+\epsilon)^j]$, and $T = \epsilon^{-1} \log W = \mathrm{poly}(\log n)$.
    Let $\opt \subseteq E$ be an optimal solution for the VC-SNDP instance on $G$. We construct a feasible solution $\sol \subseteq H$ based on $\opt$ as follows.
    For every $e = (u, v) \in \opt$, if $e \in H_j$, add $e$ to $\sol$. Otherwise, add $k$ vertex-disjoint $uv$-paths in $H_j$ (as shown in Observation \ref{obs:vft-spanner-structure}) to $\sol$. Note that our algorithm does not need to explicitly construct $\sol$; this is only for analysis purposes.
    
    We prove the feasibility of $\sol$ for the VC-SNDP instance via Menger's theorem (Theorem~\ref{thm:vc-menger}). Consider a biset $\hat{S} = (S, S^+)$ with connectivity requirement $k' = \max_{v \in S, u \in V \setminus S^+} r(u, v) \leq k$, where $|S^+ \setminus S| = \gamma < k'$. The optimal solution, $\opt$, will have at least $k'-\gamma$ edges crossing this biset, i.e., $k'-\gamma$ edges with one endpoint in $S$ and another one in $V\setminus S^+$. Let these edges be denoted as $e_1, \dots, e_{k'-\gamma}$. If all these edges exist in $H$, then since $\sol$ also includes them, it trivially satisfies the connectivity requirement of $\hat{S}$ as well. Suppose, without loss of generality, that $e_1 = (u,v) \notin H$. Then, by Theorem~\ref{obs:vft-spanner-structure}, $\sol$  contains $k$ vertex-disjoint $uv$-paths from the same weight class as $e$; thus at least $k - \gamma \ge k'-\gamma$ of them cross $\hat{S}$. 

    For the cost analysis, note that for every edge $e = (u, v) \in \opt$ in weight class $B_j$, either $e \in \sol$, or $\sol$ adds at most $k$ vertex-disjoint $uv$-paths, $P_1, \dots, P_k$, each of length at most $2t-1$, with weights belonging to $B_j$. Therefore, $w(\sol) \le k \cdot (2t-1) \cdot (1 + \frac{1}{2t-1}) \cdot w(\opt) = 2tk \cdot w(\opt)$.
\end{proof}

\begin{corollary}\label{cor:stream-vc-sndp}
    There exists an algorithm for VC-SNDP with edge weights $w: E \rightarrow \{0, 1, \dots, W\}$ and a maximum connectivity requirement $k$, in insertion-only streams, that uses $\tilde{O}(k^{1-1/t}\cdot n^{1+1/t})$ space and outputs a $(2tk)$-approximate solution. 
\end{corollary}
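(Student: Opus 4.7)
The plan is to derive Corollary~\ref{cor:stream-vc-sndp} as an immediate consequence of Theorem~\ref{thm:FT-spanner-VC-SNDP} combined with the streaming construction of weighted fault-tolerant spanners from Theorem~\ref{thm:weighted-FT-spanner-stream}. Concretely, I would instantiate the generic framework in Algorithm~\ref{alg:generic-stream-network-design} with the VC-SNDP instance $(G,r)$, stretch parameter $t$, fault-tolerance $f = (2t-2)(k-1)$, and accuracy $\eps = 1/(2t-1)$, and take the offline algorithm $\mathcal{A}$ to be an exact solver for VC-SNDP (so the approximation loss is entirely due to the spanner compression).

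The first step is the space bound. Feeding these parameters into Theorem~\ref{thm:weighted-FT-spanner-stream}, the stored subgraph $H$ has size $\tilde{O}\bigl(f^{1-1/t}\cdot n^{1+1/t}\bigr)$. Since $f = (2t-2)(k-1) = O(tk)$, this becomes $\tilde{O}\bigl((tk)^{1-1/t}\cdot n^{1+1/t}\bigr)$. As the paper observes, one never needs $t$ larger than $\Theta(\log n)$ (beyond which the space stops shrinking), so $t^{1-1/t}$ is absorbed into the $\tilde{O}(\cdot)$ notation along with the $\poly(\log n)$ factors from bucketing; this yields the claimed $\tilde{O}\bigl(k^{1-1/t}\cdot n^{1+1/t}\bigr)$-word bound. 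The assumption $W = n^{\polylog(n)}$ ensures that $\log W$ is likewise hidden.

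The second step is the approximation ratio. After the stream terminates, the algorithm runs $\mathcal{A}$ on $(H, r)$ and returns the result $\sol$. By Theorem~\ref{thm:FT-spanner-VC-SNDP}, the optimum value of VC-SNDP on $(H, r)$ is at most $2tk$ times the optimum of VC-SNDP on $(G, r)$, so with an exact offline solver the cost of $\sol$ is at most $2tk \cdot w(\opt_G)$. Because $H \subseteq G$, any feasible solution on $(H,r)$ is automatically feasible on $(G,r)$, so this is a valid VC-SNDP solution for the input instance.

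There is no real obstacle here: the corollary is essentially an accounting step that packages Theorem~\ref{thm:FT-spanner-VC-SNDP} (the structural cost argument) and Theorem~\ref{thm:weighted-FT-spanner-stream} (the streaming implementation of weighted FT spanners) under a single statement, checking that the choice $f = (2t-2)(k-1)$ is consistent across both theorems and that the implicit factors ($t$, $\log W$, $\eps^{-1}$) all collapse into $\tilde{O}(\cdot)$. The only point that merits an explicit sentence in the write-up is the observation that restricting to $t = O(\log n)$ is without loss of generality, so the $t^{1-1/t}$ factor does not affect the stated bound.
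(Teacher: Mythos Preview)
Your proposal is correct and follows essentially the same approach as the paper: instantiate Algorithm~\ref{alg:generic-stream-network-design} with $f=(2t-2)(k-1)$ and $\eps=1/(2t-1)$, invoke Theorem~\ref{thm:weighted-FT-spanner-stream} for the space bound (absorbing $t^{1-1/t}$ via $t=O(\log n)$), and apply Theorem~\ref{thm:FT-spanner-VC-SNDP} together with an exact offline solver (the paper says ``exhaustive search'') for the approximation guarantee. Your write-up is slightly more explicit about why the $t$-dependence disappears and why feasibility on $H$ implies feasibility on $G$, but the argument is the same.
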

\begin{proof}
Theorem~\ref{thm:FT-spanner-VC-SNDP} shows that Algorithm~\ref{alg:generic-stream-network-design} with $(t, f = (2t-2)(k-1), \eps = 1/(2t-1))$, 
returns a VFT spanner $H$ 
of size $\tilde{O}(k^{1-1/t} \cdot n^{1+1/t})$ that contains a $(2tk)$-approximate solution for VC-SNDP on $G$. 
Then, once the stream terminates, we perform an exhaustive search, i.e., enumerating all possible solutions on $H$ and output a $(2tk)$-approximate solution of $G$, supported on $H$ only. 
Since the algorithm's only space consumption is for storing the constructed VFT spanner, it requires $\tilde{O}\big(k^{1-1/t} \cdot n^{1+1/t}\big)$ space.
\end{proof}

\subsubsection{An Improved Analysis via Fractional Solutions}\label{sec:vc-sndp-improved} 
In this section we show a refined analysis of the framework which shows that an $O(tk)$-VFT $O(t)$-spanner contains a $O(t)$-approximate \emph{fractional} 
solution for the given VC-SNDP instance. This is particularly interesting as it demonstrates that a fault-tolerant spanner preserves a near-optimal fractional solution 
for VC-SNDP on the given graph $G$. In other words, fault-tolerant spanners serve as {\em coresets} for both integral and fractional solutions of network design problems.

\begin{theorem}\label{thm:FT-spanner-VC-SNDP-frational}
    Let $H$ be the VFT spanner of a weighted graph $G$ as constructed in Algorithm~\ref{alg:FT-spanner-weighted} with parameters $(t, f = (2t-2)(2k-1), \eps = 1/(2t-1))$. Then the weight of an optimal \emph{fractional} solution of VC-SNDP on ($H, r$) is within a $4t$-factor of the optimal solution of VC-SNDP on ($G, r$).
\end{theorem}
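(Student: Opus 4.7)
The plan is to mirror the proof of Theorem~\ref{thm:FT-spanner-VC-SNDP}, but instead of constructing an integral feasible solution on $H$ from an integral optimum $\opt$ on $G$, to construct a \emph{fractional} one that spreads the replacement of each missing edge evenly across several vertex-disjoint detour paths. This turns the $k$-factor overhead of the integral reduction into a constant, yielding $4t$ in place of $2tk$.

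First I would write down the natural cut-based LP relaxation for VC-SNDP, whose constraints are $\sum_{e \in \delta(\hat S)} x_e \geq r(\hat S) - |S^+\setminus S|$ over all bisets $\hat S$, and fix an integral optimum $\opt \subseteq E(G)$. I then define $x$ on $E(H)$ edge by edge as follows. For each $e = uv \in \opt$: if $e$ already lies in $H$, add $1$ to $x_e$; otherwise $e$ belongs to some bucket $B_j$ with $e \notin H_j$, in which case Lemma~\ref{obs:vft-spanner-structure} applied with stretch $2t-1$ and $f = (2t-2)(2k-1)$ yields $2k$ pairwise vertex-disjoint $uv$-paths in $H_j$, each of at most $2t-1$ edges and each edge of weight at most $(1+\eps) w(e)$. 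For each such path and each of its edges $e'$, add $1/k$ to $x_{e'}$.

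For feasibility I would run a biset analysis analogous to that in Theorem~\ref{thm:FT-spanner-VC-SNDP}: for any biset $\hat S$ with demand $r(\hat S) \le k$ and $\gamma := |S^+\setminus S| < r(\hat S)$, integral feasibility of $\opt$ gives at least $r(\hat S) - \gamma$ edges of $\opt$ crossing $\hat S$, and each such crossing edge contributes at least $1$ to $\sum_{e \in \delta(\hat S)} x_e$: either it is kept with $x_e = 1$, or the $2k$ vertex-disjoint $uv$-detours lose at most $\gamma$ paths to internal vertices in $S^+\setminus S$, so at least $2k - \gamma \ge k+1$ of them cross $\hat S$ and each contributes $1/k$. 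For the cost, each replaced edge pays at most $2k \cdot (2t-1) \cdot (1/k) \cdot (1+\eps) \cdot w(e)$, which with $\eps = 1/(2t-1)$ reduces to $4t \cdot w(e)$, so $w^\top x \le 4t \cdot w(\opt)$.

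The main subtlety I expect is calibrating the fractional denominator $1/k$: it must be small enough that, in the worst cut case $\gamma = k-1$, each replaced edge delivers at least $1$ unit of crossing flow (forcing a choice $\lambda \le 2k - \gamma = k+1$), while being large enough to keep the weight blow-up $2k/\lambda$ bounded by $2$. These two conditions are simultaneously tight at $\lambda \approx k$, and the choice $f = (2t-2)(2k-1)$ in the spanner construction is precisely what Lemma~\ref{obs:vft-spanner-structure} requires to furnish the $2k$ vertex-disjoint detours that this balance demands.
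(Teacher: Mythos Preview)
Your proposal is correct and follows essentially the same approach as the paper: both construct a fractional solution on $H$ by keeping each $e \in \opt \cap H$ integrally and spreading mass $1/k$ over the edges of $2k$ vertex-disjoint detours when $e \notin H$, then verifying feasibility biset by biset and bounding cost by $4t\cdot w(\opt)$. The only notable difference is that the paper caps $\vx(e)$ at $1$ and consequently must argue more carefully in the feasibility step (separating off the $L_1$ edges in $\delta_{\opt}(\hat S)\cap H$ and verifying the cap never binds), whereas your uncapped additive accounting---``each crossing $\opt$-edge contributes at least $1$''---is a clean simplification that works because the LP has no upper-bound constraints.
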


We note a small but important difference between the algorithm implied by theorem above, and the one earlier. We increase the $f$ by a factor of $2$ which is 
needed for the proof below.

\begin{figure}[!ht]
    \centering
    \phantomsection
    \begin{tcolorbox}[colframe=gray!50!black, colback=gray!10!white,          coltitle=black, rounded corners, fonttitle=\bfseries, width=.7\linewidth]
    \uline{\textbf{\large \textsf{VC-SNDP-LP}}\;\;$\langle\mkern-4mu\langle\textsf{\it Input: } (G = (V, E), w, h)\rangle\mkern-4mu\rangle$}
    \begin{align*}
        \min & \sum_{e \in E} \vx(e) w(e)\\
        \text{s.t.} & \sum_{uv\in E\; :\; u \in S, v\in V\setminus S^+} \vx(e) \geq h(\hat{S})
        &&\forall S, S^+ \subseteq V, \text{s.t. } S\subseteq S^+\\
        & \vx(e)  \geq 0 &&\forall e \in E
    \end{align*}
    \end{tcolorbox}
    \caption{Biset-based LP relaxation of VC-SNDP}
    \label{fig:VC-SNDP-LP}
\end{figure}

\begin{proof}
    Let $H = H_1 \uplus \cdots \uplus H_T$, where $T = \epsilon^{-1} \log W$, be the output of Algorithm~\ref{alg:FT-spanner-weighted} with $f = (2t-2)(2k-1)$. Recall that for every $1\le j\le T$, $H_j$ is a $\big((2t-2)(2k-1)\big)$-VFT $(2t-1)$-spanner for the edges in $G$ with weights in $B_j \coloneqq ((1+\epsilon)^{j-1}, (1+\epsilon)^j]$.
     
    \medskip    
    Let $\opt \subseteq E$ be an optimal solution for the VC-SNDP instance on $G$. Starting from $\opt$, we construct a feasible fractional solution $\vx$ supported only on $H$, for the standard (bi)cut-based relaxation (i.e.,~\hyperref[fig:VC-SNDP-LP]{VC-SNDP-LP}) of the VC-SNDP instance, with cost at most $O(t) \cdot w(\opt)$. 
    
    In~\hyperref[fig:VC-SNDP-LP]{VC-SNDP-LP}, $h:2^V \times 2^V \rightarrow\{0,1,\cdots, k\}$ is defined as $h(\hat{S})\coloneqq \max(0, \max_{v\in S, u\in V\setminus S^+} r(u,v) - |S^+ \setminus S|)$, when $S \subseteq S^+$. Otherwise, $h(\hat{S})=0$. In particular, observe that for every biset $\hat{S}$ with $h(\hat{S}) > 0$, $h(\hat{S}) + |S^+\setminus S| \le k$.
    
    Then, the fractional solution $\vx$ is constructed from $\opt$ as follows (in Figure~\ref{fig:fractional-vc-solution-construction}). 
    \begin{figure}[!ht]
    \centering
    \phantomsection
    \begin{tcolorbox}[colframe=gray!50!black, colback=white, coltitle=black, rounded corners, fonttitle=\bfseries, width=.85\linewidth]
    \uline{\textbf{\large \textsf{Construction of $\vx$ from $\opt$}}}
    \begin{itemize}[leftmargin=*]
        \setlength{\itemsep}{0pt}
        \setlength{\parskip}{0pt} \setlength{\parsep}{0pt}  
        \item[] {\bf Initialize} $\vx$ as an all-zero vector, i.e., $\vx(e) = 0$ for all $e \in E$. 
        \item[] {\bf for each} $e =(u,v) \in \opt$
        \begin{itemize}[leftmargin=*]
            \setlength{\itemsep}{0pt}
            \setlength{\parskip}{0pt} \setlength{\parsep}{0pt}
            \item[] {\bf let} $B_j$ denote the weight class to which the weight of $e$ belongs. 
            \item[] {\bf if} $e\in H_j$ {\bf then} $\vx(e) \leftarrow 1$
            \item[] {\bf else} 
            \begin{itemize}[leftmargin=*]
                \setlength{\itemsep}{0pt}
                \setlength{\parskip}{0pt} \setlength{\parsep}{0pt}
                \item[] {\bf let} $P_1, \cdots, P_{2k}$ be vertex-disjoint $uv$-paths in $H_j$, each of length at most $t$
                \item[] {\bf for each $e' \in P_1 \cup \cdots \cup P_{2k}$, $\vx(e') \leftarrow \min(1, \vx(e') + 1/k)$}
            \end{itemize}
        \end{itemize}
    \end{itemize}
    \end{tcolorbox}
    \caption{Construction of the fractional solution $\vx$ of~\hyperref[fig:VC-SNDP-LP]{VC-SNDP-LP} from $\opt$}
    \label{fig:fractional-vc-solution-construction}
    \end{figure}
    Note that our algorithm does not need to explicitly construct $\vx$; this is only for analysis purposes.
    
    \medskip
    First, we prove the feasibility of $\vx$ for~\hyperref[fig:VC-SNDP-LP]{VC-SNDP-LP}($G, w, h$). Consider a biset $\hat{S}$ with $h(\hat{S})>0$. Note that for this to hold, it requires that $|S^+\setminus S| < r(\hat{S})$.
    In the optimal solution $\opt$, there are at least $h(\hat{S})$ edges in $\delta_G(\hat{S})$.  
    Let $L_1$ and $L_0$ respectively denote the number of edges in $\delta_{\opt}(\hat{S})$ that belong to $H$ and those that do not; i.e., $L_1 = |\delta_{\opt}(\hat{S}) \cap H|$ and $L_0 = |\delta_{\opt}(\hat{S}) \setminus H|$.
    Note that since $\vx(e) = 1$ for every $e \in H$, if $L_1 \ge h(\hat{S})$ then the fractional solution $\vx$ satisfies the connectivity requirement of $\hat{S}$. 
    
    Next, consider the case in which $L_1 < h(\hat{S}) = k'$.     
    We select $k' - L_1$ edges from the edge set $\delta_{\opt}(\hat{S}) \setminus H$ and denote them as $e_1 \coloneqq (u_1, v_1), \dots, e_{k'-L_1} \coloneqq (u_{k'-L_1}, v_{k'-L_1})$. 
    Then, for each $i \le k'-L_1$, consider the $2k$ vertex-disjoint $(u_i, v_i)$-paths $P_1, \dots, P_{2k}$ in $H_j$, as shown in Observation~\ref{obs:vft-spanner-structure}, and used in the construction of $\vx$ (see Figure~\ref{fig:fractional-vc-solution-construction}). Since $P_1, \cdots, P_{2k}$ are vertex-disjoint and $u_iv_i$ is crossing the biset $\hat{S}$, at least 
    \begin{align*}
        2k - L_1 - |S^+ \setminus S|  > 2k - L_1 - |S^+ \setminus S| - (k' - L_1) &= 2k - (k' + |S^+\setminus S|) \\
        &= 2k - (h(\hat{S}) + |S^+\setminus S|) \ge k 
    \end{align*}
    of the paths must have an edge crossing $\hat{S}$ distinct from $\delta_{\opt}(\hat{S}) \cap H$, where the last inequality holds because for every biset $\hat{S}$ with $h(\hat{S}) >0$, $h(\hat{S}) + |S^+\setminus S| \le k$. 
    Without loss of generality, let us denote $k$ of these distinct edges by $e^i_1, \dots, e^i_{k}$. Then, by the construction of $\vx$, 
    \begin{align*}
        \sum_{e\in \delta_H(\hat{S})} \vx(e) 
        \ge 
        \sum_{e\in \delta_{\opt}(\hat{S}) \cap H} \vx(e) + \sum_{e\in \{e^i_j\}_{i\in [k' - L_1], j\in [k]}} \vx(e) \ge L_1 + (k' - L_1) \cdot k \cdot \frac{1}{k} = k' = h(\hat{S}).
    \end{align*}
    Note that the second inequality holds because $\delta_{\opt}(\hat{S})\cap H$ and $\{e^i_j\}_{i\in [k'-L_1], j\in [k]}$ are disjoint and no edge appears more than $k'-L_1$ times in the second summation (i.e., $\sum_{e\in \{e^i_j\}_{i\in [k'-L_1], j\in [k]}} \vx(e)$). Hence, for every edge $e' \in \{e^i_j\}_{i\in [L], j\in [k]}$, its $\vx$ value is at most $\min(1, c(e')/k) = c(e')/k$, because $c(e')$ which is defined as the number of times $e'$ appears in the collection $\{e^i_j\big\}_{i \in [k'-L_1], j \in [k]}$, is at most $k' -L_1\le k' \le k$. So, $\vx$ is a feasible solution for~\hyperref[fig:VC-SNDP-LP]{VC-SNDP-LP}($G, w, h$). 
    
    \medskip
    For the cost analysis, note that for every edge $e = (u, v) \in \opt$ in weight class $B_j$, either $\vx(e)=1$, or it contributes in increasing the $\vx$ values at most $2k\cdot (2t-1) \cdot \frac{1}{k} = 2\cdot (2t-1)$ units on edges whose weight belong to $B_j$. Therefore, $w(\vx) \le 2\cdot (2t-1) \cdot (1 + \epsilon) \cdot w(\opt) = 4t\cdot w(\opt)$.
\end{proof}

\begin{corollary}\label{cor:stream-vc-sndp-fractional}
    There exists an algorithm for VC-SNDP with edge weights $w: E \rightarrow \{0, 1, \dots, W\}$ and a maximum connectivity requirement $k$, in insertion-only streams, that uses $\tilde{O}(k^{1-1/t}\cdot n^{1+1/t})$ space and outputs a $(4t \beta)$-approximation where $\beta$ is the integrality gap of the cut-based LP relaxation.
\end{corollary}

\paragraph{Implications for VC-SNDP and special cases:} 
The main advantage of the fractional analysis is for those cases where the integrality gap of the LP is small. 
We point out some of those cases and note that this is particularly useful for EC-SNDP and ELC-SNDP, which we discuss later.

\begin{itemize}
    \item For $\{0,1,2\}$-VC-SNDP there is an algorithm that uses $\tilde{O}(n^{1+1/t})$ space and 
    outputs a $8t$-approximate solution. This follows via the known integrality gap of $2$ for these instances \cite{fleischer2006iterative,CheriyanVV06}.
    \item For $k$-VCSS there is an algorithm that uses $\tilde{O}(k^{1-1/t}\cdot n^{1+1/t})$ space and outputs a $(16+\eps)t$-approximate solution when $n$ is sufficiently large compared to $k$. This follows from the known the integrality gap results for $k$-VCSS from \cite{Nutov22,CheriyanV14,fukunaga2015iterative}. This also implies a similar result for the connectivity augmentation problem $k$-VC-CAP.
    \item For finding the cheapest $k$ vertex-disjoint $s$-$t$ paths, there is an algorithm that uses $\tilde{O}(k^{1-1/t}\cdot n^{1+1/t})$ space and outputs a $4t$-approximate solution. This follows from the fact that the flow-LP is optimal for $s$-$t$ disjoint paths.
\end{itemize}
We believe that the regime of $k$ being small compared to $n$ is the main interest in the streaming setting. 
For large values of $k$, $O(t\log(\frac{n}{n-k}))$ and $O(t\log k \cdot \log(\frac{n}{n-k}))$ approximation bounds can be derived for $k$-VC-CAP and $k$-VCSS, respectively, via known integrality gaps (see~\cite{nutov2018improved} and~\cite{nutov2014approximating}). We omit formal statements in this version.

\subsection{EC-SNDP and ELC-SNDP}
\label{sec:ec-sndp}
The proof technique that we outlined for VC-SNDP applies very broadly and also hold for EC-SNDP and ELC-SNDP. We state below the theorem for EC-SNDP
that results from the analysis with respect to the fractional solution. 

\begin{theorem}\label{thm:FT-spanner-EC-SNDP-frational}
    Let $H$ be the EFT spanner of a weighted graph $G$ as constructed in Algorithm~\ref{alg:FT-spanner-weighted} with parameters $(t, f = (2t-1)(2k-1), \eps = 1/(2t-1))$. Then, the weight of an optimal fractional solution of EC-SNDP on ($H, r$) is within a $(4t)$-factor of the weight of an optimal solution of EC-SNDP on ($G, r$). 
\end{theorem}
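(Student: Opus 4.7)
The plan is to mirror the argument for Theorem~\ref{thm:FT-spanner-VC-SNDP-frational}, with the simpler cut-based LP for EC-SNDP in place of the biset LP, and edge-disjointness in place of vertex-disjointness.

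First I decompose $H = H_1 \uplus \cdots \uplus H_T$ by weight buckets and establish an edge-analogue of Lemma~\ref{obs:vft-spanner-structure}: if $e = (u,v) \notin H_j$ with $w(e) \in B_j$, then $H_j$ contains $2k$ edge-disjoint $uv$-paths, each using at most $2t-1$ edges, all of whose weights lie in $B_j$. The argument is identical in spirit to the vertex case, but now we peel off entire paths from the edge set: each previously found path contains at most $2t-1$ edges, so after $i-1$ iterations we have removed at most $(2t-1)(i-1)$ edges; with $f = (2t-1)(2k-1)$, the EFT property of $H_j$ guarantees a new short $uv$-path exists as long as $i \le 2k$.

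Next I build a fractional $\vx$ on $E(H)$ from $\opt$ exactly as in Figure~\ref{fig:fractional-vc-solution-construction}, but with the $2k$ paths now chosen edge-disjoint: if $e \in H_j$ set $\vx(e) = 1$; otherwise apply $\vx(e') \leftarrow \min(1, \vx(e') + 1/k)$ on every edge $e'$ of the $2k$ paths. To verify feasibility of the cut LP for EC-SNDP, fix a cut $S$ with requirement $k' := r(S) \le k$ and let $L_1 = |\delta_{\opt}(S) \cap H|$. For each of the remaining $k'-L_1$ OPT edges $e_i = (u_i,v_i) \in \delta_{\opt}(S) \setminus H$, each of its $2k$ edge-disjoint $u_iv_i$-paths in $H_{j(i)}$ must contain at least one edge of $\delta(S)$, and because the paths are edge-disjoint they contribute $2k$ distinct edges of $\delta_H(S)$; at most $L_1 \le k$ of these can coincide with $\delta_{\opt}(S) \cap H$, so at least $2k - L_1 \ge k$ are ``new''. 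Summing the $1/k$-increments across all $k'-L_1$ OPT edges yields mass at least $k'-L_1$ on new edges of $\delta_H(S)$, plus $L_1$ from the value-$1$ contributions in $\delta_{\opt}(S) \cap H$, for a total $\ge k'$.

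I still have to confirm that the $\min(1,\cdot)$ cap does not discard mass along $\delta_H(S)$: for any fixed new edge $e' \in \delta_H(S)$, the count $c(e')$ of pairs $(i,j)$ with $e' \in P_j^i$ is at most $k'-L_1 \le k$, because for a fixed $i$ the $2k$ paths are edge-disjoint so $e'$ appears at most once per $i$; hence $c(e')/k \le 1$ and the cap is inactive on the edges used in this cut's analysis. The cost bound mirrors the VC case: for each $e \in \opt$ in bucket $B_j$ not already in $H$, the total fractional mass added is at most $2k \cdot (2t-1)(1+\eps) \cdot w(e)/k = 4t \cdot w(e)$, giving $w(\vx) \le 4t \cdot w(\opt)$. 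The main subtlety is isolating the per-cut contribution of a globally defined $\vx$ while the cap may be active for other cuts; the bound $c(e') \le k$ restricted to the edges of a single cut resolves this cleanly, and is the one place the extra factor of $2$ in the choice $f = (2t-1)(2k-1)$ is used.
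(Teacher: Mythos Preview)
Your proposal is correct and follows essentially the same approach as the paper, which explicitly omits the proof of Theorem~\ref{thm:FT-spanner-EC-SNDP-frational} with the remark that it ``follows the same outline as that for VC-SNDP.'' Your translation of the vertex argument to the edge setting---replacing the biset LP with the cut LP, replacing vertex-disjoint paths with edge-disjoint paths (and correspondingly using $f = (2t-1)(2k-1)$ since each path has up to $2t-1$ edges rather than $2t-2$ internal vertices), and dropping the $|S^+\setminus S|$ term in the feasibility count---is exactly what the paper intends.
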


We omit the proof since it follows the same outline as that for VC-SNDP.

\begin{corollary}\label{cor:EC-SNDP-stream-improved}
    There exists a streaming algorithm for EC-SNDP with edge weights $w: E \rightarrow \{0, 1, \dots, W\}$ and a maximum connectivity requirement $k$, in insertion-only streams, that uses $\tilde{O}\big(k^{1/2-1/(2t)} \cdot n^{1+1/t} + kn\big)$ space and outputs a $(8t)$-approximate solution.
\end{corollary}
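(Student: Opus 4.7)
My plan is to assemble the corollary as a direct consequence of Theorem~\ref{thm:FT-spanner-EC-SNDP-frational} (which handles the key fault-tolerant-spanner-to-LP comparison) combined with a known integral rounding guarantee for EC-SNDP. Concretely, the streaming algorithm is Algorithm~\ref{alg:generic-stream-network-design} specialized to EC-SNDP: during the stream we build the EFT spanner $H$ via Algorithm~\ref{alg:FT-spanner-weighted} with parameters $(t,\,f=(2t-1)(2k-1),\,\eps=1/(2t-1))$, and after the stream we invoke Jain's iterated rounding as the offline algorithm $\mathcal{A}$ on the EC-SNDP instance $(H,r)$.

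First I would bound the space. By Theorem~\ref{thm:weighted-FT-spanner-stream} the EFT spanner built with fault-tolerance $f$ uses $\tilde{O}(f^{1-1/t}\cdot n^{1+1/t})$ words. Substituting $f=(2t-1)(2k-1)=O(tk)$ and absorbing the $t^{1-1/t}=O(t)$ factor as well as the $\log W$ factor into the $\tilde{O}(\cdot)$ notation (as justified in the paper's convention that $W=n^{\polylog n}$ and $t=O(\log n)$) yields $\tilde{O}(k^{1-1/t}\cdot n^{1+1/t})$ space, as claimed. Since the algorithm stores only $H$ during the stream, this is the entire in-stream memory footprint.

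Next I would bound the approximation. Let $\opt_G$ denote the optimal EC-SNDP solution on $G$ and let $\opt^{\mathrm{LP}}_H$ denote the optimal value of the cut-based LP relaxation of EC-SNDP restricted to $H$. Theorem~\ref{thm:FT-spanner-EC-SNDP-frational}, applied with exactly the chosen parameters, gives $\opt^{\mathrm{LP}}_H \le 4t\cdot w(\opt_G)$. Now I would use the fact that Jain's iterated rounding~\cite{jain2001factor} is a $2$-approximation \emph{with respect to the LP value}: running it on $(H,r)$ produces an integral feasible solution $\sol\subseteq H$ with $w(\sol)\le 2\,\opt^{\mathrm{LP}}_H$. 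Chaining the two inequalities gives $w(\sol)\le 2\cdot 4t\cdot w(\opt_G)=8t\cdot w(\opt_G)$, which is the claimed approximation ratio. Feasibility of $\sol$ for the original instance on $G$ is immediate since $H\subseteq G$ and the requirement function $r$ is the same.

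There is no real obstacle to overcome here; the only subtlety worth flagging explicitly in writing is to confirm that Jain's algorithm bounds the integral cost against the fractional LP value (not merely against the integer OPT on $H$), so that Theorem~\ref{thm:FT-spanner-EC-SNDP-frational} can be invoked through the LP. Finally I would remark that the same skeleton—EFT spanner in the stream, LP-preserving bound via the fractional analogue of Theorem~\ref{thm:FT-spanner-EC-SNDP-frational}, then an LP-based $2$-approximation via Fleischer--Jain--Williamson~\cite{fleischer2006iterative} or Cheriyan--Vempala--Vetta~\cite{CheriyanVV06}—yields an identical $8t$-approximation in $\tilde{O}(k^{1-1/t}n^{1+1/t})$ space for ELC-SNDP, giving the element-connectivity extension mentioned in the introduction.
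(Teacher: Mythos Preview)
Your proposal is correct and follows essentially the same argument as the paper: build the EFT spanner with parameters $(t,\,f=(2t-1)(2k-1),\,\eps=1/(2t-1))$, invoke Theorem~\ref{thm:FT-spanner-EC-SNDP-frational} to get a $4t$-approximate fractional solution on $H$, and then apply Jain's iterated rounding (which is a $2$-approximation relative to the LP) to obtain the $8t$ bound, with the space bound coming directly from Theorem~\ref{thm:weighted-FT-spanner-stream}. Your explicit remark that Jain's guarantee is against the LP value (not merely the integer optimum on $H$) is exactly the point the paper relies on, and your ELC-SNDP remark matches the paper's Theorem~\ref{thm:ELC-SNDP-stream}.
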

\begin{proof}
The proof follows from Theorem~\ref{thm:FT-spanner-EC-SNDP-frational} which shows that Algorithm~\ref{alg:generic-stream-network-design} with $(t, f = (2t-1)(2k-1), \eps = 1/(2t-1))$, returns an EFT $2t$-spanner $H$ of size $\tilde{O}(k^{1/2-1/(2t)} \cdot n^{1+1/t} + kn)$ that contains a $(4t)$-approximate fractional solution for EC-SNDP on $G$. 

The integrality gap of {EC-SNDP-LP} is $2$ (cf.~\cite{jain2001factor}). 
Once the stream terminates, we perform an exhaustive search, i.e., enumerating all possible solutions on $H$ an output a $(8t)$-approximate solution.
Since the algorithm's only space consumption is for storing the constructed EFT spanner, it requires $\tilde{O}\big(k^{1/2-1/(2t)} \cdot n^{1+1/t} + kn\big)$ space, for odd $t$.
\end{proof}
Notably, in the setting where $W = n^{O(\log n)}$, Corollary~\ref{cor:EC-SNDP-stream-improved} improves upon the algorithm of~\cite{JKMV24}, which outputs an $O(t \cdot \log k)$-approximation using $\tilde{O}(k \cdot n^{1+1/t})$ space. More importantly, Algorithm~\ref{alg:generic-stream-network-design} is a more natural and generic approach for EC-SNDP compared to the analysis in~\cite{JKMV24}.

One can derive the following theorem for ELC-SNDP in a similar fashion. For this we rely on the fact that the integrality gap of the biset based LP for ELC-SNDP is
$2$ \cite{fleischer2006iterative,CheriyanVV06}. We omit the formal proof since it is very similar to the ones above for VC-SNDP and EC-SNDP.

\begin{theorem}\label{thm:ELC-SNDP-stream}
    There exists a streaming algorithm for ELC-SNDP with edge weights $w: E \rightarrow \{0, 1, \dots, W\}$ and a maximum connectivity requirement $k$, in insertion-only streams, that uses $\tilde{O}\big(k^{1-1/t} \cdot n^{1+1/t}\big)$ space and outputs a $(8t)$-approximate solution. 
\end{theorem}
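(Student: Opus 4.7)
The plan is to instantiate the generic framework of Algorithm~\ref{alg:generic-stream-network-design} with a vertex-fault-tolerant spanner (since any set of vertex-disjoint paths in the spanner is automatically element-disjoint), and then adapt the fractional analysis of Theorem~\ref{thm:FT-spanner-VC-SNDP-frational} to the element-connectivity biset LP. Specifically, I would run Algorithm~\ref{alg:FT-spanner-weighted} with parameters $(t, f = (2t-2)(2k-1), \eps = 1/(2t-1))$ to build an $f$-VFT $(2t)$-spanner $H = H_1 \uplus \cdots \uplus H_T$ of the input. By Theorem~\ref{thm:weighted-FT-spanner-stream}, the space is $\tilde{O}(f^{1-1/t}\cdot n^{1+1/t}) = \tilde{O}(k^{1-1/t}\cdot n^{1+1/t})$, since $t = O(\log n)$.

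For the approximation, the key claim is that $H$ contains a $(4t)$-approximate fractional solution to the biset LP for ELC-SNDP, which is the same as \hyperref[fig:VC-SNDP-LP]{VC-SNDP-LP} except that the covering constraints only range over bisets $\hat{S}$ with $S^+ \setminus S \subseteq R$ (the reliable set), as dictated by Theorem~\ref{thm:elc-menger}. Starting from an optimal integral solution $\opt$ of ELC-SNDP on $G$, I would construct $\vx$ exactly as in Figure~\ref{fig:fractional-vc-solution-construction}: for each $e=(u,v)\in\opt$ with $w(e)\in B_j$, either $e\in H_j$ and set $\vx(e)=1$, or invoke Lemma~\ref{obs:vft-spanner-structure} on $H_j$ to extract $2k$ vertex-disjoint $uv$-paths of length at most $t$ and add $1/k$ to $\vx$ on every edge along them (capping at $1$).

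Feasibility is checked biset-by-biset. Fix any $\hat{S}$ with $S^+\setminus S\subseteq R$ and $h(\hat{S})=k'>0$; note $k'+|S^+\setminus S|\le k$. Split $\delta_{\opt}(\hat{S})$ into $L_1=|\delta_{\opt}(\hat{S})\cap H|$ and $L_0=k'-L_1$ (replacing $k'-L_1$ with the correct lower bound if needed). For each of the $L_0$ missing edges $e_i=(u_i,v_i)$, the $2k$ vertex-disjoint $u_iv_i$-paths used in the construction of $\vx$ are also element-disjoint, because vertex-disjoint paths share neither edges nor internal vertices (reliable or not). The same arithmetic as in Theorem~\ref{thm:FT-spanner-VC-SNDP-frational} then shows at least $2k - L_1 - |S^+\setminus S|\ge k$ of these paths contribute a crossing edge of $\hat{S}$ distinct from $\delta_{\opt}(\hat{S})\cap H$, and summing contributions of $1/k$ gives total LP mass $\ge L_1 + (k'-L_1)\cdot k\cdot(1/k) = k' = h(\hat{S})$. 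The cost bound is the same as well: each $e\in\opt$ contributes weight at most $2k\cdot(2t-1)\cdot(1/k)\cdot(1+\eps)\cdot w(e)=4t\cdot w(e)$, so $w(\vx)\le 4t\cdot w(\opt)$. Finally, since the biset LP for ELC-SNDP has integrality gap $2$ \cite{fleischer2006iterative,CheriyanVV06}, running the corresponding offline LP-rounding algorithm as $\mathcal{A}$ after the stream rounds $\vx$ (or an optimal fractional solution on $H$, which is no worse) to an integral feasible solution on $H$ of weight at most $8t\cdot w(\opt)$.

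The main conceptual step is the passage from vertex-disjoint paths in the VFT spanner to element-disjoint paths needed for the ELC biset cover, but this is immediate from definitions. The only genuine difference between this proof and the VC-SNDP argument is the restriction of the biset family to $S^+\setminus S\subseteq R$, which never makes feasibility harder, and the invocation of the ELC integrality gap of $2$ instead of a $k$-dependent gap. Hence the full proof is essentially a line-by-line adaptation of Theorem~\ref{thm:FT-spanner-VC-SNDP-frational} together with Corollary~\ref{cor:stream-vc-sndp-fractional}, and I would present it by stating the analog of Theorem~\ref{thm:FT-spanner-VC-SNDP-frational} for ELC-SNDP and then invoking the offline $2$-approximation on $H$.
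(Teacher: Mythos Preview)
Your proposal is correct and follows exactly the approach the paper intends: the paper explicitly omits the proof, stating it is ``very similar to the ones above for VC-SNDP and EC-SNDP'' and that it relies on the integrality gap of $2$ for the biset LP of ELC-SNDP \cite{fleischer2006iterative,CheriyanVV06}, and you have correctly spelled out precisely this argument (VFT spanner, fractional construction as in Theorem~\ref{thm:FT-spanner-VC-SNDP-frational}, restriction to the ELC biset family, then the offline $2$-approximate rounding). One minor slip: the $2k$ vertex-disjoint paths extracted via Lemma~\ref{obs:vft-spanner-structure} have length at most $2t-1$, not $t$ (you in fact use $2t-1$ correctly in your cost computation, so this is just a typo earlier in your write-up).
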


\subsection{Efficient Streaming Algorithms}
\label{sec:polytime}
The general framework and corresponding streaming algorithms discussed so far aim to optimize the tradeoff between space usage and approximation ratio. However, these algorithms, as described, may run in exponential time. In this section we outline approaches to obtain polynomial-time algorithms and the necessary space/approximation tradeoffs. We consider the two parts of the framework, (1) the streaming algorithm for storing a fault-tolerant spanner and (2) the algorithm at the end of the stream, separately. 

\paragraph{Streaming Algorithm:} The standard greedy algorithm for $f$-FT spanners runs in time $O(n^f)$ and this is not computationally efficient for large values of $f$. The study of polynomial-time constructions of such spanners is an active research question~\cite{dinitz2020efficient,bodwin2021optimal,bodwin2022partially}. In particular, the following polynomial-time constructions can be implemented in the streaming setting. In all the methods described, the key modification is to replace the na\"ive $O(n^f)$-time $(u, v)$ test (i.e., checking whether there exists a set of f edges or vertices whose removal increases the distance between $u$ and $v$ in the remaining subgraph to at least $2t$) with a polynomial-time $(u,v)$ test.

\begin{itemize}
    \item{\bf VFT:} Bodwin, Dinitiz and Robelle~\cite{bodwin2021optimal} proposed a randomized implementation of the greedy approach for $f$-VFT $(2t-1)$-spanners. In their method, when testing whether to add an edge $(u,v)$, the algorithm randomly samples $\Theta(\log n)$  induced subgraphs of the current spanner. Each subgraph includes $u,v$, and each of the remaining vertices is included independently with probability $1/(2f)$. Then, the edge $(u,v)$ is added to the spanner if, in at least $1/4$ fraction of the sampled subgraphs, the distance between $u$ and $v$ exceeds $2t-1$. They show that this algorithm succeeds with high probability. It is simple to see that this can be implemented in the streaming setting in space that is proportional to the size of the spanner. Hence, it is possible to construct an $f$-VFT $(2t-1)$-spanner of optimal size, i.e., $\tilde{O}(f^{1-1/t} n^{1-1/t})$, in polynomial time in the streaming model, that succeeds with high probability. 
    
    \item{\bf EFT:} Dinitz and Robelle~\cite{dinitz2020efficient} provided a simple polynomial-time construction of $f$-EFT $(2t-1)$-spanners with slightly increased size complexity, i.e., $O(t f^{1-1/t} n^{1+t})$, which can be implemented efficiently in the streaming model too. Specifically, in their approach, when testing whether to add an edge $(u,v)$, they perform $f$ iterations as follows: starting with an initially empty set of edges $F$, in each iteration they attempt to find a path of length at most $(2t-1)$ between $u$ and $v$ in $S\setminus F$ and add it to $F$, where $S$ is the current spanner. If such a path exists, it is added to $F$. If, in any iteration, no such path is found in $S \setminus F$, then the edge $(u, v)$ is added to the spanner $S$. Otherwise, it is not added. 
    Later, Bodwin, Dinitz, and Robelle~\cite{bodwin2022partially} provided an improved analysis of the same (polynomial time) algorithm, achieving size bounds only slightly worse than those of the greedy approach with an exponential-time $(u,v)$ test. Their result constructs $f$-EFT $(2t-1)$-spanners of size 
    $O(t^{2.5} f^{1/2 - 1/(2t)} n^{1 + 1/t} + t^2 f n)$ for odd $t$. 
    For even $t$, the spanner size is 
    $O(t^{2.5} f^{1/2} n^{1 + 1/t} + t^2 f n)$.
 
\end{itemize}

\paragraph{Calculating Feasible Solution:} In all the described algorithms, we perform an exponential time exhaustive search, by enumerating all possible solutions, on the constructed spanner to find an optimal solution after the stream terminates. 
The spanner effectively serves as a coreset for the problem. Since most of the problems considered in this paper are NP-hard, one could instead run known polynomial time approximation algorithms in the offline setting. We discuss some such approximation algorithms of interest.

The best known approximation for EC-SNDP is a 2-approximation via iterated rounding \cite{jain2001factor}. However, this requires exactly solving the cut-based LP; it is unclear if this can be done in linear space, especially given that this LP is exponentially sized. There is a primal-dual 2-algorithm for the \emph{connectivity augmentation} problem \cite{WGMV93}; this can be implemented in linear space. While the algorithm does not directly solve the LP, the analysis shows that it is a 2-approximation with respect to the optimal fractional solution. This can be repeatedly applied to obtain an $O(\log k)$-approximation (where $k$ is the maximum connectivity requirement) for EC-SNDP \cite{GoemansGPSTW94}. For vertex-connectivity, The best known approximation for VC-SNDP employs a reduction to element connectivity. Given an $\alpha$-approximation for ELC-SNDP that uses $f(m)$ space, one can obtain a $O(\alpha k^3 \log n)$-approximation in $O(m + f(m))$ space \cite{chuzhoy2009k}. Some algorithms for special cases of VC-SNDP (such as single-source, $k$-VC-CAP, etc.) may be modified to run in linear space; we omit details in this version for brevity and refer the reader to related work discussed in Section \ref{sec:rel-work}. 

Combining the two parts, we achieve the following efficient streaming algorithms, where $S$ is the space usage for the corresponding streaming problem (i.e. $S = k^{1 - 1/t} \cdot n^{1 + 1/t}$ for vertex-connectivity instances, and $S = k^{1/2 - 1/(2t)} \cdot n^{1 + 1/t} + kn$ for edge-connectivity instances):
\begin{itemize}
    \item{\bf Integral Solution Approach.} If there exists a polynomial time $\alpha$-approximation algorithm for the SNDP problem with space complexity $g(m)$, where $m$ is the size of the input graph, our framework returns an $O(\alpha t k)$-approximate solution in polynomial time, using $\tilde{O}(S + g(S))$ space.

    \item{\bf Fractional Soultion Approach.} If there exists a polynomial time algorithm for the SNDP problem that returns an {\em integral} solution within an $\alpha_{\mathrm{fr}}$-factor of the optimal fractional solution to the cut-based LP relaxation of the problem, with space complexity $g(m)$ (where $m$ is the size of the input graph), then our framework returns an $O(\alpha_{\mathrm{fr}} \cdot t)$-approximate solution in polynomial time, using $\tilde{O}(S + g(S))$ space.
\end{itemize}
\section{Vertex Connectivity Augmentation in Link-Arrival Model}
\label{sec:vc_link_arrival}

In this section we consider $k$-VC-CAP in the link arrival setting. Recall that in this problem, we are initially given an underlying $k$-vertex-connected graph $G = (V, E)$, while additional links $L \subseteq V \times V$ with edge weights $w: L \to \mathbb{R}_{\geq 0}$ arrive as a stream. For readability, we refer to edges of the given graph $G$ as ``edges'' and the incoming streaming edges $L$ as ``links''.  Our goal is to find the min-weight subset $L' \subseteq L$ such that $(V, E \cup L')$ is $(k+1)$-vertex-connected. Note that this model is easier than the fully streaming setting, so results in Section \ref{sec:generic} immediately apply here as well. We show that for $k = 1, 2$, we can obtain constant-factor approximations in near-linear space. 

For ease of notation, we write $k$-connected to mean $k$-vertex-connected. In Section \ref{sec:vc_1_to_2} we describe a simple algorithm for augmenting from 1 to 2 connectivity. This algorithm relies on the fact that the underlying 1-connected graph is a tree. Unfortunately, $2$-VC-CAP is significantly less straightforward, since 2-connected graphs do not share the same nice structural properties as trees. To circumvent this issue, we use a data structure called an SPQR tree, which is a tree-like decomposition of a 2-connected graph into its 3-connected components. We describe SPQR trees and their key properties, along with the augmentation algorithm from 2 to 3 connectivity, in Section \ref{sec:vc_2_to_3}. 

\begin{remark}
\label{rem:vc_link_polytime}
    For both $1$-VC-VAP and $2$-VC-CAP, the ``in-stream'' portion of our algorithms run in polynomial time. The approximation ratios we provide assume an exact computation of an optimal solution on the stored edges; this can be done via exhaustive search in linear space, but uses exponential time. 
    For a polynomial time algorithm, one could instead use an approximation algorithm at the end of the stream. $1$-VC-CAP admits a 3-approximation via primal-dual that can be implemented in linear space \cite{ravi2002erratum}. $2$-VC-CAP also admits a 2-approximation \cite{ADNP99}; this uses as a subroutine an algorithm for computing a minimum cost \emph{directed} graph that contains $3$ node-disjoint paths from a specified root to all other vertices in the graph. This problem is known to be polynomial-time solvable \cite{edmonds2003submodular}, but it is unclear if that can be done in linear space. However, one can obtain an $O(1)$-approximate solution by repeatedly augmenting connectivity from the root via primal-dual methods \cite{Frank_1999}. Thus one can obtain an $O(1)$-approximation for 2-VC-CAP in linear space.
\end{remark}

\subsection{One-to-Two Augmentation}\label{sec:vc_1_to_2}
In this section, we prove the following theorem:

\begin{theorem}\label{thm:vc_1_to_2_main}
    There exists a streaming algorithm for $1$-VC-CAP with edge weights $w: E \to [1, W]$, in an insertion-only stream, that uses $O(n \eps^{-1} \log W)$ space and outputs a $(3 + \eps)$-approximate solution. 
\end{theorem}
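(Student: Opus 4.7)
The plan is to interpret 1-VC-CAP as a tree vertex-augmentation problem in disguise and to maintain, for each vertex and each weight bucket, a single well-chosen link.

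\textbf{Setup and reduction.} Since $G$ is 1-vertex-connected and given up-front in the link-arrival model, before the stream starts I would extract an arbitrary spanning tree $T$ of $G$, root it at a vertex $r$, and cache parent pointers and depths in $O(n)$ words. Non-tree edges of $G$ are effectively weight-zero links. With $T$ rooted, $G \cup L'$ is 2-vertex-connected iff every internal vertex $w$ of $T$ is \emph{covered} by some $xy \in L' \cup (E(G) \setminus E(T))$, i.e., $w$ lies strictly interior to the tree path $P_{xy}$; the problem thus reduces to a weighted tree-vertex-cover augmentation.

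\textbf{Streaming digest.} Bucket the incoming links by weight into $O(\eps^{-1}\log W)$ classes $B_j = [(1+\eps)^{j-1}, (1+\eps)^j)$ as in Algorithm~\ref{alg:FT-spanner-weighted}. For each vertex $v$ and each bucket $B_j$, maintain a single link $\ell_j(v) = (v, v^\ast)$ incident to $v$ in $B_j$ whose tree path reaches as high as possible from $v$, i.e., minimizes $\mathrm{depth}(\mathrm{LCA}(v,v^\ast))$. Each arriving link updates at most two slots in $O(1)$ time using the cached depths. Let $S$ be the union of all stored representatives; the total storage is $O(n\eps^{-1}\log W)$ words.

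\textbf{Post-processing and approximation.} After the stream terminates, solve 1-VC-CAP exactly on $T \cup (E(G) \setminus E(T)) \cup S$ and return the result; feasibility is immediate since $S \subseteq L$. I would show that $S$ supports a feasible augmentation of weight at most $(3+\eps)\, w(\opt)$. Fix $\opt$ and an OPT link $e=(u,v)$ in bucket $B_j$. Because $e$ itself is a candidate for both $\ell_j(u)$ and $\ell_j(v)$, both stored links reach at least as high (in the sense of LCA-depth) from $u$ and $v$ as $e$ does, so $\ell_j(u) \cup \ell_j(v)$ covers every internal vertex of $P_{uv}$ except possibly $\mathrm{LCA}(u,v)$ itself, and that vertex is missed only in the degenerate case where both stored endpoints coincide with $\mathrm{LCA}(u,v)$.

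\textbf{Main obstacle.} The technical crux is closing this LCA gap without blowing up the approximation. My plan is an amortized charging argument: if $\mathrm{LCA}(u,v)$ is a genuine cut vertex of $T$, then any feasible solution---in particular $\opt$---must contain another link $e'$ whose tree path strictly crosses $\mathrm{LCA}(u,v)$, and the stored representatives at the endpoints of $e'$ automatically cover $\mathrm{LCA}(u,v)$ as an interior vertex. Charging each ``LCA-repair'' to such an $e'$, and using that each OPT link is charged a bounded number of times, the total substitute weight is at most $3(1+\eps)\, w(\opt)$; combined with the digest's $O(n\eps^{-1}\log W)$ space this gives Theorem~\ref{thm:vc_1_to_2_main}. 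I expect the subtlety to lie in making the charging unambiguous and in handling degenerate cases where $P_{uv}$ has no internal vertex, where $u$ or $v$ themselves coincide with $\mathrm{LCA}(u,v)$, or where free edges of $E(G)\setminus E(T)$ already cover the LCA and no repair is needed.
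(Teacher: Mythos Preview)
Your reduction to ``cover every internal vertex of $T$'' is incorrect, and this is a genuine gap rather than a technicality. For 2-\emph{edge}-connectivity it suffices that every tree edge lie on the fundamental cycle of some chosen link, but for 2-\emph{vertex}-connectivity the removal of an internal vertex $w$ with several children shatters the tree into \emph{all} of the subtrees below $w$ plus the part above $w$, and a single link whose path passes through $w$ reconnects only two of those pieces. Concretely, take $T$ a star with center $r$ and leaves $a,b,c,d$, rooted at $a$ (so $r$ is internal), and let the stream contain only the three unit-weight links $ab$, $cd$, $ac$ in that order. Your digest keeps one link per leaf per bucket, with ties on LCA-depth broken by first arrival; you retain $\ell(a)=\ell(b)=ab$ and $\ell(c)=\ell(d)=cd$, discarding $ac$. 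The stored set $\{ab,cd\}$ is infeasible: deleting $r$ leaves $\{a,b\}$ and $\{c,d\}$ disconnected. Your proposed charging argument cannot rescue this, because the missing ingredient is not another link that ``covers'' $r$ more deeply---every link here already has $r$ as its LCA---but rather a link that \emph{bridges the right pair of subtrees}, information your per-vertex representatives do not record.

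The paper closes exactly this gap with a second data structure: for every internal vertex $u$ it maintains, in addition to the $L_u(j)$ you describe, a streaming MST on the graph whose nodes are the subtrees rooted at the children of $u$ and whose edges are the incoming links with $\text{LCA}=u$. In the analysis the $L_u(j)$ links contribute $2(1+\eps)w(\opt)$ as you argue, and the child-MSTs contribute an additional $w(\opt)$ via a partition of $\opt$ by LCA, giving the stated $3+\eps$. Your outline recovers the first two units but is missing the mechanism that supplies the third; without it the stored set $S$ need not contain any feasible augmentation at all.
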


We fix a 1-connected graph $G = (V, E)$. We can assume without loss of generality that $G$ is a tree; if not, we can fix a spanning tree of $G$ as the underlying graph, and consider all remaining edges as $0$-weight links in the stream. It is easy to verify that this does not change the problem. We fix an arbitrary root $r$ of the tree $G$. For each $u \in V$, we let $G_u$ denote the subtree of $G$ rooted at $u$, and we let $C(u)$ denote the set of children of $u$. For $u, v \in V$, we let $LCA(u,v)$ denote the lowest common ancestor of $u$ and $v$ in the tree $G$; this is the vertex $w$ furthest from the root such that $u, v \in G_w$. We overload notation and write $LCA(e)$ for an edge $e$ to be the LCA of its endpoints. We let $d_G(u,v)$ denote the number of edges in the unique tree path between $u$ and $v$. For any edge set $E' \subseteq E \cup L$ and any vertex sets $A, B \subseteq V$, we let $E'[A, B]$ denote the set of edges in $E'$ with one endpoint in $A$ and the other in $B$.
We use the following lemma as a subroutine:

\begin{lemma}\label{lem:mst_streaming}\cite{mcgregor2014graph}
    Given any set of nodes $V'$ with links $L \subseteq V' \times V'$ appearing in an insertion-only stream, one can store a minimum spanning tree on $V'$ using $O(|V'|)$ memory space. 
\end{lemma}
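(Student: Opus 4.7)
The plan is to give a direct streaming implementation based on the classical cycle property of minimum spanning trees: in any cycle of an edge-weighted graph, the unique maximum-weight edge lies in no MST. I would maintain throughout the stream a single edge set $T \subseteq V' \times V'$ that is always an MST of the links arrived so far. Since $|T| \le |V'| - 1$ at every moment, and each edge requires only $O(\log(|V'| W))$ bits, the total memory is $O(|V'|)$ words as claimed.

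The update rule on arrival of a new link $e = uv$ with weight $w(e)$ is the following. First, check whether $u$ and $v$ lie in the same connected component of $(V', T)$. If not, add $e$ to $T$; the result is still a forest and remains the MST of the edges seen so far by the cut property. If $u$ and $v$ are already connected in $T$, then $T$ contains a unique $u$-$v$ path $P$; let $e^\star$ be a maximum-weight edge on $P$. If $w(e) < w(e^\star)$ I replace $e^\star$ by $e$ in $T$, and otherwise I discard $e$. Both the connectivity query and the computation of $e^\star$ can be done by a simple traversal of $T$ in $O(|V'|)$ time per update, which suffices since the lemma only constrains space.

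Correctness follows by an easy induction on the number of arrivals: assuming $T$ is an MST of the prefix of the stream seen so far, after processing $e$ the set $T \cup \{e\}$ either remains a forest (and is an MST by the cut property applied to the two previously disconnected components) or contains exactly one cycle $C = P + e$. In the latter case the cycle property says that any MST of the new prefix omits some maximum-weight edge of $C$; removing such an edge from $T \cup \{e\}$ yields an MST, and this is precisely what the update rule does. The main (very minor) point to be careful about is the handling of ties and of the case where $e$ itself is a heaviest edge of $C$: the rule ``keep $e^\star$, discard $e$'' handles this correctly because any consistent tie-breaking yields a valid MST. Thus after the stream terminates, $T$ is an MST of all links on $V'$, and only $O(|V'|)$ space was ever used.
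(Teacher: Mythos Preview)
Your proof is correct and is exactly the standard cycle-rule algorithm for maintaining an MST in an insertion-only stream. The paper does not give its own proof of this lemma; it simply cites it from McGregor's survey on graph streams, and the algorithm underlying that citation is precisely the one you describe.
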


\subsubsection{The Streaming Algorithm}

The streaming algorithm is as follows. For each vertex $u \in V$ and each weight bucket $[(1+\eps)^j, (1+\eps)^{j+1})$, we store the link $uv \in L$ in this weight bucket with $LCA(u,v)$ closest to the root; these will be stored in the dictionaries $L_u$ defined in Algorithm \ref{alg:vc_1_to_2}. Furthermore, for each $u \in G$, we consider a contracted graph with $C(u)$ nodes, where each $G_v$ for $v \in C(u)$ is contracted into a node. We maintain an MST on this contracted graph. The details are formalized in Algorithm \ref{alg:vc_1_to_2} below. 

\begin{algorithm}[H]
    \KwIn{Weighted tree $G = (V, E)$ with root $r \in V$ and edge weights $w: E\rightarrow [1, W]$.}
    \caption{The streaming algorithm for 1-to-2 VCSS augmentation}
    \label{alg:vc_1_to_2}
    \tcc{\sc {Preprocessing}:}
    \For{$u \in V$}{
        Initialize an empty dictionary $L_u$ \\
        Construct $C'(u)$: for each $v \in C(u)$, contract $G_u$ into one ``supernode'' and add it to $C'(u)$ \\
        Define the graph $T'_u$ with vertex set $C'(u)$ and edge set $\emptyset$
    }

    \tcc{\sc {In Stream}:}
    \For {$e = uv \in L$ in the stream}{
        Let $j$ be weight class such that $w(e) \in [(1+\eps)^j, (1+\eps)^{j+1})$\\
        \If{$L_u(j)$ is undefined \textbf{or} $L_u(j) = uv'$ such that $d_G(r, LCA(u,v)) < d_G(r, LCA(u, v'))$}{
            $L_u(j) \gets e$            
        }
        \If{$L_v(j)$ is undefined \textbf{or} $L_v(j) = u'v$ such that $d_G(r, LCA(u,v)) < d_G(r, LCA(u', v))$}{
            $L_v(j) \gets e$            
        }
        Update MST $T_x'$ for $x = LCA(u,v)$ with link $uv$ using Lemma \ref{lem:mst_streaming}
    }
    \tcc{\sc {Postprocessing}:}
    $F = \cup_{u \in V} (E(T_u') \cup L_u)$ \\
    \Return An optimal solution on link set $F$
\end{algorithm}

\begin{lemma}\label{lem:vc_1_to_2_space}
    The number of links stored in Algorithm \ref{alg:vc_1_to_2} is $O(n \eps^{-1} \log W)$. 
\end{lemma}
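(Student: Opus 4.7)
The plan is to split the space bound into two independent contributions: the per-vertex weight-bucket dictionaries $L_u$ and the per-vertex contracted MSTs $T_u'$, and argue each contribution separately.

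First, I would bound the total size of the dictionaries. Since edge weights lie in $[1,W]$, the number of weight buckets $[(1+\eps)^j,(1+\eps)^{j+1})$ that can ever appear as a key is $O(\eps^{-1}\log W)$. By construction $L_u(j)$ stores at most one link, so $|L_u| = O(\eps^{-1}\log W)$ for every $u \in V$. Summing over all $n$ vertices yields $O(n\eps^{-1}\log W)$ links stored in $\bigcup_{u\in V} L_u$.

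Next, I would bound the total size of the stored MSTs. For each $u\in V$, the graph $T_u'$ is maintained as a minimum spanning tree on the vertex set $C'(u)$, which has $|C(u)|$ supernodes; invoking Lemma~\ref{lem:mst_streaming} with $V' = C'(u)$, the space used for $T_u'$ is $O(|C(u)|)$ (at most $|C(u)|-1$ links once the tree is populated). Since every non-root vertex of $G$ is a child of exactly one parent, $\sum_{u\in V}|C(u)| = n-1$, so the total storage across all MSTs is $O(n)$.

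Adding the two contributions gives the claimed $O(n\eps^{-1}\log W)$ bound. The only subtle point is to observe that Lemma~\ref{lem:mst_streaming} is directly applicable here even though the graphs $T_u'$ are defined on contracted vertex sets: the contraction is fixed up-front in preprocessing from the tree $G$, so each streaming link $uv$ with $\text{LCA}(u,v)=x$ contributes a well-defined link in the contracted graph on $C'(x)$ and the standard streaming MST maintenance procedure applies without modification. No other data structures consume link-level space, so this accounting is exhaustive.
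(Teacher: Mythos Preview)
Your proof is correct and follows essentially the same approach as the paper: split into the $L_u$ dictionaries (giving $O(n\eps^{-1}\log W)$) and the contracted MSTs (giving $O(n)$ via a sum over children). The only cosmetic difference is that you use $\sum_{u}|C(u)| = n-1$ directly, whereas the paper bounds it via $\sum_u \deg(u) = 2|E|$; both yield the same $O(n)$ conclusion.
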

\begin{proof}
    The set of links stored in Algorithm \ref{alg:vc_1_to_2} is exactly $F$. For each vertex $u \in V$, we store one link in $L_u$ per weight class. Thus $|\cup_{u \in V} L_u| = \sum_{u \in V} O(\eps^{-1} \log W) = O(n \eps^{-1} \log W)$. We bound the number of links in the spanning trees by 
    $|\cup_{u \in V} E(T_v')| \leq \sum_{u \in V}|C(u) - 1| \leq \sum_{u \in V} \deg(u) = 2|E|$. Since we assume $G$ is a tree, $|E| = n-1$, so $|\cup_{u \in V} E(T_v')| = O(n)$. 
\end{proof}

\subsubsection{Bounding Approximation Ratio}
For the rest of this section, we bound the approximation ratio by proving the following lemma. 

\begin{lemma}\label{lem:vc_1_to_2_approx}
    Let $F$ be the set of links stored in Algorithm \ref{alg:vc_1_to_2}. The weight of an optimal solution to $1$-VC-CAP on $(V, F)$ is at most $(3+\eps)$ times the optimal solution to $1$-VC-CAP $G = (V, E)$. 
\end{lemma}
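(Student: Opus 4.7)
The plan is to exhibit a subset $S \subseteq F$ that is itself a feasible $1$-VC-CAP solution and satisfies $w(S) \le (3+\eps)\,w(\opt)$; since the algorithm returns an optimal solution on the link set $F$, this suffices. Given an optimal $\opt$, I construct $S$ by adding, for every $e = uv \in \opt$ in weight bucket $j$, the two stored links $L_u(j)$ and $L_v(j)$; and, for every internal tree vertex $w$, all edges of the stored forest $T_w'$. The construction is purely analytic.

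For feasibility I will use the standard characterization that $G \cup L'$ is $2$-vertex-connected iff at every internal tree vertex $w$, the links of $L'$ not incident to $w$, viewed as edges of the contracted graph on the node set $\{C_1, \ldots, C_c, C_0\}$ where $C_i = G_{v_i}$ for each child $v_i$ of $w$ and $C_0 = V \setminus G_w$ (omitted when $w = r$), span this set. Fix such a $w$. I split the links of $\opt$ relevant at $w$ into internal links $I_w$ (those with $\text{LCA} = w$; each connects two $C_i, C_j$ with $i, j \ge 1$) and external links $X_w$ (those with $\text{LCA}$ a strict ancestor of $w$ and an endpoint in $G_w \setminus \{w\}$; each connects some $C_i$ to $C_0$). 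The key observations are: (i) for $e = uv \in X_w$ with $u \in G_w \setminus \{w\}$ in bucket $j$, the stored $L_u(j)$ has $\text{LCA}$ at least as close to the root as $\text{LCA}(e)$, which is already strictly above $w$; hence $L_u(j)$'s other endpoint lies in $V \setminus G_w = C_0$, and $L_u(j) \in S$ plays the same role as $e$ in attaching $C_i$ to $C_0$ in the contracted graph; and (ii) $T_w'$ is a minimum spanning forest of the graph whose vertex set is $\{C_1, \ldots, C_c\}$ and whose edges are \emph{all} internal links at $w$ in the stream, which is a superset of $I_w$; consequently every connected component of the $I_w$-subgraph lies in a single component of $T_w'$. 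Since $\opt$ is feasible, every connected component of the $I_w$-subgraph must be attached to $C_0$ by at least one link of $X_w$; the corresponding replacement $L_u(j) \in S$ attaches the enclosing $T_w'$-component to $C_0$. Therefore $S$ spans the contracted graph at every internal $w$, confirming feasibility.

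For cost, the replacement part contributes at most $\sum_{e \in \opt} 2(1+\eps)\,w(e) \le 2(1+\eps)\,w(\opt)$, since each of $L_u(j), L_v(j)$ lies in the same weight bucket as $e$. For the MSTs, the internal $\opt$-links at $w$ are themselves a spanning subgraph of each of their own connected components in $T_w'$'s underlying graph, so $w(T_w') \le w(I_w)$; summing over $w$, and using that each $\opt$-link is internal at exactly one vertex (its LCA), yields $\sum_w w(T_w') \le w(\opt)$. Thus $w(S) \le (3 + 2\eps)\,w(\opt)$, and rescaling $\eps$ by a factor of two in the bucketing gives the claimed $(3+\eps)$-approximation.

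The main technical point will be the feasibility argument: two structurally different pieces of $F$---the per-vertex external links $L_u(j)$ and the per-vertex internal MSTs $T_w'$---must combine to recover, at each cut vertex $w$, the connectivity of the contracted graph that $\opt$ provides. Once that is in place, the cost accounting and the bucket rescaling are routine.
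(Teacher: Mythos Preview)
Your feasibility argument is essentially correct, but the cost bound on the MST part fails. You assert $w(T_w') \le w(I_w)$ on the grounds that $I_w$ spans its own connected components; however $T_w'$ is the minimum spanning forest on $C'(w)$ with respect to \emph{all} stream links having $\text{LCA} = w$, not merely those in $\opt$. If the stream connects more children of $w$ than $I_w$ does, $T_w'$ spans strictly more and can have arbitrarily larger weight. Concretely: let $w$ have children $v_1, v_2, v_3$; let $\opt$ use a weight-$1$ link from $G_{v_1}$ to $G_{v_2}$ together with external links from $G_{v_1}$ and $G_{v_3}$ to $V\setminus G_w$; and let the stream additionally contain a weight-$W$ link from $G_{v_2}$ to $G_{v_3}$. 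Then $w(I_w)=1$ but $w(T_w') = 1 + W$. Hence $\sum_w w(T_w') \le w(\opt)$ is false in general, and adding all of $T_w'$ to $S$ does not give the claimed bound.

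The paper repairs this by adding only a \emph{sub}forest $T''_w \subseteq T'_w$: mark a child $v_i$ of $w$ ``good'' if some $\opt$-link runs from $G_{v_i}$ to $V \setminus G_w$, contract all good children into a single supernode to form $C''(w)$, and take $T''_w$ to be the MST on $C''(w)$. Feasibility of $\opt$ forces every bad child to reach some good child through $I_w$, so $I_w$ genuinely spans $C''(w)$; thus $w(T''_w) \le w(I_w)$ and the sum over $w$ is at most $w(\opt)$. The feasibility proof must then be adjusted: when the two children in question are both good, $T''_w$ no longer connects them, and one instead routes each through $C_0$ via the replacement links $L_{u''}(\cdot)$ --- this is exactly the case split the paper carries out.
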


Fix an optimal solution $\opt$ on $G = (V, E)$. 
To prove Lemma \ref{lem:vc_1_to_2_approx}, we provide an algorithm to construct a feasible solution $\sol \subseteq F$ such that $w(\sol) \leq (3 + \eps) w(\opt)$. Note that this algorithm is for analysis purposes only, as it requires knowledge of an optimal solution $\opt$ to the instance on $G = (V, E)$. 

\begin{algorithm}[H]
\caption{Construction of a solution in $F$ from $\opt$.}
\label{alg:vc_1_to_2_analysis}
    $\sol \gets \emptyset$ \\
    \For{$e = uv \in \opt$}{
        Let $j$ be weight class such that $w(e) \in [(1+\eps)^j, (1+\eps)^{j+1})$ \\
        $\sol \gets \sol \cup \{L_u(j), L_v(j)\}$
    }
    \For{$u \in V$}{
        \For{$v \in C(u)$}{
            \If{$\exists e \in \opt[G_v, G \setminus G_u]$}{
                Mark $v$ as ``good''
            }
        }
        Construct $C''(u)$ from $C'(u)$ (as defined in Algorithm \ref{alg:vc_1_to_2}) by contracting all supernodes corresponding to a ``good'' vertex $v \in C(u)$ into one ``good'' supernode \\
        $T''_u \subseteq T'_u \gets$ MST on $C''(u)$\\
        $\sol \gets \sol \cup T''_u$
    }
    \Return $\sol$
\end{algorithm}

We fix $\sol$ to be the set of links given by Algorithm \ref{alg:vc_1_to_2_analysis}. 

\begin{lemma}
\label{lem:vc_1_to_2_feasibility}
    $(V, E \cup \sol)$ is a 2-connected graph.
\end{lemma}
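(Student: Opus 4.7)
The plan is to verify 2-vertex-connectivity by showing that for every $u \in V$ the graph $(V, E \cup \sol) - u$ is connected. Since $(V,E)$ is a spanning tree, deleting $u$ partitions $V\setminus\{u\}$ into the ``outside'' component $V \setminus G_u$ (when $u \ne r$) and the subtrees $\{G_v : v \in C(u)\}$. I must exhibit a set of links in $\sol$, none incident to $u$, that joins all of these pieces together.

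For a \emph{good} child $v \in C(u)$, fix an $\opt$-edge $e = xy$ with $x \in G_v$ and $y \in V \setminus G_u$, and let $j$ be its weight class. Since $x \in G_u$ but $y \notin G_u$, $\text{LCA}(x,y)$ is a strict ancestor of $u$. The stored link $L_x(j) = xz$ is by construction the link of weight class $j$ incident to $x$ whose LCA is as close to the root as possible, so $\text{LCA}(x,z)$ is at least as close to the root as $\text{LCA}(x,y)$, forcing $z \in V\setminus G_u$. Hence $L_x(j) \in \sol$ connects $G_v$ to $V\setminus G_u$ using endpoints that both avoid $u$. For a \emph{bad} child, I will use $T''_u$: feasibility of $\opt$ means that in $(V, E\cup \opt) - u$ every bad child must reach the rest of $V\setminus\{u\}$, and since a bad child has no $\opt$-link to $V\setminus G_u$, its connecting $\opt$-links have both endpoints in $G_u\setminus\{u\}$ and therefore $\text{LCA} = u$. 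After contracting the good children into a single super-node, these $\opt$-links span $C''(u)$, so $T''_u$ exists. Each edge of $T''_u$ lies strictly inside $G_u\setminus\{u\}$ and so avoids $u$, and together they connect every bad child's subtree to the good super-node.

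The main obstacle is the LCA-based argument in the good-child case: one must verify that the monotone tiebreaker stored by Algorithm~\ref{alg:vc_1_to_2} really forces $L_x(j)$ to land outside $G_u$, which is exactly the point where the ``LCA closest to the root'' choice pays off. Combining the two cases, every component of $G-u$ is connected to $V\setminus G_u$ using only links of $\sol$ that avoid $u$, so $(V, E\cup\sol) - u$ is connected. The corner cases are straightforward: if $u = r$ there is no outside component and $T''_u$ alone connects the children's subtrees, while if $u$ is a leaf there is nothing to prove.
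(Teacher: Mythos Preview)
Your proof is correct and uses the same two ingredients as the paper: the LCA-monotonicity of $L_x(j)$ to connect each good child's subtree to $V\setminus G_u$, and the spanning tree $T''_u$ to attach the bad children. The organization differs slightly: the paper fixes the deleted vertex $a$ and then argues that every link $uv\in\opt$ with $u,v\neq a$ admits a replacement $u$--$v$ path in $E\cup\sol$ avoiding $a$, casing on whether $a=\text{LCA}(u,v)$; you instead analyze the components of $G-u$ directly and glue them together. Both routes use the good/bad dichotomy in the same way, and your handling of the corner cases ($u=r$, $u$ a leaf) matches what the paper needs. One step you leave implicit is that the $\opt$-links with LCA $=u$ spanning $C''(u)$ forces $T'_u$ (built from \emph{all} streamed links with LCA $=u$, a superset) to span $C''(u)$ after contraction; this is routine since an MSF has the same components as the underlying graph.
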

\begin{proof}
    We want to show that for all $a \in V$, $(V, E \cup \sol) \setminus \{a\}$ is connected. Fix $a \in V$. Since $\opt$ is feasible, it suffices to show that for all $uv \in \opt$ with $u, v \neq a$, there exists a $u$-$v$ path in $E \cup \sol$ that does not use $a$.

    Fix $uv \in \opt$. Consider the unique $u$-$v$ path in $E$ (recall that $G$ is a tree). If this path does not contain $a$, then we are done. Thus we assume $a$ is on the $u$-$v$ tree path. We case on whether or not $a$ is the $LCA$ of $u$ and $v$. Let $b = LCA(u,v)$. Let $u', v' \in C(b)$ be the children of $b$ such that $u \in G_{u'}$ and $v \in G_{v'}$.

    \begin{itemize}[leftmargin=*]
        \item \textbf{Case 1:} First, suppose $a = b$. If $u'$ or $v'$ are not marked as ``good'' in Algorithm \ref{alg:vc_1_to_2_analysis}, then $u'$ and $v'$ correspond to separate supernodes of $C''(a)$ and thus $G_{u'}$ and $G_{v'}$ must be connected via $T''_{a}$. Both $G_{u'}$ and $G_{v'}$ remain connected despite the deletion of $a$ since $u', v'$ are children of $a$; thus in this case, $u$ and $v$ are connected via $T''_a$. See Figure \ref{fig:vc_1_to_2_case1} for reference. Note that this case is the only one in which we use the contracted MST $T_{a}''$.
        
        Suppose instead that \emph{both} $u'$ and $v'$ are marked as ``good'', then we will show that they are connected via $G \setminus G_a$ (see Figure \ref{fig:vc_1_to_2_case1}). Since $G_{u'}$ is good, there must be some link $e_u \in \opt[G_{u'}, G \setminus G_a]$. Let $u''$ be the vertex incident to $e_u$ in $G_{u'}$, and let $j$ be the weight class of $e_u$. Then $L_{u''}(j)$ must have an endpoint in $G \setminus G_a$, since 
        \[d_G(r, LCA(L_{u''}(j))) \leq d_G(r, LCA(e_u)) < d_G(r, a).\]
        Furthermore, $L_{u''}(j)$ is included in $\sol$ in Algorithm \ref{alg:vc_1_to_2_analysis}, since $e_u \in \opt$. 
        Similarly, there must be some $v'' \in G_{v'}$ such that $L_{v''} \cap \sol$ contains a link with one endpoint in $G \setminus G_a$. Since $G \setminus G_a$ remains connected despite the deletion of $a$, we obtain our desired $u$-$v$ path. 
        \item \textbf{Case 2:} If $a \neq b$, then $a$ is either in $G_{u'}$ or in $G_{v'}$. Suppose without loss of generality that $a \in G_{u'}$; the other case is analogous.
        Let $j$ be the weight class of $uv$. Then 
        \[d_G(r, LCA(L_{u}(j))) \leq d_G(r, b) < d_G(r, a);\]
        the first inequality holds since $v \in G \setminus G_b$. Thus $L_{u}(j)$ has one endpoint in $G \setminus G_a$. Furthermore, $L_u(j) \in \sol$ since $uv \in \opt$. Since $G \setminus G_a$ remains connected despite the deletion of $a$, this gives us our desired $u$-$v$ path in $E \cup \sol$. See Figure \ref{fig:vc_1_to_2_case2} for reference.
    \end{itemize}
\end{proof}

\begin{figure}
    \centering
    \begin{subfigure}{0.45\textwidth}
        \centering
        \includegraphics[scale=0.6]{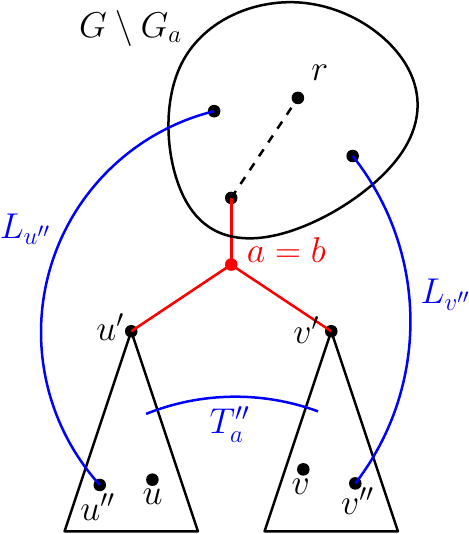}
        \caption{Example of Case 1. If $u'$ or $v'$ are not marked ``good'', $u$ and $v$ are connected via the link in $T_{a}''$ shown. Else, they are connected using the links in $L_{u''}$ and in $L_{v''}$.}
        \label{fig:vc_1_to_2_case1}
    \end{subfigure}%
    \hfill
    \begin{subfigure}{0.45\textwidth}
        \centering
        \includegraphics[scale=0.6]{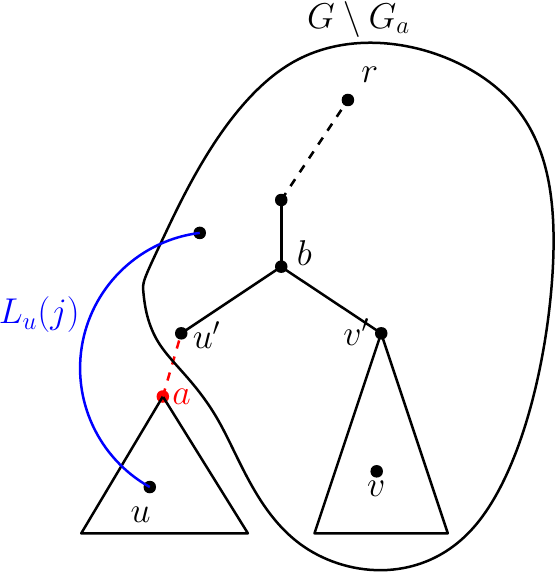}
        \caption{Example of Case 2. Here $u \in G_a$, $v \in G \setminus G_a$, and $a \in G_{u'}$.}
        \label{fig:vc_1_to_2_case2}
    \end{subfigure}
    \caption{Examples for Lemma \ref{lem:vc_1_to_2_feasibility}. $a$ represents the deleted vertex and $b = LCA(u,v)$. Dashed lines represent paths while solid edges represent edges.}
    \label{fig:vc_1_to_2_eg}
\end{figure}

\begin{lemma}
\label{lem:vc_1_to_2_weight}
    The link set $\sol$ has weight at most $(3 + \eps)w(\opt)$. 
\end{lemma}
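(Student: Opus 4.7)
The plan is to split $\sol$ into the two pieces produced by Algorithm~\ref{alg:vc_1_to_2_analysis} and bound them separately. Let $\sol_1 = \bigcup_{uv \in \opt}\{L_u(j_{uv}), L_v(j_{uv})\}$, where $j_{uv}$ denotes the weight class of $uv$, and let $\sol_2 = \bigcup_{u \in V} T''_u$, so that $\sol = \sol_1 \cup \sol_2$. The bound on $\sol_1$ is immediate: each of $L_u(j_{uv})$ and $L_v(j_{uv})$ lies in the same bucket $[(1+\epsilon)^{j_{uv}}, (1+\epsilon)^{j_{uv}+1})$ as $uv$, so its weight is at most $(1+\epsilon)w(uv)$. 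Summing over $uv \in \opt$ gives $w(\sol_1) \le 2(1+\epsilon)w(\opt)$.

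The nontrivial part is to show $\sum_{u \in V} w(T''_u) \le w(\opt)$, where I would charge each link $e = xy \in \opt$ entirely to the tree at its LCA. The key structural claim is that for every $u \in V$, the set $\opt_u = \{e \in \opt : \mathrm{LCA}(e) = u\}$ induces a connected spanning subgraph of the contracted graph $C''(u)$. To prove it, I would consider the auxiliary multigraph $A_u$ whose nodes are $\{G_v : v \in C(u)\}$ together with an ``outer'' node representing $G \setminus G_u$ (omitted when $u = r$), and whose edges come from links of $\opt$ connecting distinct components. A link of $\opt$ contributes to $A_u$ iff its LCA equals $u$ (producing an edge between two children supernodes) or its LCA lies strictly above $u$ (producing an edge from a child $G_v$ to the outer node, in which case $v$ is ``good'' by the definition in Algorithm~\ref{alg:vc_1_to_2_analysis}). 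Since $(V, E \cup \opt)$ is $2$-connected, deleting $u$ leaves the graph connected, so $A_u$ is connected; contracting the outer node with all good children produces exactly $C''(u)$, still connected, and every surviving edge corresponds to a link of $\opt_u$. For the root case $u = r$ there is no outer node and the same reasoning applies to $C''(r) = C'(r)$.

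Given the claim, $\opt_u$ contains a spanning tree $\opt''_u$ of $C''(u)$ with $w(\opt''_u) \le \sum_{e \in \opt_u} w(e)$. To compare this with $T''_u$, I would invoke the MST cut property: any edge of a minimum spanning tree of $C''(u)$ is the lightest edge across some cut in $C''(u)$, which lifts to a cut in $C'(u)$ that keeps all good supernodes together on one side; the same edge remains minimum across the lifted cut and hence belongs to the stored MST $T'_u$. Thus the MST of $C''(u)$ restricted to $T'_u$ coincides with the global MST of $C''(u)$ on all streamed links with LCA $=u$, giving $w(T''_u) \le w(\opt''_u)$. Summing over $u$ and using that each link of $\opt$ has a unique LCA yields $w(\sol_2) \le w(\opt)$. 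Combining with the bound on $\sol_1$ produces $w(\sol) \le (3 + 2\epsilon)w(\opt)$, and rescaling by running Algorithm~\ref{alg:vc_1_to_2} with bucketing parameter $\epsilon/2$ gives the stated $(3+\epsilon)$ factor. The main obstacle is the connectivity claim for $C''(u)$; the MST-contraction step is a clean cut-property argument, and the rest reduces to careful accounting.
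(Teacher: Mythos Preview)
Your proposal is correct and follows essentially the same approach as the paper: split $\sol$ into the $L_u,L_v$ contributions (bounded by $2(1+\eps)w(\opt)$ via bucketing) and the $T''_u$ contributions (bounded by $w(\opt)$ by partitioning $\opt$ according to LCA and showing each $\opt_u$ spans $C''(u)$), then rescale $\eps$. Your auxiliary-graph argument for connectivity of $C''(u)$ is a slightly more formal packaging of the paper's ``every bad supernode reaches a good one via $E^*(a)$'' reasoning, and your explicit cut-property justification that the MST of $C''(u)$ computed from $T'_u$ agrees with the global MST of $C''(u)$ is a point the paper leaves implicit; both are welcome clarifications but do not constitute a different route.
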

\begin{proof}
    In the first part of Algorithm \ref{alg:vc_1_to_2_analysis}, for each link $uv \in \opt$, we add two links to $\sol$, each of weight at most $(1+\eps)w(uv)$. Thus the total weight of links added to $\sol$ in the first part is at most $2(1+\eps)w(\opt)$. 

    We claim that the second part of the algorithm costs at most $\opt$. To show this, consider a partition of $\opt$ based on the LCA of its endpoints: let $E^*(a) = \{e \in \opt: LCA(e) = a\}$. For each $v \in C(a)$, there must be some path in $E \cup \opt$ from $G_v$ to $G \setminus G_a$ \emph{not} containing $a$; else $a$ would be a cut-vertex separating $G_v$ from the rest of the graph, contradicting feasibility of $\opt$. Thus this path must contain some edge $e \in \opt[G_a \setminus \{a\}, G \setminus G_a]$. By definition, the endpoint of $e$ in $G_a \setminus \{a\}$ must be in a ``good'' supernode of $C'(a)$. Thus all ``bad'' supernodes of $C'(a)$ are connected to at least one ``good'' supernode of $C'(a)$ in $\opt$. Furthermore, a minimal path in the contracted graph from a ``bad'' supernode to a ``good'' supernode only uses links in $E^*(a)$, since these links all must go between subtrees of $a$. Thus in $E^*(a)$, all ``bad'' supernodes are connected to at least one ``good'' supernode. In particular, $E^*(a)$ contains a spanning tree on the contracted graph with vertices $C''(a)$. Therefore 
    \[\sum_{a \in V} w(E(T''_a)) \leq \sum_{a \in V} w(E^*(a)) \leq w(\opt).\]
    Thus the total weight of $\sol$ is at most $(3 + 2\eps) w(\opt)$. We can run the algorithm with $\eps/2$ to obtain the desired approximation ratio. 
\end{proof}

By Lemmas \ref{lem:vc_1_to_2_feasibility} and \ref{lem:vc_1_to_2_weight}, Algorithm \ref{alg:vc_1_to_2_analysis} provides a feasible solution to $1$-VC-CAP on $(V, F)$ with weight at most $(3 + \eps) w(\opt)$; this concludes the proof of Lemma \ref{lem:vc_1_to_2_approx}. This, combined with Lemma \ref{lem:vc_1_to_2_space}, concludes the proof of Theorem \ref{thm:vc_1_to_2_main}.

\subsection{Two-to-Three Augmentation}
\label{sec:vc_2_to_3}

In this section, we prove the following theorem:

\begin{theorem}\label{thm:vc_2_to_3_main}
    There exists a streaming algorithm for the $2$-VC-CAP problem with edge weights $w: E \to [1, W]$, in an insertion-only stream, that uses $O(n \eps^{-1} \log W)$ space and outputs a $(7 + \eps)$-approximate solution. 
\end{theorem}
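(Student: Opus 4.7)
The plan is to extend the strategy of Section~\ref{sec:vc_1_to_2} by replacing the underlying tree with the SPQR tree $T$ of the given 2-vertex-connected graph $G$, which can be built during preprocessing since $G$ is supplied offline in the link-arrival model. Root $T$ arbitrarily. The structural fact we rely on is that every 2-vertex cut of $G$ corresponds either to an edge of $T$ (a split pair shared between two adjacent triconnected components) or to a split pair internal to an S-, P-, or R-node of $T$. Augmenting $G$ to 3-vertex-connectivity is equivalent to choosing a link set that ``covers'' every such separation pair in the sense that some chosen link avoids the two cut vertices and reconnects the two sides of $G$ left after their removal.

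The streaming algorithm maintains two kinds of information, mirroring Algorithm~\ref{alg:vc_1_to_2}. First, for each vertex $u\in V$ and each weight bucket $[(1+\eps)^j,(1+\eps)^{j+1})$, store the incident link $uv$ whose other endpoint lies in the ``highest reaching'' subtree of $T$ relative to the triconnected component of $u$—the analog of the closest-to-root $\text{LCA}$ criterion. Second, for every node $a$ of $T$, maintain a streaming MST $T'_a$ on a suitably contracted graph over the children of $a$ via Lemma~\ref{lem:mst_streaming}; this handles the ``within-node'' 2-cuts (cycle-like in an S-node, parallel-bundle-like in a P-node, and split-pair-like in an R-node). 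Because $T$ has $O(n)$ nodes and $O(n)$ virtual edges in total, and each of the $n$ vertices contributes $O(\eps^{-1}\log W)$ stored links, the overall space is $O(n\eps^{-1}\log W)$ as required.

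For the approximation, fix an optimal solution $\opt$ and construct (only for analysis) a solution $\sol\subseteq F$ in the spirit of Algorithm~\ref{alg:vc_1_to_2_analysis}. For each $uv\in\opt$, include the two stored best-links near $u$ and $v$ at the weight class of $w(uv)$. For each SPQR node $a$, declare a child ``good'' if $\opt$ contains a link crossing the corresponding separation pair to a part of $G$ outside the subtree rooted at $a$, contract the good children into one supernode, and include a sub-MST $T''_a\subseteq T'_a$ on the resulting contracted children. Feasibility is then checked by a case analysis on each 2-cut $\{x,y\}$ of $G$—whether $\{x,y\}$ is a $T$-edge cut, an S-node split pair, a P-node split pair, or an R-node split pair—showing in each case that some link of $\sol$ avoids $\{x,y\}$ and reconnects the sides, in direct parallel with the two cases of Lemma~\ref{lem:vc_1_to_2_feasibility}. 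The weight bound decomposes into (i) the two best-links per $\opt$ edge, contributing $2(1+\eps)w(\opt)$; (ii) the contracted MST contribution at each SPQR node, charged to $\opt$ via an argument analogous to Lemma~\ref{lem:vc_1_to_2_weight} (at node $a$, every bad child is forced by feasibility of $\opt$ to be connected to a good child via $\opt$-links whose $T$-path enters $a$); and (iii) an additional charge from covering R-node separation pairs. These sum to at most $(7+O(\eps))w(\opt)$ and one rescales $\eps$ by a constant to finish.

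The principal obstacle is the treatment of R-nodes. In 1-VC-CAP the host is simply a tree and every 2-cut is a single vertex separating two subtrees, so one ``highest-$\text{LCA}$'' link per weight class is enough. In the 2-to-3 setting, R-node split pairs arise from the way virtual edges attach in the 3-connected skeleton and do not correspond to tree adjacencies, so a single stored link per vertex per weight class must simultaneously cover tree-edge cuts, S-node cactus-like cuts, P-node parallel cuts, and R-node split pairs. Reconciling these coverage requirements within a $O(n\eps^{-1}\log W)$ storage budget, while ensuring that the constructed $\sol$ remains feasible across all cut types and that the charging of the contracted MSTs to $\opt$ goes through, is the technical heart of the proof and is the source of the larger constant $(7+\eps)$ compared with the $(3+\eps)$ for 1-VC-CAP.
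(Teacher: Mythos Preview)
Your plan correctly identifies the SPQR tree as the right structure, but it misdiagnoses where the real difficulty lies and consequently omits the key algorithmic ingredient.

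\textbf{R-nodes are not the obstacle.} An R-node is, by definition, a 3-connected skeleton; it has no internal 2-cuts. The only 2-cuts of $G$ that involve an R-node are the virtual edges incident to it, and those correspond to \emph{edges} of $T$. Lemma~\ref{lem:spqr_cuts} says exactly this: every 2-cut of $G$ is either (i) the vertex pair of a P-node, (ii) a virtual edge on a tree edge touching an R-node, or (iii) two non-adjacent vertices of an S-node's cycle. Cases (i) and (ii) are precisely the ``tree cuts'' and are handled by the $\text{LCA}$-closest-to-root links together with the contracted MSTs at P-nodes---a direct analogue of Section~\ref{sec:vc_1_to_2}. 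There is no ``R-node split pair'' case and no extra charge (iii) of the kind you describe.

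\textbf{S-nodes are the obstacle, and your MST does not cover them.} For an S-node $x$ whose skeleton is a cycle $\mu_0,\dots,\mu_k$, any pair of non-adjacent cycle vertices $\{\mu_i,\mu_j\}$ is a 2-cut, and it splits the cycle (and hence the children of $x$) into two \emph{arcs}. An MST on the contracted children of $x$ only guarantees a spanning tree among the children; it does not guarantee a link crossing every interval partition induced by $\{\mu_i,\mu_j\}$. The paper therefore stores, for every S-node and every cycle position (including dummy vertices for virtual edges), the $\Min$ and $\Max$ links per weight class, in the spirit of cycle augmentation for 3-edge-connectivity. This is the missing component in your proposal.

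The $7+\eps$ factor in the paper decomposes as $2(1+\eps)$ from the LCA-links (one per tree node, not per graph vertex), plus $1$ from the MSTs at P-nodes only, plus $4(1+\eps)$ from the S-node $\Min/\Max$ links: for each $uv\in\opt$ one shows at most three S-nodes are relevant (the LCA of $\ell(u),\ell(v)$ if it is an S-node, and at most one S-node containing $u$ not in its parent pair, and likewise for $v$), contributing at most four links. Your accounting (MSTs at all nodes, plus an extra R-node term) does not match this and, more importantly, would not yield feasibility for the cycle cuts.
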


Throughout this section, to avoid confusion with two-vertex cuts $\{u,v\}$, we will denote all edges and links as $uv$ or $(u,v)$. Following the notation in Section \ref{sec:vc_1_to_2}, for any edge set $E' \subseteq E \cup L$ and any vertex sets $A, B \subseteq V$, we let $E'[A, B]$ denote the set of edges in $E'$ with one endpoint in $A$ and the other in $B$. We start with some background on SPQR trees.

\subsubsection{SPQR Trees}

An SPQR tree is a data structure that gives a tree-like decomposition of a 2-connected graph into \emph{triconnected} components. SPQR trees were first formally defined in \cite{spqr-tree} although they were implicit in prior work \cite{di1989incremental}, and the ideas build heavily on work in \cite{hopcroft_tarjan_spqr}.
We start with some definitions and notation. Note that the following terms are defined on \emph{multigraphs} (several parallel edges between a pair of nodes are allowed), even though we assume the input graph $G$ is simple. This is because the construction of SPQR trees follows a recursive procedure that may introduce parallel edges. Let $G = (V, E)$ be a 2-connected graph.

\begin{definition}
    A \emph{separation pair} $\{a, b\}$ is a pair of vertices $a, b \in V$ such that at least one of the following hold:
    \begin{enumerate}
        \item $G \setminus \{a,b\}$ has at least two connected components.
        \item $G$ contains at least two parallel edges $ab$ \emph{and} $G \setminus \{a, b\} \neq \emptyset$. 
    \end{enumerate}
    We define corresponding \emph{separation classes} as follows:
    \begin{enumerate}
        \item If $G \setminus \{a,b\}$ has at least two connected components $C_1, \dots, C_k$, the separation classes are $E[C_i] \cup E[C_i, \{a, b\}]$ for each $i \in [k]$. If there are any edges of the form $ab$, we add an additional separation class $E'$ containing all such parallel edges between $a$ and $b$.
        \item If not, the separation classes are $E_1 = \{e: e = ab\}$ (all parallel edges between $a$ and $b$) and $E_2 = E(G) \setminus E_1$ (all edges except parallel edges between $a$ and $b$). 
    \end{enumerate}
    Note that the separation classes partition the edge set $E(G)$.
    See Figure \ref{fig:vc_2_to_3_sep-class} for an example.
\end{definition}

\begin{figure}
    \centering
    \begin{subfigure}{0.45\textwidth}
        \centering
        \includegraphics[width=0.7\linewidth]{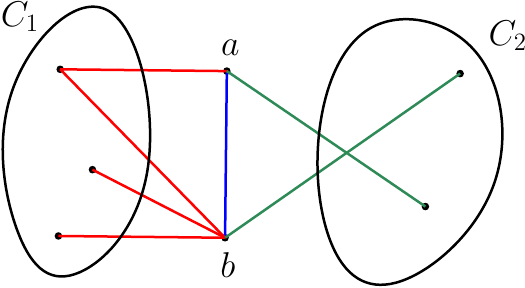}
        \caption{$G \setminus \{a, b\}$ has two components: $C_1, C_2$. There are three separation classes: red edges within $C_1$ and between $C_1$ and $\{a, b\}$, green edges within $C_2$ and between $C_2$ and $\{a, b\}$, and the blue edge $ab$.}
    \end{subfigure}%
    \hfill
    \begin{subfigure}{0.45\textwidth}
        \centering
        \includegraphics[width=0.45\linewidth]{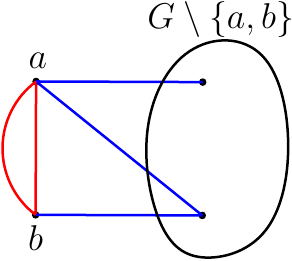}
        \caption{$G \setminus \{a, b\}$ has only one component, but there are at least two parallel edges between $a$ and $b$, and $G \setminus \{a, b\}$ is non-empty. There are two separation classes: red parallel edges between $a$ and $b$, and all other edges in blue.}
    \end{subfigure}
    \caption{Example of separation pair $\{a, b\}$ and corresponding separation classes shown with different colors.}
    \label{fig:vc_2_to_3_sep-class}
\end{figure}

Next, we define two operations: ``split'' and ``merge''.
\begin{definition}
    The \emph{split} operation on a 2-connected graph $G = (V, E)$ with $|E| \geq 4$ and separation pair $\{a, b\}$ is defined as follows. Let $E_1, \dots, E_k$ be the separation classes of $\{a,b\}$. For a subset of indices $I \subseteq [k]$, let $E_I' = \cup_{i \in I} E_i$ and $E_I'' = \cup_{i \notin I} E_i$. Choose $I$ arbitrarily such that $|E_I'| \geq 2, |E_I''| \geq 2$. We introduce a new \emph{virtual edge} $e = ab$, and define two new multigraphs:
    $G' = (V(E_I'), E_I' \cup \{e\}), G'' = (V(E_I''), E_I'' \cup \{e\})$. The \emph{split} operation replaces $G$ with \emph{split graphs} $G', G''$. 
\end{definition}

It is easy to verify that for a 2-connected graph $G$, the split graphs $G', G''$ (constructed using any separation pair) are also $2$-connected.

\begin{definition}
    The \emph{merge} operation merges two split graphs back to their original graph. Given split graphs $G' = (V', E')$ and $G'' = (V'', E'')$ with the same virtual edge $e$, the merge operation returns the graph $G = (V' \cup V'', (E' \cup E'') \setminus \{e\})$. 
\end{definition}

With these operations in hand, we are ready to define SPQR trees.

\begin{definition}
    The SPQR tree $T$ of a graph $G$ is a tree-like data structure in which each node of $T$ corresponds to some subgraph of $G$, with additional ``virtual'' edges. Formally, the SPQR tree $T$ is obtained from $G$ via the following recursive procedure:
    \begin{itemize}
        \item If $G$ has no separation pair, then $T$ is the singleton tree with one node $x$ and no edges. We write $G_x := G$ to denote the graph corresponding to tree node $x$. 
        \item If $G$ has a separation pair $\{a, b\}$, we split $G$ on $\{a, b\}$ to obtain split graphs $G', G''$. Let $e = ab$ be the corresponding virtual edge. We recursively construct SPQR trees $T', T''$ on $G', G''$. Let $x', x''$ be the nodes of $T', T''$ respectively containing edge $e$ -- this is well defined, since the split operation partitions the edges of $G$ and $e$ is contained in both $G'$ and $G''$. We let $T = T' \cup T'' \cup (x', x'')$; that is, we combine $T'$ and $T''$ via an edge between $x'$ and $x''$. We say the tree edge $(x', x'')$ is associated with the virtual edge $e$. 
    \end{itemize}
    Then, we postprocess $T$ as follows: if there are two adjacent tree nodes $x, y$ with $G_x, G_y$ both dipoles (i.e. $|V(G_x)| = |V(G_y)| = 2$), we merge $G_x$ and $G_y$. Similarly, if $x, y$ are adjacent tree nodes and both $G_x$ and $G_y$ are simple cycles, we merge $G_x$ and $G_y$. See Figure \ref{fig:vc_2_to_3_SPQR} for an example of the SPQR tree of a 2-connected graph. 
\end{definition}

\begin{figure}
    \centering
    \begin{subfigure}{0.3\textwidth}
        \centering
        \includegraphics[width=0.7\linewidth]{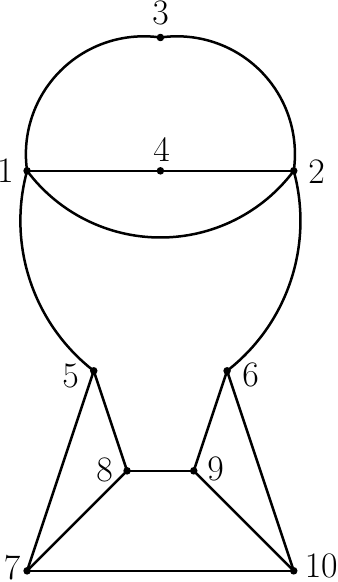}
        \caption{2-connected graph $G$.}
    \end{subfigure}%
    \hfill
    \begin{subfigure}{0.6\textwidth}
        \centering
        \includegraphics[width=0.8\linewidth]{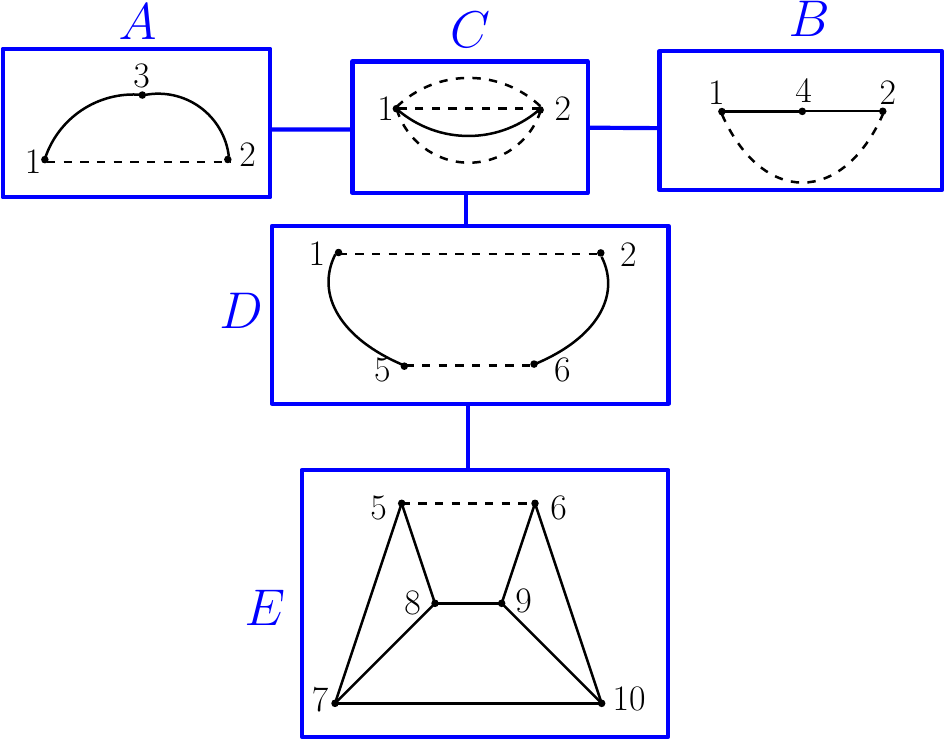}
        \caption{SPQR Tree $T$ with S-nodes $A, B, D$, P-node $C$, R-node $E$. Virtual edges are dashed.}
    \end{subfigure}
    \caption{Example of an SPQR tree $T$ constructed from 2-connected graph $G$. Suppose we root the tree at node $C$. As defined in Section \ref{sec:vc_2_to_3_algo}, $\parent(E) = \{5, 6\}$, and $A, B, D$ all have parent $\{1, 2\}$ (corresponding to different virtual edges). In this example, $h(1) = h(2) = C$, but $\ell(1), \ell(2)$ can be chosen arbitrarily from $\{A, B, D\}$. Vertices $3$ and $4$ are not included in any virtual edges; thus $h(3) = \ell(3) = A$ and $h(4) = \ell(4) = B$. The same holds for vertices $7,8,9,10$; $h$ and $\ell$ both map to $E$. Finally, $h(5) = h(6) = D$ and $\ell(5) = \ell(6) = E$.
    } 
    \label{fig:vc_2_to_3_SPQR}
\end{figure}

We use the following properties of SPQR trees, given mostly by \cite{hopcroft_tarjan_spqr,spqr-tree}. The linear-time implementation of Lemma \ref{lem:spqr_linear_time} was given by \cite{gutwenger2000linear}.

\begin{lemma}
    Each virtual edge is in exactly two tree nodes and is associated with a unique tree edge. Each edge in $E(G)$ is in exactly one tree node.
\end{lemma}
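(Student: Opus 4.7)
The plan is to prove both claims jointly by induction on the depth of the recursive construction of the SPQR tree. The induction hypothesis, applied to any 2-connected multigraph $H$ and its recursively built tree $T_H$, is: (i) each edge of $H$ (treated as an ordinary edge within this invocation) appears in exactly one node of $T_H$; (ii) each virtual edge introduced by some split during this invocation lies in exactly two nodes of $T_H$ and is associated with the unique tree edge of $T_H$ joining those two nodes.

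The base case is immediate: if $H$ has no separation pair, $T_H$ is a single node containing $H$, no virtual edges are created, and both claims hold trivially. For the inductive step, suppose $H$ is split along $\{a,b\}$ into $H' = (V(E_I'), E_I' \cup \{e\})$ and $H'' = (V(E_I''), E_I'' \cup \{e\})$. The crucial observation is that in the recursive calls on $H'$ and $H''$, the new virtual edge $e$ is just one of their ordinary edges, so the induction hypothesis places $e$ in exactly one node $x'$ of $T_{H'}$ and exactly one node $x''$ of $T_{H''}$. Forming $T_H = T_{H'} \cup T_{H''} \cup \{(x',x'')\}$, the edge $e$ is thereby a virtual edge of $T_H$ that sits in exactly the two nodes $x', x''$, and it is associated with the newly introduced tree edge $(x', x'')$ by construction. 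Since the separation classes partition $E(H)$, every original edge of $H$ lies in exactly one of $H'$ or $H''$ and hence, by induction, in exactly one tree node of $T_H$. All other virtual edges were created strictly inside the recursions and, by induction on $T_{H'}$ or $T_{H''}$, already appear in two nodes associated with a unique tree edge, which is preserved by the union.

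Finally, I would verify that the postprocessing does not break the invariant. Merging two adjacent tree nodes $x,y$ whose graphs are both dipoles (or both cycles) corresponds to executing the merge operation on $G_x$ and $G_y$ across their shared virtual edge $e_{xy}$: this removes $e_{xy}$ from the edge set of the resulting combined node and deletes the tree edge $(x,y)$, so $e_{xy}$ is simply no longer a virtual edge anywhere. Every other virtual edge still lives in exactly two nodes with its associated tree edge intact, and every original edge still lies in exactly one node. The only mild subtlety, which I flag as the main point to handle carefully, is the shift in role of the outer virtual edge $e$: it is virtual from the point of view of the outer invocation but ordinary inside the recursive calls, and one must track this to conclude that it appears in exactly \emph{one} node of each subtree before the two subtrees are glued together.
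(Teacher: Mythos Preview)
Your inductive argument along the recursive construction of the SPQR tree is correct and is the natural proof given the definition the paper adopts. The base case, the handling of the newly introduced virtual edge $e$ (ordinary inside the recursive calls, virtual in the outer tree), the partition of $E(H)$ by separation classes, and the postprocessing merge are all handled properly.

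Note, however, that the paper does not provide its own proof of this lemma: it is stated without proof as one of several structural properties of SPQR trees attributed to prior work (Hopcroft--Tarjan and Di Battista--Tamassia). So there is no proof in the paper to compare against; your argument simply supplies what the paper omits, and does so in the way one would expect from the recursive definition given there.
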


\begin{lemma}\label{lem:spqr_size}
    For any 2-connected graph $G$ with SPQR tree $T$, $\sum_{x \in V(T)} |E(G_x)| \leq 3|E(G)| - 6$.
\end{lemma}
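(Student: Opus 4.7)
The plan is to establish the bound by a direct double-counting argument, relying on two structural facts about the SPQR tree $T$ together with a uniform lower bound of $3$ on the number of edges in each skeleton $G_x$.

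First, I would record the following identity. Every real edge of $G$ belongs to exactly one skeleton $G_x$, while every virtual edge belongs to exactly two skeletons (namely the two endpoints of the tree edge it is associated with). Hence
\[
\sum_{x \in V(T)} |E(G_x)| \;=\; |E(G)| + 2\eta,
\]
where $\eta$ denotes the total number of virtual edges. Since virtual edges are in bijection with the edges of $T$, and $T$ is a tree, $\eta = |V(T)| - 1$.

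Second, I would show that $|E(G_x)| \ge 3$ for every skeleton via a case analysis on the node type in the post-processed SPQR tree. S-nodes are simple cycles on $\ge 3$ vertices, so they have $\ge 3$ edges. R-nodes are $3$-connected simple graphs on $\ge 4$ vertices, so the minimum-degree bound gives $\ge 6$ edges. P-nodes consist of parallel edges between a pair of vertices; whenever a P-node is created by a split operation it carries at least $2$ real parallel edges (since the split rule demands $|E_I'|, |E_I''| \ge 2$) together with the newly introduced virtual edge, totalling $\ge 3$, and the post-processing step that merges adjacent P-nodes can only raise this count.

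Combining the two facts,
\[
3\,|V(T)| \;\le\; \sum_{x \in V(T)} |E(G_x)| \;=\; |E(G)| + 2\bigl(|V(T)|-1\bigr),
\]
which rearranges to $|V(T)| \le |E(G)| - 2$. Substituting back gives
\[
\sum_{x \in V(T)} |E(G_x)| \;\le\; |E(G)| + 2\bigl(|E(G)|-3\bigr) \;=\; 3|E(G)| - 6,
\]
as required. The only (mild) obstacle is the $|E(G_x)| \ge 3$ case analysis, and in particular confirming that P-nodes always carry at least three parallel edges after post-processing; everything else reduces to double-counting and a one-line rearrangement.
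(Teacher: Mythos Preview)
Your argument is correct. Note, however, that the paper does not actually prove this lemma: it is listed among the ``properties of SPQR trees, given mostly by \cite{hopcroft_tarjan_spqr,spqr-tree}'' and stated without proof. Your double-counting argument is the standard one and works exactly as written. The only place you express hesitation---whether P-nodes always carry at least three parallel edges---is in fact immediate from the paper's own characterization (the lemma just above the one you are proving states that a P-node is ``a graph with exactly two vertices and three or more parallel edges between them''), so you need not argue it from the split operation at all. With that simplification, the proof is a clean two-line count: $\sum_x |E(G_x)| = |E(G)| + 2(|V(T)|-1)$ together with $3|V(T)| \le \sum_x |E(G_x)|$ gives $|V(T)| \le |E(G)|-2$, and substituting back yields the bound.
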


\begin{lemma}\label{lem:spqr_linear_time}
    For any 2-connected graph $G$, there exists a linear-time algorithm to construct the SPQR tree $T$.
\end{lemma}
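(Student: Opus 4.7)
The plan is to adapt the classical algorithm of Hopcroft and Tarjan for finding triconnected components, with the corrections and refinements of Gutwenger and Mutzel, to compute the SPQR tree of $G$ in $O(|V|+|E|)$ time. The overall strategy has three phases: (i) compute the split components of $G$ (each a triangle, a bond, or a triconnected simple graph with virtual edges inserted), (ii) merge adjacent bonds into P-nodes and adjacent triangles (polygons) into S-nodes, and (iii) assemble the tree by linking components that share a virtual edge.

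For phase (i), I would first run a DFS on $G$ to obtain a DFS tree rooted arbitrarily, computing for each vertex its DFS number along with the standard lowpoint values. Using these auxiliary quantities one can characterize, in a purely combinatorial way, the separation pairs of $G$ as either ``type-1'' pairs (two vertices $\{a,b\}$ such that some DFS subtree has no back-edge escaping it above $b$) or ``type-2'' pairs (consecutive vertices on a DFS path whose subtrees cannot escape past them). Hopcroft--Tarjan show how to detect all such pairs during a second DFS that traverses edges in a carefully chosen ``palm tree'' order; when one is detected, the corresponding split is performed immediately by transferring the appropriate edges into a new component and inserting a virtual edge. Crucially, each edge and virtual edge is touched only a constant number of times throughout, which combined with the size bound from Lemma~\ref{lem:spqr_size} gives total running time $O(|V|+|E|)$.

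Phase (ii) is a straightforward pass over the collection of split components. I would maintain, for each virtual edge $e$, pointers to the (exactly two) components containing it. Whenever both endpoint components are bonds I merge them into a single larger bond, and whenever both are triangles that share only the one virtual edge $e$ and have consistent orientation I merge them into a longer polygon; the resulting merged components are exactly the P- and S-nodes required by the definition of the SPQR tree. Using a union-find structure (or, more simply, in-place linked-list manipulation since each virtual edge has a known pair of containers) this phase runs in linear time. Phase (iii) then creates one tree node per remaining component and, for each virtual edge, adds the tree edge linking its two containing components; this is again linear.

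The main obstacle is phase (i): the original Hopcroft--Tarjan algorithm as published contained an error in the detection of type-2 separation pairs, which Gutwenger and Mutzel identified and repaired. The difficulty is that the split operations performed during the DFS must not destroy the structural invariants (lowpoint values and path numbering) that are being used to detect subsequent separation pairs; a naive implementation either recomputes these after each split (losing linearity) or misses pairs. The fix uses a stack of ``candidate'' type-2 pairs processed in the correct post-order together with an extra bookkeeping scheme to update the lowpoint information locally whenever an edge is moved into a new split component. With this correction in place the component list produced is provably the unique decomposition into triangles, bonds, and triconnected simple graphs, and phases (ii)--(iii) then yield the SPQR tree exactly as specified.
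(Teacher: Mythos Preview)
The paper does not prove this lemma; it simply cites \cite{gutwenger2000linear} (Gutwenger--Mutzel's correction of the Hopcroft--Tarjan triconnected-components algorithm) for the linear-time construction. Your proposal is a faithful outline of exactly that cited algorithm, so it is consistent with---indeed more detailed than---what the paper provides.
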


\begin{lemma}
    Let $x \in V(T)$. Then, $G_x$ is exactly one of the following:
    \begin{enumerate}
        \item a graph with exactly two vertices and three or more parallel edges between them;
        \item a simple cycle;
        \item a three-connected graph with at least 4 vertices.
    \end{enumerate}
    We call these P-nodes, S-nodes, and R-nodes respectively.
\end{lemma}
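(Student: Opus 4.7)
The plan is to prove the classification by induction on the recursive construction of $T$, with a case analysis by $|V(G_x)|$ serving as the core step.

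First, I would establish two invariants preserved under splitting: (a) each split graph $G', G''$ is $2$-connected---the virtual edge $e = ab$ restores any path through the separation pair---and (b) each split graph has strictly fewer edges than $G$ while retaining $|E| \geq 3$, since $|E_I'|, |E_I''| \geq 2$ by choice of $I$ and we then add one virtual edge. This gives a well-founded induction terminating at base cases where $G_x$ has no separation pair.

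The core case analysis partitions $2$-connected multigraphs without a separation pair by vertex count. When $|V(G_x)| \geq 4$: failing condition~(1) of the separation-pair definition for every pair $\{a,b\}$ forces $G_x \setminus \{a,b\}$ to be connected, i.e., $G_x$ is $3$-vertex-connected; failing condition~(2) on every pair that has parallel edges (using $|V| \geq 4$ to guarantee $G_x \setminus \{a,b\} \neq \emptyset$) forces $G_x$ to be simple. These combine to give a simple $3$-connected graph with $\geq 4$ vertices (an R-node). When $|V(G_x)| = 3$: the same condition~(2) argument rules out parallel edges, leaving only the simple triangle. When $|V(G_x)| = 2$: $G_x$ is a dipole, and invariant (b) forces $\geq 3$ parallel edges, i.e., a P-node. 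At this stage the tree nodes fall into three classes: dipoles with $\geq 3$ parallel edges, simple triangles, and $3$-connected graphs with $\geq 4$ vertices.

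Finally, I would verify that postprocessing preserves this classification while upgrading triangles into arbitrary simple cycles. Two adjacent dipoles merged at their shared virtual edge $ab$ yield another dipole on $\{a,b\}$ with $|E(G_x)| + |E(G_y)| - 2 \geq 4$ parallel edges, still a P-node. Two adjacent ``cycle'' nodes sharing virtual edge $e = ab$ merge as follows: a further invariant of splitting is $V(G_x) \cap V(G_y) = \{a,b\}$, so $G_x \setminus \{e\}$ and $G_y \setminus \{e\}$ are internally vertex-disjoint $a$-$b$ paths, and their union is a simple cycle of greater length. Iterating turns any maximal contiguous block of triangles into an S-node that is a simple cycle of length $\geq 3$. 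Since no merge happens across the three classes, the final classification is exactly the P, S, R trichotomy claimed. The main obstacle is the case analysis at small $|V(G_x)|$ together with maintaining the $|E(G_x)| \geq 3$ invariant across splits, which is precisely what rules out ``dipoles with $2$ parallel edges'' or ``length-$2$ cycles'' as a fourth class; once these invariants are in place, the trichotomy falls out directly from the two conditions defining a separation pair.
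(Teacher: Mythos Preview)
The paper does not prove this lemma; it is stated as one of several structural properties of SPQR trees attributed to prior work (Hopcroft--Tarjan and Di Battista--Tamassia), with no argument given. Your proposal therefore goes beyond what the paper does by supplying a self-contained proof from the recursive definition the paper writes down.

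The argument itself is sound. The two invariants you maintain across splits---2-connectivity and $|E| \ge 3$---are exactly what is needed, and your case analysis on $|V(G_x)|$ correctly reads off the trichotomy from the two clauses in the separation-pair definition: clause~(1) failing for all pairs gives $3$-connectivity when $|V| \ge 4$, and clause~(2) failing rules out parallel edges whenever $|V| \ge 3$. The postprocessing step is handled correctly as well: merging two adjacent dipoles on virtual edge $ab$ yields another dipole with $\ge 4$ parallel edges, and merging two adjacent cycles along $ab$ gives a longer simple cycle because $V(G_x) \cap V(G_y) = \{a,b\}$ (this intersection property is itself a standard invariant of the split operation, which you invoke but do not prove; it follows since the separation classes live in distinct components of $G \setminus \{a,b\}$). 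One small point worth making explicit for completeness: when $|E(G_x)| = 3$, the graph never has a separation pair (the triple-edge dipole fails clause~(2) because $G_x \setminus \{a,b\} = \emptyset$, and the triangle fails both clauses), so the recursion always bottoms out without needing to split a graph that is too small.
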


\begin{remark}
    The name ``SPQR'' tree comes from the different types of graphs that tree nodes can correspond to. A ``Q''-node has an associated graph that consists of a single edge: this is necessary for the trivial case where the input graph $G$ only has one edge. In some constructions of SPQR trees, there is a separate Q-node for each edge in $E(G)$; however, this is unnecessary if one distinguishes between real and virtual edges and thus will not be used in this paper.
\end{remark}

\begin{lemma}\label{lem:spqr_cuts}
    Let $\{a, b\}$ be a 2-cut of $G$. Then, one of the following must be true:
    \begin{itemize}
        \item $\{a, b\}$ is the vertex set of $G_x$ for a P-node $x$. The connected components of $G \setminus \{a, b\}$ correspond to the subtrees of $T \setminus \{x\}$.
        \item There exists a virtual edge $e = ab$ associated with a tree edge $xy$ such that at least one of $x$ and $y$ is an R-node; the other is either an R-node or an S-node. The connected components of $G \setminus \{a, b\}$ correspond to subtrees of $(V(T), E(T) \setminus xy)$. 
        \item $a$ and $b$ are non-adjacent nodes of a cycle $G_x$ for an S-node $x$. The connected components 
        of $G \setminus \{a,b\}$ are the subtrees corresponding to the two components of $G_x \setminus \{a, b\}$. 
    \end{itemize}
\end{lemma}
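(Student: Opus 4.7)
The plan is to prove the lemma by induction on the recursive construction of the SPQR tree, combining two structural ingredients: (a) splits on a separation pair exactly ``record'' that pair as a virtual edge in $T$, and (b) the leaf-level graphs $G_x$ have only the trivially identifiable 2-cuts (vertex set for a P-node, non-adjacent pairs for an S-node cycle, none for an R-node). A useful auxiliary fact to establish first is that for any vertex $v\in V(G)$, the set of tree nodes $\{x: v\in V(G_x)\}$ induces a connected subtree of $T$, and that removing any tree edge $(x,y)$ (associated with virtual edge $e=a'b'$) partitions $V(G)\setminus\{a',b'\}$ into the two vertex sets obtained by taking the unions of $V(G_z)\setminus\{a',b'\}$ over the two sides of $T\setminus(x,y)$; this is a straightforward consequence of the split/merge operations and will be used repeatedly to translate tree separations into graph separations.

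For the base case, $T$ is a single node and $G_x\in\{$dipole, cycle without separation pair, $3$-connected graph$\}$. The three subcases respectively produce case 1 with no subtrees, case 3 (but a triangle has only adjacent pairs, so nothing to check), and no 2-cuts at all. For the inductive step, take a separation pair $\{a',b'\}$ of $G$, split into $G',G''$ with virtual edge $e=a'b'$, and form $T=T'\cup T''\cup (x',x'')$ from the inductively constructed trees (followed by post-processing merges). Given a 2-cut $\{a,b\}$ of $G$, either $\{a,b\}=\{a',b'\}$ or $\{a,b\}$ is a 2-cut of $G'$ or $G''$ (after identifying the virtual edge with a real edge); in the latter case the inductive hypothesis gives the required tree element inside $T'$ or $T''$, and the merge/postprocessing step is easily seen to preserve it (with the S-S and P-P mergers being what rules out, in the final tree, S-S or P-P virtual tree edges).

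The remaining work is the case $\{a,b\}=\{a',b'\}$. Here there is a virtual edge $ab$ associated with a tree edge $(x',x'')$. I would consider the types of $x'$ and $x''$ after post-processing. If one of them is a P-node with vertex set exactly $\{a,b\}$, we are in case 1 and the subtrees of $T\setminus\{x\}$ give precisely the components of $G\setminus\{a,b\}$ by the auxiliary fact. Otherwise, we argue that neither endpoint can be a P-node (a P-node's vertex set is exactly its two vertices, which would contradict our previous alternative), and by the post-processing rules they cannot both be S-nodes; hence at least one is an R-node and the other is an R- or S-node, giving case 2. When neither case applies—i.e.\ there is no tree node with vertex set $\{a,b\}$ and no virtual edge $ab$—the cut must sit inside a single tree node $G_x$; since P-nodes have only their vertex set as a cut and R-nodes are 3-connected, $x$ must be an S-node and $\{a,b\}$ a non-adjacent pair of its cycle, giving case 3, with the components determined by the two arcs of the cycle together with the subtrees hanging off each arc.

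The main obstacle will be the last point: verifying the component characterization after post-processing. When two adjacent S-nodes are merged into a larger cycle, the virtual edge between them disappears and what used to be a virtual-edge 2-cut becomes an ``internal'' non-adjacent pair cut of the merged cycle; one must check that the components promised by case 3 (the two cycle-arcs plus their hanging subtrees) coincide with the components that the pre-merge analysis would produce from the virtual edge. A careful bookkeeping using the fact that edges of $G$ are partitioned across tree nodes, and that merging identifies the two copies of the virtual edge without altering the incidence of real edges with $a$ and $b$, completes this step and yields the stated correspondence.
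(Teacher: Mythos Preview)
The paper does not prove this lemma. It is listed among several structural properties of SPQR trees that are attributed to \cite{hopcroft_tarjan_spqr,spqr-tree} (``We use the following properties of SPQR trees, given mostly by \ldots''), so there is no in-paper argument to compare your proposal against.

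Your sketch is essentially correct and follows the standard route taken in those references. Two organizational comments are worth making. First, your paragraph beginning ``When neither case applies\ldots'' is really a \emph{direct} argument on the final tree (if no P-node has vertex set $\{a,b\}$ and no virtual edge $ab$ exists, then $a,b$ must lie together in some $G_x$, which forces $x$ to be an S-node and $a,b$ non-adjacent on its cycle); it does not sit naturally inside the inductive case $\{a,b\}=\{a',b'\}$, where a virtual edge $ab$ \emph{does} exist by construction. In the inductive framing, case~3 instead emerges either via the inductive hypothesis (when $\{a,b\}$ is a 2-cut of $G'$ or $G''$ landing in one of their S-nodes) or via the S--S merge of $x'$ and $x''$ (the virtual edge $ab$ disappears and $a,b$ become non-adjacent on the merged cycle). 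Second, the dichotomy ``either $\{a,b\}=\{a',b'\}$ or $\{a,b\}$ is a 2-cut of $G'$ or $G''$'' deserves a one-line justification: if $a,b$ lie on opposite sides of the split (and are distinct from $a',b'$) then $G\setminus\{a,b\}$ is connected through $\{a',b'\}$, so this cannot happen; the remaining placements force $\{a,b\}$ to separate one of $G',G''$.

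With those clarifications, your plan is a valid proof of a lemma the paper simply quotes.
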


\subsubsection{The Streaming Algorithm}
\label{sec:vc_2_to_3_algo}

We fix a 2-connected graph $G = (V, E)$. We can assume without loss of generality that $G$ is an edge-minimal 2-vertex-connected graph, using the same reasoning as in the 1-to-2 augmentation setting. We start by constructing the SPQR tree $T$ of $G$; this can be done efficiently by Lemma \ref{lem:spqr_linear_time}. To avoid confusion, we refer to elements of $V(T)$ as ``nodes'' and elements of $V(G)$ as ``vertices''. We choose a root 
node $r$ of $T$. For each tree node $x \in V(T)$, we let $G_x$ denote its corresponding graph, we let $T_x$ denote the subtree of $T$ rooted at $x$, and let $C(x)$ denote the set of children of $x$. We define $\parent(x)$ as the virtual edge $ab$ of $G_x$ associated with the tree edge on the path from $x$ to $r$ (see Figure \ref{fig:vc_2_to_3_SPQR}); this is sometimes overloaded to refer to the set of vertices $\{a, b\}$. Recall from the construction of SPQR trees each vertex $u \in V(G)$ may have several ``copies'' in $T$; if $u$ is part of a separation pair, then it is duplicated in the ``split'' operation. It is not difficult to see that the set of all tree nodes containing a copy of $u$ forms a connected component of $T$. We denote by $h(u)$ and $\ell(u)$ the tree nodes containing $u$ that are the closest to and furthest from the root respectively (breaking ties arbitrarily); see Figure \ref{fig:vc_2_to_3_SPQR}. 

We provide a high-level description of the algorithm; this is formalized in Algorithm \ref{alg:vc_2_to_3}. By Lemma \ref{lem:spqr_cuts}, there are three possible type of 2-vertex cuts $\{a, b\}$ in $G$: (1) $\{a, b\}$ is the graph corresponding to a P-node, (2) $ab$ is a virtual edge associated with a tree edge incident to two nodes: one is an R-node and the other is an R-node or S-node, (3) $a$ and $b$ are non-adjacent nodes of a cycle $G_x$ for some S-node $x$. Intuitively, 
one can think of (1) as corresponding to a node being deleted in the tree, (2) 
corresponding to an edge being deleted in the tree, and (3) to handle connectivity within cycle nodes. 

\paragraph{``Tree'' Cuts:} To protect against 2-node cuts of type (1) and (2), we follow a similar approach to the 1-to-2 augmentation described in Section \ref{sec:vc_1_to_2}. The additional difficulty in this setting arises from the fact that $T$ may contain multiple copies of each vertex of $G$; thus a link $uv$ does not necessarily correspond to a unique ``tree link'' $xy \in V(T) \times V(T)$. Furthermore, a 2-vertex cut in $G$ may contain copies in several tree nodes. We handle this by strategically choosing to view a link $uv \in L$ as a tree link adjacent to either $h(u)$ or $\ell(u)$ (and $h(v)$ or $\ell(v)$) depending on which virtual edge failure we need to protect against. Specifically, we store the following:
\begin{itemize}
    \item for each tree node $x$, we store the link $uv$ with $x = h(u)$ minimizing $d_T(r, LCA(x, \ell(v)))$;
    \item for each P-node $x$, we construct the following contracted graph: for each child $y$ of $x$, we contract $(\cup_{z \in T_y} V(G_z)) \setminus V(G_x)$; this is the set of vertices of $G$ in the subtree $T_y$ \emph{not} including the two vertices in $G_x$. We store an MST on this contracted graph.
\end{itemize}

\paragraph{``Cycle'' Cuts:} To handle 2-vertex cuts of type (3), we build on ideas developed by \cite{KhullerT93,KhullerV94,JKMV24} for augmenting a cycle graph to be 3-\emph{edge}-connected. In the unweighted setting, the idea is simple. Suppose we are given some cycle $C = \{1, \dots, n\}$. Consider bidirecting all links, that is, each link $uv$ is replaced with two \emph{directed} links $(u,v)$ and $(v, u)$. The goal is, for every interval $[i, j]$, to include at least one link $(u,v)$ such that $u \notin [i, j]$ and $v \in [i, j]$: this corresponds to covering the cut $\{(i-1, i), (j, j+1)\}$, and $(u,v)$ goes from the component of $C \setminus \{(i-1, i), (j, j+1)\}$ containing vertex $1$ to the other component. It is easy to see that for $1 \leq u \leq u' < v$, the directed link $(u, v)$ covers a superset of the cuts covered by $(u', v)$, and the same holds for for $v \leq u' < u \leq n$; see Figure \ref{fig:vc_2_to_3_edge_cycle} for an example. Thus we can store at most two incoming links per vertex and obtain a 2-approximation in linear space. This can be generalized to weighted graphs using standard weight-bucketing ideas.

We employ a similar idea for covering 2-vertex cuts of type (3). Fix an S-node $x$. The main difference in this setting is that two components of $G_x \setminus \{a, b\}$ can be connected via links not in $G_x$: e.g. by a link between the two subtrees of $T$ corresponding to the two components of $G_x \setminus \{a, b\}$.
We label the vertices of $G_x$ along the cycle $\mu_0, \dots, \mu_{k}$, where $\{\mu_k, \mu_0\} = \parent(x)$. All indices are taken $\mod k+1$; this will sometimes be omitted for the ease of notation. For each \emph{virtual} edge $(\mu_i, \mu_{i+1})$, we subdivide the edge and consider a dummy vertex $\mu_{i, i+1}$. For these virtual edges, we let $T_{i, i+1}$ denote the subtree rooted at $(\mu_i, \mu_{i+1})$. We define a function $f_x$ on the vertices of $G$ that maps each vertex to its corresponding point on the cycle: $f_x(u) = \mu_{i, i+1}$ if $u \in T_{i, i+1} \setminus G_x$ and $f_x(\mu_i) = \mu_i$. See Figure \ref{fig:vc_2_to_3_algo_cycle} for an example.
We define a natural ordering $\prec_x$ on the vertices of the cycle in terms of their appearance on the cycle, so $\mu_{0, k} \prec_x \mu_0 \prec_x \mu_{0,1} \prec_x \mu_1 \prec_x \dots \prec_x \mu_k$. 
For each vertex $\mu$ on the cycle (original or dummy vertex) and each weight class $j$, we store the links $\Min_{\mu}(j) = \argmin_{(u,v) \in L}\{f_x(v): f_x(u) = \mu, w(u,v) \in [(1+\eps)^j, (1+\eps)^{j+1})\}$ and $\Max_{\mu}(j) = \argmax_{(u,v) \in L}\{f_x(v): f_x(u) = \mu, w(u,v) \in [(1+\eps)^j, (1+\eps)^{j+1})\}$. 

\begin{figure}
    \centering
    \begin{minipage}{0.4\textwidth}
        \centering
        \includegraphics[width=0.8\linewidth]{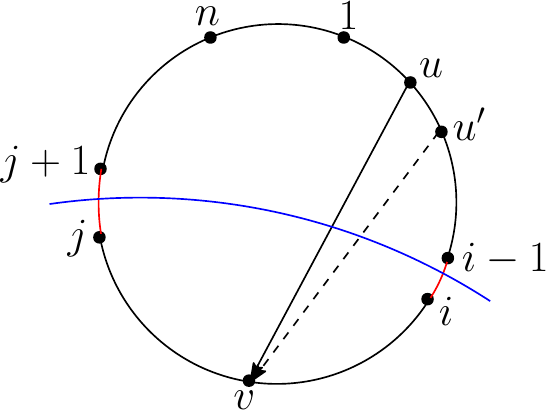}
        \caption{Example from \cite{JKMV24} to augment a cycle to be 3-edge-connected. The blue curved line indicates the cut that needs to be covered.}
        \label{fig:vc_2_to_3_edge_cycle}
    \end{minipage}%
    \hfill
    \begin{minipage}{0.5\textwidth}
        \centering
        \includegraphics[width=0.7\linewidth]{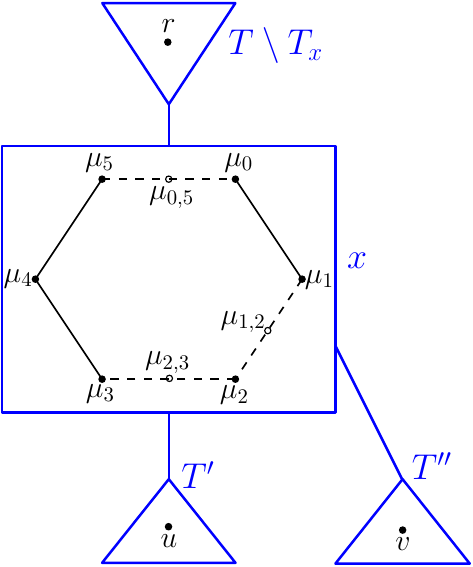}
        \caption{Example of S-node $x$ with new dummy vertices. $T'$ and $T''$ are subtrees of $x$ rooted at virtual edges $\{\mu_2, \mu_3\}$ and $\{\mu_1, \mu_2\}$ respectively. In this example, $f(u) = \mu_{2, 3}$ and $f(v) = \mu_{1, 2}$.}
        \label{fig:vc_2_to_3_algo_cycle}
    \end{minipage}
\end{figure}

\begin{algorithm}[H]
    \KwIn{An edge-minimal 2-vertex-connected graph $G = (V, E)$ with edge weights $w: E\rightarrow [1, W]$.}
    \caption{The streaming algorithm for 2-to-3 VCSS augmentation}
    \label{alg:vc_2_to_3}
    \tcc{\sc {Preprocessing}:}
    $T \gets$ SPQR tree for $G$; choose root $r \in V(T)$\\
    \For{$x \in V(T)$}{
        Initialize an empty dictionary $L_x$ \\
    }
    \For{$x \in V(T)$; $x$ is a P-node}{
        Construct $C'(x)$: for each $y \in C(x)$, contract $(\cup_{z \in T_y} V(G_z)) \setminus V(G_x)$ into supernode $y'$, add $y'$ to $C'(x)$ \\
        Define the graph $H_x$ on node set $C'(x)$ and edge set $\emptyset$
    }
    \For{$x \in V(T)$; $x$ is an S-node}{
        Create a dummy vertex for each virtual edge in $G_x$ \\
        \For{$w \in V(G_x)$ (original and dummy)}{
            Initialize empty dictionaries $\Min_w$ and $\Max_w$
        }
    }
    \tcc{\sc {In Stream}:}
    \For {$e = uv \in L$ in the stream}{
        Let $j$ be weight class such that $w(e) \in [(1+\eps)^j, (1+\eps)^{j+1})$\\
        \If{$v$ minimizes $d_T(r, LCA(h(u), \ell(v)))$}{
            Update $L_{h(u)}(j) \gets e$
        }
        \If{$u$ minimizes $d_T(r, LCA(h(v), \ell(u)))$}{
            Update $L_{h(v)}(j) \gets e$
        }
        \For{$x \in V(T)$; $x$ is a P-node}{
            Update MST $H_x$ with link $uv$ using Lemma \ref{lem:mst_streaming}
        }
        \For{$x \in V(T)$; $x$ is an S-node}{
            Update $\Min_{f_x(u)}(j)$, $\Max_{f_x(u)}(j)$, $\Min_{f_x(v)}(j)$, and $\Max_{f_x(v)}(j)$
        }
    }
    \tcc{\sc {Postprocessing}:}
    $F = (\cup_{x \in V(T)} L_x) \cup (\cup_{x \in V(T), x\text{ is P-node}} E(H_x)) \cup (\cup_{x \in V(T), x\text{ is S-node}} \cup_{\mu} \Min_\mu \cup \Max_\mu)$ \\
    \Return An optimal solution on edge set $F$
\end{algorithm}

In order to bound the space complexity of Algorithm \ref{alg:vc_2_to_3}, we use the following fact; this follows from well-known results including graph sparsification \cite{nagamochi1992linear} and ear-decompositions of 2-connected graphs. 

\begin{lemma}\label{cor:vc_2_to_3_sparse}
    Any edge-minimal 2-node-connected graph $G = (V, E)$ has at most $2n-2$ edges, where $n = |V|$.
\end{lemma}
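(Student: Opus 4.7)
My plan is to prove the bound via an open ear decomposition of $G$. Recall Whitney's theorem: a graph is $2$-vertex-connected if and only if it admits an open ear decomposition $G = C_1 \cup P_2 \cup \cdots \cup P_m$, where $C_1$ is a cycle of length at least $3$ and each $P_i$ for $i \geq 2$ is a path whose two (distinct) endpoints lie in $C_1 \cup P_2 \cup \cdots \cup P_{i-1}$ and whose internal vertices are new. I will fix such a decomposition of $G$ and then count vertices and edges using edge-minimality.

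The key claim is that if $G$ is edge-minimally $2$-vertex-connected, then the decomposition has no \emph{trivial} ears, i.e., no $P_i$ consisting of a single edge. Suppose for contradiction that $P_j = \{uv\}$ for some $j \geq 2$. Since the endpoints $u$ and $v$ already belong to $C_1 \cup P_2 \cup \cdots \cup P_{j-1}$, the sequence $C_1, P_2, \ldots, P_{j-1}, P_{j+1}, \ldots, P_m$ is still a valid open ear decomposition of $G - uv$: deleting the single edge $uv$ does not remove any vertex that later ears depend on. By Whitney's theorem, $G - uv$ is $2$-vertex-connected, contradicting the edge-minimality of $G$. Hence every ear $P_i$ has at least two edges, and therefore at least one internal (new) vertex.

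With this in hand, the counting is straightforward. If $P_i$ has $\ell_i \geq 2$ edges, it contributes $\ell_i - 1 \geq 1$ new vertices, so
\[
n = |V(C_1)| + \sum_{i=2}^{m} (\ell_i - 1), \qquad |E| = |E(C_1)| + \sum_{i=2}^{m} \ell_i = n + (m-1).
\]
Using $|V(C_1)| \geq 3$ and $\ell_i - 1 \geq 1$ gives $m - 1 \leq n - |V(C_1)| \leq n - 3$, and thus $|E| \leq 2n - 3 \leq 2n - 2$, which is the desired bound.

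The main conceptual step is the trivial-ear removal argument; once that is in place, everything reduces to an identity counting edges and new vertices contributed by each ear. I do not expect any real obstacle, since the only non-trivial ingredient (Whitney's ear decomposition) is a classical result and the removability of a trivial ear is essentially immediate from the fact that later ears depend only on the endpoints of earlier ears, not on any particular single edge.
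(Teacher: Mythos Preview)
Your proof is correct and fleshes out exactly the ear-decomposition argument that the paper only alludes to; the paper does not give a proof but simply remarks that the bound ``follows from well-known results including graph sparsification \cite{nagamochi1992linear} and ear-decompositions of 2-connected graphs.'' Your argument in fact yields the slightly sharper bound $|E|\le 2n-3$.
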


\begin{lemma}\label{lem:vc_2_to_3_space}
    The number of links stored in Algorithm \ref{alg:vc_2_to_3} is $O(n \eps^{-1} \log W)$. 
\end{lemma}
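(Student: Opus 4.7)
The plan is to bound the three disjoint contributions to $F$ separately: (a) the per-tree-node dictionaries $\bigcup_{x\in V(T)} L_x$, (b) the MSTs $\bigcup_{x\text{ P-node}} E(H_x)$, and (c) the per-cycle-vertex dictionaries $\bigcup_{x\text{ S-node}}\bigcup_{\mu} \Min_\mu \cup \Max_\mu$. The common ingredient is a linear bound on the total size of the SPQR tree obtained from Lemma~\ref{cor:vc_2_to_3_sparse} together with Lemma~\ref{lem:spqr_size}: since $G$ is edge-minimal $2$-connected, $|E(G)|\le 2n-2$, so $\sum_{x\in V(T)}|E(G_x)|\le 3|E(G)|-6 = O(n)$.

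First I would observe that every tree node $x$ of $T$ satisfies $|E(G_x)|\ge 3$ (P-nodes are triple bonds after postprocessing, S-nodes are cycles of length at least $3$, R-nodes are $3$-connected on $\ge 4$ vertices), which gives $|V(T)|\le \tfrac{1}{3}\sum_x |E(G_x)| = O(n)$. Each dictionary $L_x$ stores at most one link per weight class, and there are $O(\epsilon^{-1}\log W)$ classes, so contribution (a) is $|V(T)|\cdot O(\epsilon^{-1}\log W) = O(n\epsilon^{-1}\log W)$.

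For (b), the MST $H_x$ on the contracted graph $C'(x)$ has at most $|C(x)|-1$ edges, and since every non-root tree node is a child of exactly one node, $\sum_{x} |C(x)| = |V(T)|-1 = O(n)$, giving $O(n)$ links total. For (c), the number of (original plus dummy) vertices lying on the cycle $G_x$ of an S-node $x$ is exactly $2|V(G_x)| = 2|E(G_x)|$, and at each such vertex we keep $2$ links per weight class; summing over S-nodes gives $O\big(\sum_{x\text{ S-node}}|E(G_x)|\cdot \epsilon^{-1}\log W\big) = O(n\epsilon^{-1}\log W)$ by Lemma~\ref{lem:spqr_size}. Adding the three contributions yields the claimed $O(n\epsilon^{-1}\log W)$ bound.

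The only mildly delicate step is the bound $|V(T)| = O(n)$, since a priori an SPQR tree could have many tiny nodes; the postprocessing step that merges adjacent dipole P-nodes and adjacent S-node cycles is exactly what guarantees $|E(G_x)|\ge 3$ for every remaining node, allowing us to convert the edge-count bound from Lemma~\ref{lem:spqr_size} into a node-count bound. Once that is in hand, all three estimates are routine counting, and no argument depends on the weights beyond the standard $O(\epsilon^{-1}\log W)$ bucketing factor.
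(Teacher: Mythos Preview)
Your proof is correct and follows essentially the same approach as the paper: split $F$ into the three contributions (a), (b), (c), bound each in terms of $\sum_{x}|E(G_x)|$, and then invoke Lemma~\ref{lem:spqr_size} together with Lemma~\ref{cor:vc_2_to_3_sparse}. The only minor slip is that the number of original-plus-dummy vertices on an S-node cycle is \emph{at most} $|V(G_x)|+|E(G_x)|=2|E(G_x)|$ rather than exactly $2|V(G_x)|$, since dummy vertices are introduced only for virtual edges; this does not affect your bound.
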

\begin{proof}
    Let $B = O(\eps^{-1}\log W)$ be the number of weight classes.
    The set of links stored in Algorithm \ref{alg:vc_2_to_3} is exactly $F$. We bound each of the three sets of links separately. First, for each $x \in V(T)$, we store one link in $L_x$ per weight class. Thus 
    $|\cup_{x \in V(T)} L_x| \leq \sum_{x \in V(T)} B = |V(T)| \cdot B$.
    Next, we bound the number of links stored in the spanning trees. \[\left|\bigcup_{x \in V(T), x\text{ is P-node}} E(H_x)\right| \leq \sum_{x \in V(T), x\text{ is P-node}} |E(H_x)| \leq \sum_{x \in V(T)} |C(x)| - 1 \leq \sum_{x \in V(T)} \deg_T(x) = 2|E(T)|.\] 
    Finally, for each S-node $x$ and each dummy/original node $\mu$, $|\Min_\mu \cup \Max_\mu| \leq 2B$. We introduce at most one dummy node per edge of $G_x$, thus the total number of links stored for $x$ is at most $2B|V(G_x)| + 2B|E(G_x)|$. Since $G_x$ is a simple cycle, $|E(G_x)| = |V(G_x)|$; thus the number of links stored is $4B|E(G_x)|$. Combining, 
    \begin{align*}
        |F| \leq B \left[|V(T)| + 2|E(T)| + \sum_{x \in V(T)} 4|E(G_x)|\right] \leq B \left[7 \sum_{x \in V(T)} |E(G_x)|\right].
    \end{align*}
    By Lemma \ref{lem:spqr_size}, this is at most $O(B \cdot |E|)$. Corollary \ref{cor:vc_2_to_3_sparse} gives us our desired bound of $O(n \eps^{-1} \log W)$. 
\end{proof}

\subsubsection{Bounding the Approximation Ratio}

For the rest of this section, we bound the approximation ratio by proving the following lemma. 

\begin{lemma}\label{lem:vc_2_to_3_approx}
    Let $F$ be the set of links stored in Algorithm \ref{alg:vc_2_to_3}. The weight of an optimal solution to $2$-VC-CAP on $(V, F)$ is at most $(7+\eps)$ times the optimal solution to $2$-VC-CAP on $G = (V, E)$. 
\end{lemma}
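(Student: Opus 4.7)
The plan is to follow the template of Lemma~\ref{lem:vc_1_to_2_approx}. Fix an optimal solution $\opt$ for $2$-VC-CAP on $G$ and construct, for analysis only, a feasible $\sol\subseteq F$ with $w(\sol)\le(7+\eps)w(\opt)$. By Lemma~\ref{lem:spqr_cuts} every $2$-vertex cut of $G$ falls into one of three types -- P-node, virtual edge, or cycle cut inside an S-node -- so it suffices to argue coverage of each type separately and then combine the three cost contributions.

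For the construction, I would process each $e=uv\in\opt$ in its weight class $j$ and add to $\sol$ the two ``tree'' links $L_{h(u)}(j)$ and $L_{h(v)}(j)$ to handle cuts of types $(1)$ and $(2)$; for every S-node $x$ with $f_x(u)\neq f_x(v)$ I would also add the (at most four) extremal links $\Min_{f_x(u)}(j),\Max_{f_x(u)}(j),\Min_{f_x(v)}(j),\Max_{f_x(v)}(j)$ to handle type~$(3)$ cycle cuts at $x$. In parallel, for each P-node $x$ I would mark a child $y\in C(x)$ as \emph{good} if $\opt$ contains a link from the contracted supernode corresponding to $y$ to a vertex outside $(\cup_{z\in T_x}V(G_z))\setminus V(G_x)$, collapse all good supernodes of $C'(x)$ into one, and include the induced MST $H_x''\subseteq H_x$ in $\sol$.

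Feasibility then follows by casing on the three cut types from Lemma~\ref{lem:spqr_cuts}. A Type~(1) P-node cut $\{a,b\}=V(G_x)$ is handled exactly as in Case~1 of Lemma~\ref{lem:vc_1_to_2_feasibility}: either $H_x''$ directly bridges the two affected subtrees, or both are marked ``good'' and their stored $L_{h(\cdot)}(j)$-links route through the still-connected piece $G\setminus G_x$. A Type~(2) virtual-edge cut (associated with an R-node) is handled because the stored $L_{h(u)}(j)$ has $\text{LCA}$ at least as shallow in $T$ as that of the witnessing $\opt$-link, so it crosses the same cut. A Type~(3) cycle cut $\{a,b\}$ at an S-node $x$ is handled because any witnessing $uv\in\opt$ has $f_x(u),f_x(v)$ on opposite arcs of $G_x\setminus\{a,b\}$, so cyclic extremality of $\Min,\Max$ in $\prec_x$ forces at least one of the four stored extremal links to project across the cut, which is the local cycle-augmentation argument from~\cite{JKMV24,KhullerT93,KhullerV94} applied at $x$.

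The total cost decomposes as an $L$-contribution of at most $2(1+\eps)w(\opt)$, a P-node MST contribution of at most $w(\opt)$ (by a per-P-node charging identical to Lemma~\ref{lem:vc_1_to_2_weight}, using that each $\opt$-link covering a P-node cut has a unique SPQR-LCA in $T$), and a $\Min,\Max$ contribution of at most $4(1+\eps)w(\opt)$, summing to $7+O(\eps)$ and hence $7+\eps$ after rescaling. The main obstacle is this last bound: naively a single $\opt$-link may project nontrivially to many S-nodes along its tree path, threatening a blowup proportional to the SPQR depth. The resolution combines the local streaming $2$-approximation for cycle augmentation from~\cite{JKMV24} at each S-node with a global charging that assigns each $\opt$-link only to the S-nodes where its cyclic projection actually changes, using the postprocessing rule that no two S-nodes are adjacent in $T$; together with the careful bookkeeping for Type~(1) P-node cuts -- where copies of $a$ and $b$ may live in several distinct tree nodes and the ``good''-marking is more delicate than in the $1$-to-$2$ setting -- these are the two technically intricate parts of the argument.
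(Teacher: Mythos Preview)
Your overall structure matches the paper's, but the handling of S-node cycle cuts has a genuine gap. You propose to add, for every S-node $x$ with $f_x(u)\neq f_x(v)$, up to four $\Min/\Max$ links, and then recover the $4(1+\eps)w(\opt)$ bound via a ``global charging that assigns each $\opt$-link only to the S-nodes where its cyclic projection actually changes.'' This resolution is not correct as stated: for an interior S-node $x$ on the $\ell(u)$--$\ell(v)$ path, $f_x(u)$ is the dummy node of the child virtual edge toward $u$ and $f_x(v)=\mu_{0,k}$ is the parent dummy node, so the projection is nontrivial at \emph{every} such $x$, and non-adjacency of S-nodes in $T$ does not bound how many there are.

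The paper's key idea, which your proposal misses, is that for interior S-nodes you do not need $\Min/\Max$ at all: the already-included link $L_{h(u)}(j)$ handles those cuts. Concretely, if $x$ is an S-node on the path with $u\notin V(G_x)$ and $\ell(v)\in T\setminus T_x$, then $L_{h(u)}(j)$ has its LCA at least as shallow as $\text{LCA}(h(u),\ell(v))$ and therefore crosses the cycle cut at $x$ (this is Case~3b, third subcase, in the paper's feasibility proof). Consequently the construction only needs $\Min/\Max$ links at at most \emph{three} S-nodes per $\opt$-link: the LCA $x=\text{LCA}(\ell(u),\ell(v))$ if it is an S-node (two links there), plus at most one S-node with $u\in G_x\setminus\parent(x)$ and $\ell(v)\notin T_x$ (one $\Min_u(j)$), and symmetrically for $v$. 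That is what yields the clean $4(1+\eps)$ bound, not a charging over all S-nodes on the path. Your feasibility sketch for Type~(3) cuts also only covers the LCA case (the paper's Case~3a) and omits precisely this interior/endpoint distinction.
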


Fix some optimal solution $\opt$ on $G = (V, E)$. 
To prove Lemma \ref{lem:vc_2_to_3_approx}, we provide an algorithm to construct a feasible solution $\sol \subseteq F$ such that $w(\sol) \leq (7 + \eps) w(\opt)$. Note that this algorithm is for analysis purposes only, as it requires knowledge of an optimal solution $\opt$ to the instance on $G = (V, E)$. The construction of $\sol$ from the links in $\cup_x L_x$ (links that minimize LCA distance to root) and $\cup_x E(H_x)$ (MST links) follows a similar approach to the $1$-VC-CAP setting. The key technical challenge lies in the cycle case.
This is because a single link $uv \in L$ may be responsible for increasing the connectivity of multiple S-nodes in the tree, leading to a large blow-up in the approximation factor. To circumvent this issue, we show that for each $uv \in L$, we can restrict attention to at most three S-nodes in the tree: the LCA of $\ell(u)$ and $\ell(v)$, an S-node containing $u$, and an S-node containing $v$. Details are given in Algorithm \ref{alg:vc_2_to_3_analysis} and the following analysis.

\begin{algorithm}[h]
\caption{Construction of a solution in $F$ from $\opt$.}
\label{alg:vc_2_to_3_analysis}
    $\sol \gets \emptyset$ \\
    \For{$e = (u,v) \in \opt$}{
        Let $j$ be weight class such that $w(e) \in [(1+\eps)^j, (1+\eps)^{j+1})$ \\
        $\sol \gets \sol \cup \{L_{h(u)}(j), L_{h(v)}(j)\}$\\
        \If{$x = LCA(\ell(u), \ell(v))$ is S-node}{
            \If{$f_x(u) \prec_x f_x(v)$}{
                $\sol \gets \sol \cup \{\Min_{f_x(v)}(j), \Max_{f_x(u)}(j)\}$
            }
            \If{$f_x(u) \prec_x f_x(v)$}{
                $\sol \gets \sol \cup \{\Min_{f_x(u)}(j), \Max_{f_x(v)}(j)\}$
            }
        }
        \If{$\exists$ S-node $x$ such that $u \in G_x \setminus \parent(x)$, $\ell(v) \in T \setminus T_x$}{
            $\sol \gets \sol \cup \{\Min_{u}(j)\}$
        }
        \If{$\exists$ S-node $x$ such that $v \in G_x \setminus \parent(x)$, $\ell(u) \in T \setminus T_x$}{
            $\sol \gets \sol \cup \{\Min_{v}(j)\}$
        }
    }
    \For{$x \in V(T)$, $x$ is P-node}{
        \For{$y \in C(x)$}{
            \If{$\exists e = (u,v) \in \opt$ such that $h(u) \subseteq V(T_y)$ and $\ell(v) \subseteq V(T \setminus T_x)$}{
                Mark $y$ as ``good''
            }
        }
        Construct $C''(x)$ from $C'(x)$ (as defined in Algorithm \ref{alg:vc_2_to_3}) by contracting all supernodes corresponding to a ``good'' node $y \in C(x)$ into one ``good'' supernode \\
        $H_x' \subseteq H_x \gets$ MST on $C''(x)$\\
        $\sol \gets \sol \cup E(H_x')$
    }
    \Return $\sol$
\end{algorithm}

\smallskip
We fix $\sol$ to be the set of links given by Algorithm \ref{alg:vc_2_to_3_analysis}. 

\begin{lemma}
\label{lem:vc_2_to_3_weight}
    The edge set $\sol$ has weight at most $(7 + \eps)w(\opt)$. 
\end{lemma}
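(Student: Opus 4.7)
The plan is to split $w(\sol)$ into three pieces matching the three sources of links in Algorithm~\ref{alg:vc_2_to_3_analysis}: the ``closest-LCA'' links $L_{h(u)}(j), L_{h(v)}(j)$ added in the first per-edge loop; the $\Min/\Max$ cycle links added in the three S-node branches of that loop; and the MST edges $\bigcup_{x} E(H_x')$ added over all P-nodes. I bound the first two pieces by $O(1)\cdot w(\opt)$ using weight-bucketing together with a per-endpoint counting argument, and the third by $w(\opt)$ using an $E^*$-partitioning argument analogous to Lemma~\ref{lem:vc_1_to_2_weight}.

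For the non-MST pieces, the key structural claim I need is that for any fixed $u \in V$, the only tree node $x$ satisfying $u \in G_x \setminus \parent(x)$ is $x = h(u)$. Indeed, the tree nodes containing $u$ form a connected subtree $T(u) \subseteq T$ whose topmost node is $h(u)$, so any other such $x'$ is a strict descendant of $h(u)$ along $T(u)$; but then by the SPQR split operation, the two adjacent tree nodes along the edge from $x'$ to its parent share precisely the two endpoints of their connecting virtual edge, which forces $u \in \parent(x')$. Using this claim, for each $e = (u,v) \in \opt$ in weight bucket $j$, Algorithm~\ref{alg:vc_2_to_3_analysis} adds at most two links in Step~1, at most two links in the LCA branch (since the tree-LCA of $\ell(u),\ell(v)$ is a unique node of $T$), and at most one link in each of the $u$- and $v$-branches (by the uniqueness claim), for a total of at most six links per $e$. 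All six lie in bucket $j$ along with $e$, so each has weight at most $(1+\eps)w(e)$, giving a total non-MST contribution of at most $6(1+\eps)w(\opt)$.

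For the MST piece I follow the blueprint of Lemma~\ref{lem:vc_1_to_2_weight}. Partition $\opt$ by $E^*(x) := \{(u,v) \in \opt : \text{tree-LCA of } \ell(u), \ell(v) \text{ equals } x\}$. Fix a P-node $x$ with $V(G_x) = \{a,b\}$; by Lemma~\ref{lem:spqr_cuts}, $\{a,b\}$ is a 2-vertex cut of $G$ whose components correspond exactly to the subtrees hanging off $x$ in $T$. For a bad child $y \in C(x)$, feasibility of $\opt$ against this cut forces an $\opt$-path from $T_y$ to some other component of $G \setminus \{a,b\}$ that avoids $\{a,b\}$; such a path must use an $\opt$-link that jumps between two distinct subtrees of $x$, and every such link lies in $E^*(x)$. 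A transitivity argument within the contracted node set $C''(x)$ (where all good children have been coalesced into a single good supernode) then shows that $E^*(x)$ contains a spanning tree of $C''(x)$, giving $w(E(H_x')) \le w(E^*(x))$. Summing over P-nodes and using disjointness of the $E^*(x)$'s yields $\sum_{x} w(E(H_x')) \le w(\opt)$, so $w(\sol) \le 2(1+\eps)w(\opt) + 4(1+\eps)w(\opt) + w(\opt) = (7+6\eps)w(\opt)$; rerunning Algorithm~\ref{alg:vc_2_to_3} with parameter $\eps/6$ in place of $\eps$ gives the stated $(7+\eps)$-approximation. The main obstacle is this MST argument: one must verify carefully that the good/bad dichotomy together with the contraction into the single good supernode still allows $E^*(x)$-links alone to span $C''(x)$, in particular that bad children are routed to goods through intra-$E^*(x)$ connectivity rather than through links whose tree-LCA is above $x$.
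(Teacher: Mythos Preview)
Your proposal is correct and follows essentially the same approach as the paper: the same three-way split into LCA-links, cycle $\Min/\Max$ links, and P-node MST edges, with the same $2(1+\eps)+4(1+\eps)+1$ accounting and final $\eps/6$ rescaling; your structural claim that $h(u)$ is the unique tree node with $u\in G_x\setminus\parent(x)$ is a slightly sharper version of the paper's observation. The concern you flag about the MST piece is not a gap---since a bad child $y$ by definition has no $\opt$-link from $T_y$ to $T\setminus T_x$, the $(E\cup\opt)$-path from $T_y$ to $T\setminus T_x$ avoiding $V(G_x)$ must reach some good child $y''$ first, and every inter-child hop along the way is an $E^*(x)$-link, which is precisely the paper's argument.
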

\begin{proof}
    We first consider $\sol \cap \cup_x L_x$. 
    In the first part of Algorithm \ref{alg:vc_2_to_3_analysis}, for each link $uv \in \opt$, we add two links to $\sol$, each of weight at most $(1+\eps)w(uv)$. Thus $w(\sol \cap \cup_x L_x) \leq 2(1+\eps)w(\opt)$.
 
    Next, we consider $\sol \cap \cup_x E(H_x)$ and show that this has weight at most $\opt$; the reasoning is similar to the case of $1$-VC-CAP and we rewrite it here for completeness. 
    Consider a partition of $\opt$ based on the LCA of its endpoints: for each P-node $x$, let $E^*(x) = \{uv \in \opt: LCA(\ell(u), \ell(v)) = x\}$. For each node $y \in C(x)$, there must be some path in $E \cup \opt$ from the vertices of $G$ corresponding to $T_{y}$ to the vertices of $G$ corresponding to $T \setminus T_x$ \emph{without} using the two nodes in $V(G_x)$; else $V(G_x)$ would be a 2-cut separating the vertices corresponding to $T_{y}$ from the rest of the graph, contradicting feasibility of $\opt$. These paths must each have a link ``leaving'' $T_{x}$; that is, a link $e$ with one endpoint in the vertex set of $G$ corresponding to $T_{x}$ and the other in the vertex set of $G$ corresponding to $T \setminus T_{x}$. By construction, endpoint of $e$ in $T_x$ must be in a ``good'' supernode. Thus all ``bad'' supernodes are connected to at least one ``good'' supernode in $\opt$. Furthermore, a minimal path in the contracted graph from a ``bad'' supernode to a ``good'' supernode only uses links in $E^*$, since these links all must go between subtrees of $x$ and avoid $V(G_x)$. Thus in $E^*(x)$, all ``bad'' supernodes are connected to at least one ``good'' supernode. In particular, $E^*(x)$ contains a spanning tree on the $C''(x)$. Therefore 
    \[w(\sol \cap \cup_x E(H_x)) = \sum_{x \in V(T): x\text{ is P-node}} w(E(H'_x)) \leq \sum_{x \in V(T), x\text{ is P-node}} w(E^*(x)) \leq w(\opt).\]

    Finally, we bound the weight of $\sol \cap (\cup_x (\cup_\mu \Min_u \cup \Max_u))$. For each $uv \in \opt$, we consider at most 3 S-nodes
    \begin{itemize}
        \item If $LCA(\ell(u), \ell(v))$ is an S-node, we add 2 links to $\sol$ of weight at most $(1+\eps)w(uv)$ each;
        \item If $u \in G_x \setminus \parent(x)$ for some S-node $x$ and $v$ is outside the subtree rooted at $x$, then we add one link to $\sol$ of weight at most $(1+\eps)w(uv)$ -- there can be at most one such S-node, since if $u$ has a a copy in multiple S-nodes, then it must be in the parent of all but one of them;
        \item If $v \in G_x \setminus \parent(x)$ for some S-node $x$ and $u$ is outside the subtree rooted at $x$, then we add one link to $\sol$ of weight at most $(1+\eps)w(uv)$ -- there is at most one such S-node by the same argument as above.
    \end{itemize}
    Thus for each link $uv \in \opt$, we add a total weight of at most $4(1+\eps)$ to $\sol$. 

    Combining all of the above, $w(\sol) \leq [2(1+\eps) + 1 + 4(1+\eps)]w(\opt) = (7 + 6\eps)w(\opt)$. We run the algorithm with $\eps/6$ to obtain the desired approximation ratio. 
\end{proof}

\begin{lemma}
\label{lem:vc_2_to_3_feasibility}
    $(V, E \cup \sol)$ is a 3-vertex-connected graph.
\end{lemma}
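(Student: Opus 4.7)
The plan is to invoke Menger's theorem (Theorem~\ref{thm:vc-menger}) and reduce the claim to: for every 2-vertex cut $\{a,b\}$ of $G$ and every link $(u,v) \in \opt$ crossing it, there exists a $u$-$v$ path in $(V, E \cup \sol) \setminus \{a,b\}$. By Lemma~\ref{lem:spqr_cuts}, the 2-cuts of $G$ fall into three types defined by the SPQR tree $T$: (i) $\{a,b\} = V(G_x)$ for a P-node $x$; (ii) $\{a,b\}$ is the endpoint set of a virtual edge associated with a tree edge $xy$, where at least one of $x,y$ is an R-node; and (iii) $\{a,b\}$ is a non-adjacent pair on the cycle $G_x$ of an S-node $x$. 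I handle each type separately, combining the three kinds of stored links ($L_{\cdot}$, $E(H'_{\cdot})$, and $\Min/\Max$) with the case analysis in Algorithm~\ref{alg:vc_2_to_3_analysis}.

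For type (i) the argument mirrors that of the 1-to-2 augmentation (Lemma~\ref{lem:vc_1_to_2_feasibility}): the components of $G \setminus \{a,b\}$ are the child supernodes of $C'(x)$ together with, if $x \neq r$, the ``outside'' vertex set. If $(u,v)$ runs between two child supernodes, the MST $H'_x$ on $C''(x)$ contains a connecting path whose links, being between distinct supernodes of $C''(x)$, each cross the cut. If $(u,v)$ leaves $T_x$, the relevant child $y$ is marked \emph{good}, and a depth-of-LCA argument forces the stored link $L_{h(u)}(j)$ (with $j$ the weight class of $(u,v)$) to have its other endpoint outside $T_x$. For type (ii), assume WLOG that $y$ is a child of $x$ and that $v$ lies in the $T_y$-component; then $(u,v)$ is a valid candidate for $L_{h(v)}(j)$ with $\ell(u) \notin T_y$, so the stored link $(u^*, v^*) := L_{h(v)}(j)$ satisfies $d_T(r, \text{LCA}(h(v), \ell(u^*))) \leq d_T(r, x) < d_T(r, y)$, which forces $\ell(u^*)$ to leave $T_y$ and $(u^*, v^*)$ to cross the cut.

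Type (iii) is the main obstacle. Writing $\{a,b\} = \{\mu_p, \mu_q\}$, the two arcs of the cycle induce the two components of $G \setminus \{a,b\}$, and each child subtree of $x$ is uniquely assigned to one arc by the positions of its two boundary vertices in $V(G_x)$. If both $\ell(u), \ell(v) \in T_x$, I first verify $\text{LCA}(\ell(u), \ell(v)) = x$ (since $u,v$ in different arcs sit in different child subtrees of $x$, or else in $V(G_x)$ itself), so the analysis algorithm adds $\Max_{f_x(u)}(j)$ when $f_x(u) \prec_x f_x(v)$ and $\Min_{f_x(u)}(j)$ otherwise. Using the linear order $\mu_{k,0} \prec_x \mu_0 \prec_x \mu_{0,1} \prec_x \cdots \prec_x \mu_k$, one arc is a contiguous block strictly between $\mu_p$ and $\mu_q$ while the other is the concatenation of the prefix before $\mu_p$ and the suffix after $\mu_q$; a short case check then shows that the chosen $\Min$/$\Max$ link has its other endpoint at least as $\prec_x$-extreme as $f_x(v)$, hence in the opposite arc. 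If instead $v \notin T_x$ (WLOG), then $v$ lies in the component containing $T \setminus T_x$: when $u$ is an interior cycle vertex $\mu_i \in V(G_x) \setminus \parent(x)$, the third branch of Algorithm~\ref{alg:vc_2_to_3_analysis} adds $\Min_u(j)$, whose other endpoint is forced to $f_x$-value $\mu_{k,0}$ and therefore lies outside $T_x$; when $u$ lies inside some child subtree $T_y$ of $x$, the link $L_{h(u)}(j)$ suffices by the same depth-of-LCA argument as in type (ii). The technical friction throughout is that stored links are indexed by tree nodes $h(\cdot)$ or cycle positions $f_x(\cdot)$ rather than by the underlying vertex, so at each step one must verify both that the stored endpoint falls in the correct component and that it avoids $\{a,b\}$ itself; this bookkeeping is where the proof becomes most involved.
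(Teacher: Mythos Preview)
Your outline follows the same three-way case split on SPQR cut types as the paper, and your treatments of type (ii) and type (iii) match the paper's arguments closely (you use $\Max_{f_x(u)}/\Min_{f_x(u)}$ where the paper uses $\Min_{f_x(v)}/\Max_{f_x(v)}$, but both pairs are added to $\sol$ and the arguments are symmetric). However, there is a genuine gap in your handling of type (i), the P-node case.

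You write that if $(u,v)$ runs between two child supernodes of a P-node $x$, then ``the MST $H'_x$ on $C''(x)$ contains a connecting path.'' This fails precisely when both children $y_u, y_v \in C(x)$ (with $u$ in $T_{y_u}$ and $v$ in $T_{y_v}$) are marked \emph{good}: in $C''(x)$ all good supernodes are contracted into a single node, so $H'_x$ contains no edge between them and the MST argument gives nothing. Your second subcase (``if $(u,v)$ leaves $T_x$'') does not apply either, since here $(u,v)$ stays inside $T_x$. The paper resolves this by a separate argument: use the \emph{definition} of ``good'' to find (possibly unrelated) links $e_u = (u_1,u_2) \in \opt$ with $h(u_1) \in T_{y_u}$, $\ell(u_2) \in T \setminus T_x$, and similarly $e_v$ for $y_v$; then apply the depth-of-LCA argument to the stored links $L_{h(u_1)}(j')$ and $L_{h(v_1)}(j'')$ (in their own weight classes $j',j''$, not $j$) to obtain $\sol$-links from $T_{y_u}$ and $T_{y_v}$ into $T \setminus T_x$, and route $u$ and $v$ through $T \setminus T_x$, which remains connected after deleting $\{a,b\} = V(G_x)$. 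This indirect step, invoking goodness to pull in auxiliary $\opt$-links and then their $L_{h(\cdot)}$ replacements, is a key idea that your plan omits.
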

\begin{proof}
    We want to show that for all $\{a, b\} \in V$, $(V, E \cup \sol) \setminus \{a,b\}$ is connected. Fix $\{a,b\} \in V$. Since $\opt$ is feasible, it suffices to show that for all $uv \in \opt$ with $u, v \notin \{a, b\}$, there exists a $u$-$v$ path in $E \cup \sol$ that does not use $\{a,b\}$.

    Fix $uv \in \opt$ and let $j$ be the weight class such that $w(uv) \in [(1+\eps)^j, (1+\eps)^{j+1})$. By Lemma \ref{lem:spqr_cuts}, there are three cases in which $G \setminus \{a,b\}$ is disconnected: (1) $ab$ is a virtual edge incident to either two R-nodes or one R-node and one S-node, (2) $\{a, b\}$ is the vertex set associated with a P-node, and (3) $a$ and $b$ are non-adjacent nodes of an S-node. We consider each of these cases separately.
    
    \paragraph{Case 1: $\boldsymbol{e = ab}$ is a virtual edge of an R-node and an R or S-node:} Let $e'=xy \in E(T)$ be the tree edge associated with virtual edge $e$. We let $x$ denote the child node and $y$ the parent; note that this implies that $\{a, b\} = \parent(x)$. Since $x$ and $y$ are both R or S nodes, $G_x$ and $G_y$ each have at least 3 vertices. Furthermore, $G_x \setminus \{a, b\}$ and $G_y \setminus \{a,b\}$ are both connected: the R-node case is clear by definition, and cycles remain connected after the deletion of two adjacent nodes.

    First, suppose $u, v$ are both in the set of vertices of $G$ associated with $T_x$. Then, since $T_x$ remains connected despite the deletion of $e'$ and $G_x \setminus \{a,b\}$ is connected, there exists a $u$-$v$ path in $E$ avoiding $\{a, b\}$. Similarly, if $u, v$ are both in the set of vertices of $G$ associated with $T \setminus T_x$, then since $T \setminus T_x$ remains connected despite the deletion of $e'$ and $G_y \setminus \{a, b\}$ is connected, there exists a $u$-$v$ path in $E$ avoiding $\{a, b\}$. 

    Thus, one of $u$ and $v$ must be in $T_x$ and the other outside $T_x$. We assume without loss of generality that $u$ is associated with $T_x$ (i.e. $h(u) \in V(T_x)$) and $v$ with $T \setminus T_x$ (i.e. $\ell(v) \in V(T \setminus T_x)$). Let $(u', u'') = L_{h(u)}(j)$ where $h(u') = h(u)$; this must be in $\sol$ since $uv \in \opt$. By construction, $d_T(r, LCA(h(u), \ell(u''))) \leq d_T(r, LCA(h(u), \ell(v)))$ -- in particular, $LCA(h(u), \ell(u'')) \in T \setminus T_x$. See Figure \ref{fig:vc_2_to_3_case1} for reference.
    
    This gives us the following $u$-$v$ path in $\sol \cup E$ without using $\{a, b\}$. 
    \begin{itemize}
        \item $u \to u'$: since $u$ and $u'$ are both in the vertex set of $G_{h(u)}$, and all tree nodes stay connected despite the deletion of $\{a, b\}$, there is a path from $u$ to $u'$ in $E$ without $\{a, b\}$. Note that $u' \notin \{a, b\}$, since $h(u) = h(u') \in T_x$, and $h(a) = h(b) = y$.
        \item $u' \to u''$: as described above, this link is contained in $\sol$. Note that $u'' \notin \{a, b\}$: $LCA(h(u), \ell(u'')) \in T \setminus T_x$ implies that $\ell(u'') \in T \setminus T_x$ (since $h(u) \in T_x$), and $\ell(a) = \ell(b) = x$. 
        \item $u'' \to v$: since $G_y \setminus \{a, b\}$ is connected and $T \setminus T_x$ remains connected despite the deletion of $e'$, there is a $u''$-$v$ path in $E$ avoiding $\{a, b\}$. 
    \end{itemize}

    \begin{figure}
        \centering
        \includegraphics[scale=0.6]{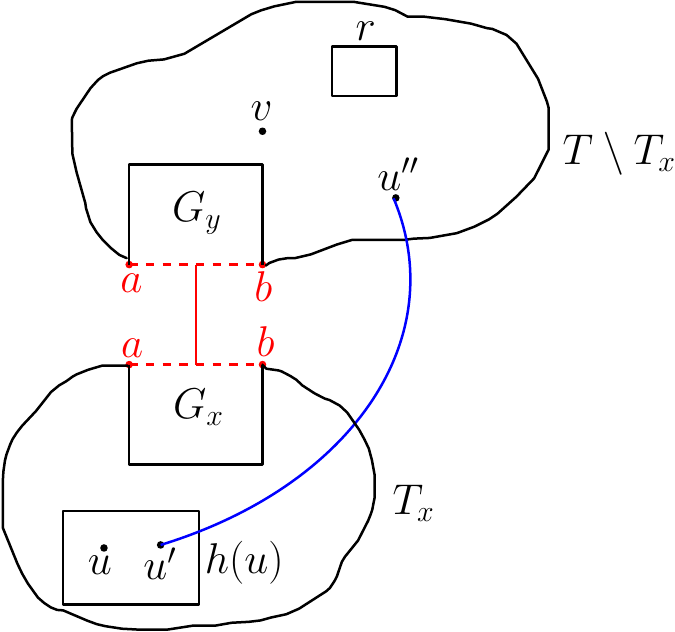}
        \caption{Case 1 example. Tree nodes are shown with boxes, while graph nodes are given by dots. Note that while the figure shows $h(u)$ and $x$ to be separate, it is possible that $h(u) = x$. Similarly, it is possible that $v, u'' \in G_y$, and possible that $y = r$.}
        \label{fig:vc_2_to_3_case1}
    \end{figure}

    \paragraph{Case 2: $\boldsymbol{V(G_x) = \{a,b\}}$ for a P-node $\boldsymbol{x}$:} Since no two P-nodes are adjacent to each other, all neighbors of $x$ in $T$ are R or S nodes. In particular, this means that for all $y \in V(T) \setminus \{x\}$, $G_y \setminus \{a, b\}$ is connected. Therefore, if $u, v$ are in the same component of $T \setminus \{x\}$, then there is a $u$-$v$ path in $E \setminus \{a, b\}$. Thus we assume $u$ and $v$ are in distinct components of $T \setminus \{x\}$. 
    \begin{itemize}[leftmargin=*]
        \item \textbf{Case 2a: $\boldsymbol{h(u), h(v)}$ are both in $\boldsymbol{T_x}$:} Let $T(u)$ and $T(v)$ be the subtrees of $T_x$ containing $h(u)$ and $h(v)$ respectively. Suppose either of the subtrees are rooted at nodes that are not ``good'' (as defined in Algorithm \ref{alg:vc_2_to_3_analysis}). Then, there must be two distinct supernodes $z, z'$ in $C''(x)$ such that all vertices associated with $T(u)$ are contracted into $z$ and all vertices associated with $T(v)$ are contracted into $z'$. Thus $T(u)$ and $T(v)$ must be connected by $H'_x$. 
        
        Therefore, we assume both subtrees are good. 
        We will show that $\sol$ contains a link $e_u$ between $T(u)$ and $T \setminus T_x$ and a link $e_v$ between $T(v)$ and $T \setminus T_x$ such that neither $e_u$ nor $e_v$ are incident to $\{a, b\}$. By the above discussion, since all components of $T \setminus x$ (and thus their corresponding vertices in $G$) remain connected despite the removal of $\{a, b\}$, this suffices to show that there exists a $u$-$v$ path in $E \cup \sol$ avoiding $\{a, b\}$. 

        Since $T(u)$ is good, there exists a link $(u_1, u_2) \in \opt$ with $h(u_1) \subseteq T(u)$ and $\ell(u_2) \subseteq T \setminus T_x$. In particular, this means that $LCA(h(u_1), \ell(u_2)) \in T \setminus T_x$. Let $j'$ be the weight class of $(u_1, u_2)$, and let $(u_3, u_4) = L_{h(u_1)}(j')$ with $h(u_1) = h(u_3)$. This must be in $\sol$ since $(u_1, u_2) \in \opt$. By construction, $d_T(r, LCA(h(u_1), \ell(u_2))) \leq d_T(r, LCA(h(u_1), \ell(u_4)))$ -- in particular, $LCA(h(u_1), \ell(u_4)) \in T \setminus T_x$, so $\ell(u_4) \in T \setminus T_x$. Note that $u_3 \neq \{a, b\}$ since $h(u_3) = h(u_1) \in T(u)$, and $h(a) = h(b) \in T \setminus T_x$. Furthermore, $u_4 \neq \{a, b\}$, since $\ell(u_4) \in T \setminus T_x$, but $\ell(a), \ell(b) \in T_x$. Thus $(u_3, u_4)$ is a link in $\sol$ between $T(u)$ and $T \setminus T_x$ that is not incident to $a$ or $b$, as desired. The analysis for $v$ is analogous. See Figure \ref{fig:vc_2_to_3_case2} for reference.
        \item \textbf{Case 2b: One of $\boldsymbol{u}$ and $\boldsymbol{v}$ is in $\boldsymbol{T_x}$ and the other is in $\boldsymbol{T \setminus T_x}$:} Without loss of generality, suppose $h(u) \in T_x$ and $\ell(v) \in T \setminus T_x$. In this case, we use the same argument as Case 1: let $(u', u'') = L_{h(u)}(j)$ where $h(u') = h(u)$; this must be in $\sol$ since $uv \in \opt$. Clearly $u' \notin \{a, b\}$, since $h(u') = h(u) \in T(u)$ and $h(a), h(b) \in T \setminus T_x$. By the same reasoning as in Case 1, $\ell(u'') \in T \setminus T_x$, so $u'' \notin \{a, b\}$, since $\ell(a), \ell(b) \in T_x$. Thus $E \cup \opt$ contains a $u$-$v$ path avoiding $\{a, b\}$ via the link $(u', u'')$. 
    \end{itemize}

    \begin{figure}
        \centering
        \begin{subfigure}{0.35\textwidth}
            \includegraphics[scale=0.5]{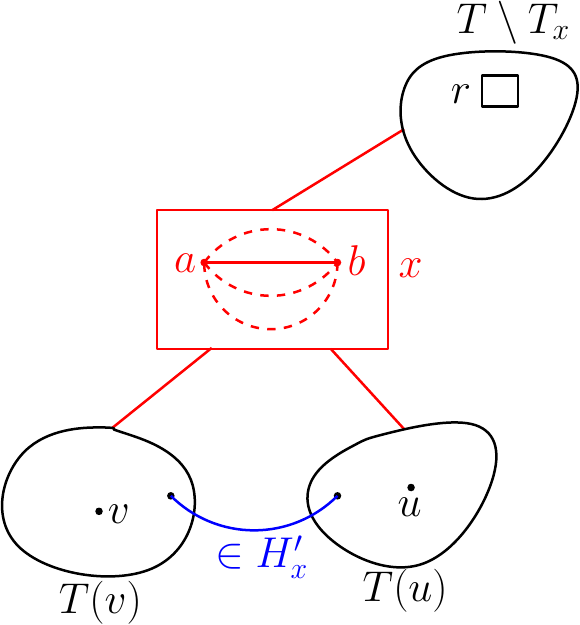}
            \caption{Either $T(v)$ or $T(u)$ are not good; then they are connected via the MST $H_x'$.}
        \end{subfigure}
        \hfill
        \begin{subfigure}{0.6\textwidth}
            \includegraphics[scale=0.65]{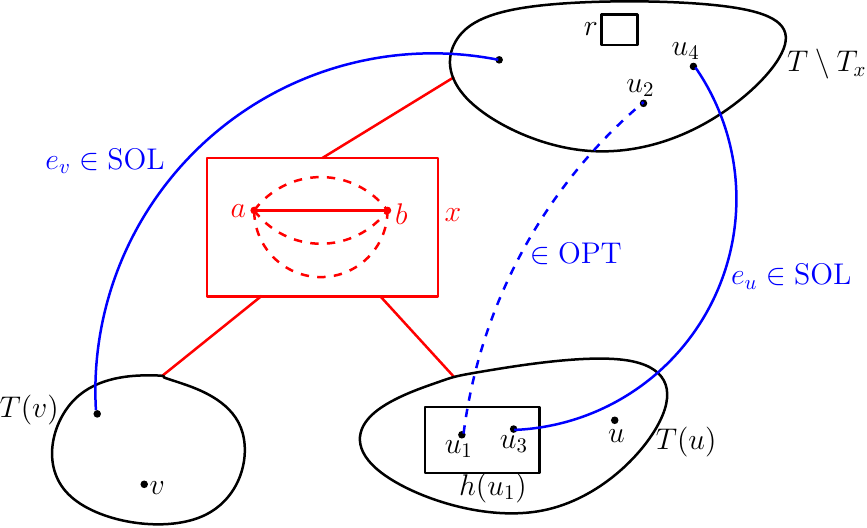}
            \caption{Both $T(u)$ and $T(v)$ are good. Note that $u$ may be in $h(u_1)$, though this is not necessary as demonstrated in the figure.}
        \end{subfigure}
        \caption{Case 2a example. Tree nodes are shown with boxes, while graph vertices are given by dots. Here $x$ has two subtrees $T(u)$ and $T(v)$.}
        \label{fig:vc_2_to_3_case2}
    \end{figure}

    \paragraph{Case 3: $\boldsymbol{a}$ and $\boldsymbol{b}$ are non-adjacent nodes of $\boldsymbol{G_x}$ for an S-node $\boldsymbol{x}$:} Let $C_1$ and $C_2$ be the two components of the cycle $G_x$ formed by the removal of $a$ and $b$. We assume that $u$ and $v$ are separated in $G$ by the removal of $\{a, b\}$ -- this implies that $x$ must be on the tree path from $u$ to $v$ (this includes the case that $u$ and $v$ are in the cycle $G_x$). Note that since $a$ and $b$ are non-adjacent, all dummy nodes remain intact. We overload notation and write $G_x$ to include dummy nodes. For ease of notation we write $f := f_x$, $\prec := \prec_x$ for this section. 
    \begin{itemize}[leftmargin=*]
        \item \textbf{Case 3a: $\boldsymbol{\text{LCA}(\ell(u), \ell(v)) = x}$:} Notice that $u$ and $f(u)$ (as well as $v$ and $f(v)$) remain connected despite the deletion of $\{a, b\}$. Thus $f(u)$ and $f(v)$ must be in distinct components of $G_x \setminus \{a, b\}$; without loss of generality suppose $f(u) \in C_1$ and $f(v) \in C_2$. 
        \begin{itemize}
            \item If $f(u) \prec f(v)$, then let $(v', v'') = \Min_{f(v)}(j)$ with $f(v') = f(v)$. Clearly, $v', f(v') \notin \{a, b\}$, since $f(v) = f(v')$, so either $v' = v = f(v)$ or $v$ and $v'$ are in the same subtree rooted at $x$ and $f(v')$ is a dummy node. Furthermore, note that $(v', v'') \in \sol$ since $uv \in \opt$ and $LCA(\ell(u), \ell(v)) = x$. By construction, $f(v'') \preceq f(u)$ and is therefore also in $C_1$. Since $f(u) \in C_1$, $f(u) \preceq \min(a, b)$ (where $\min$ is with respect to $\prec$), thus $f(v'') \notin \{a, b\}$ and by extension $v'' \notin \{a, b\}$. Thus there exists a $u$-$v$ path: $u \to f(u) \to f(v'')$ (since $C_1$ remains connected), $f(v'') \to v'' \to v' \to f(v) \to v$. See Figure \ref{fig:vc_2_to_3_case3a} for reference.
            \item The case with $f(v) \prec f(u)$ is similar; in this case we let $(v', v'') = \Max_{f(v)}(j)$. It is easy to see that by the same argument as above, $v', v'', f(v'')$ are all not contained in $\{a, b\}$, and thus there exists a $u$-$v$ path via the $(v', v'')$ link in $\sol$. 
        \end{itemize}
        \item \textbf{Case 3b: $\boldsymbol{\text{LCA}(\ell(u), \ell(v)) \neq x}$: } Since we assume that $x$ is on the tree path between $u$ and $v$, we can assume without loss of generality that $\ell(u) \in T_x$ and $\ell(v) \in T \setminus T_x$. Note that this implies that $v \notin G_x$ and $f(v)$ is the dummy node corresponding to the parent edge of $x$. 
        \begin{itemize}
            \item Suppose $u \in G_x \setminus \parent(x)$. Then $\Min_u(j) = (u, v')$ where $f(v') = f(v)$, so $\ell(v') \in T \setminus T_x$. It is easy to see that $T \setminus T_x$ remains connected despite the removal of $\{a, b\}$. Furthermore, $(u, v') \in \sol$, since $(u, v) \in \opt$, $u \in G_x \setminus \parent(x)$, and $\ell(v) \in T \setminus T_x$. Finally, $v' \notin \{a, b\}$ since $v' \notin G_x$. Thus there exists a $u$-$v$ path in $E \cup \sol$ avoiding $\{a, b\}$ via the link $\{u, v'\}$. See Figure \ref{fig:vc_2_to_3_case3b} for reference.
            \item Suppose $u \in \parent(x)$. Then, since $a$ and $b$ are non-adjacent nodes of $G_x$, $u$ is connected to $T \setminus T_x$ despite the removal of $\{a, b\}$, so $E$ contains a $u$-$v$ path avoiding $\{a, b\}$.
            \item Suppose $u \notin G_x$. Then we can use the same argument as in Case 1. let $(u', u'') = L_{h(u)}(j)$ where $h(u') = h(u)$; this must be in $\sol$ since $(u,v) \in \opt$. Furthermore, $u' \notin \{a, b\}$, since $h(u') = h(u)$ which is in a strict subtree of $T_x$, but $h(a), h(b)$ are at least as close to the root as $x$. By the same reasoning as Case 1, $\ell(u'') \in T \setminus T_x$, so $u'' \notin \{a, b\}$, since $\ell(a), \ell(b) \in T_x$. Thus $E \cup \opt$ contains a $u$-$v$ path avoiding $\{a, b\}$ via the link $(u', u'')$. 
        \end{itemize}
    \end{itemize}
    \begin{figure}
        \centering
        \begin{subfigure}{0.45\textwidth}
            \centering
            \includegraphics[scale=0.6]{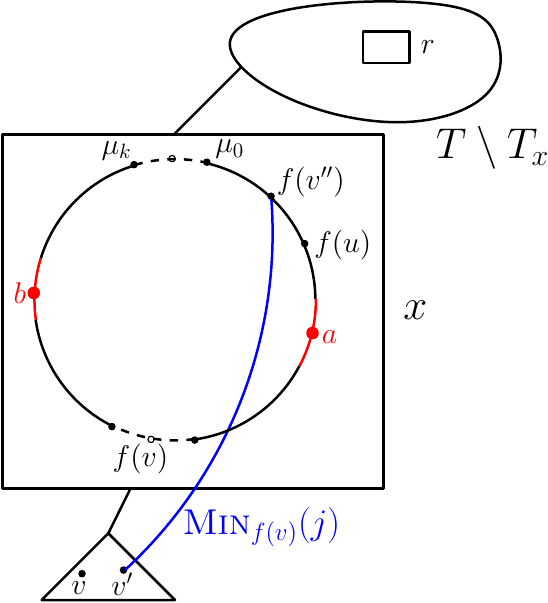}
            \caption{Case 3a: Here $f(u) \prec f(v)$. In this example, $f(v'') = v''$ and $f(u) = u$, but $v, v'$ are not on the cycle and thus $f(v)$ is a dummy node.}
            \label{fig:vc_2_to_3_case3a}
        \end{subfigure}
        \hfill
        \begin{subfigure}{0.45\textwidth}
            \centering
            \includegraphics[scale=0.6]{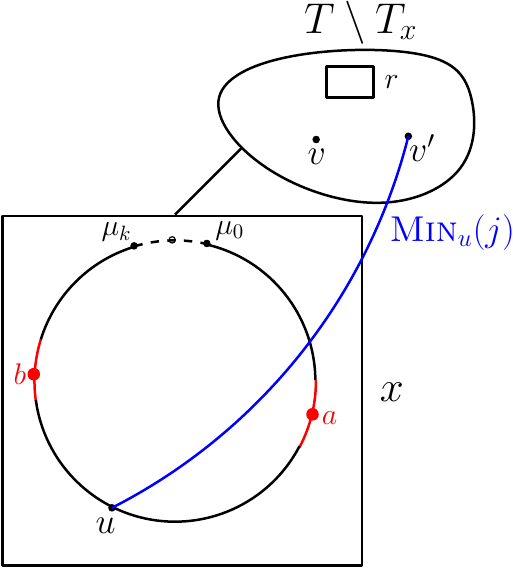}
            \caption{Case 3b: Here $u \in G_x \setminus \parent(x)$.}
            \label{fig:vc_2_to_3_case3b}
        \end{subfigure}
        \caption{Case 3 example. Tree nodes are shown with boxes, while graph vertices are given by dots.}
        \label{fig:vc_2_to_3_case3}
    \end{figure}
\end{proof}

By Lemmas \ref{lem:vc_2_to_3_feasibility} and \ref{lem:vc_2_to_3_weight}, Algorithm \ref{alg:vc_2_to_3_analysis} provides a feasible solution to $2$-VC-CAP on $(V, F)$ with weight at most $(7 + \eps) w(\opt)$; this concludes the proof of Lemma \ref{lem:vc_2_to_3_approx}. This, combined with Lemma \ref{lem:vc_2_to_3_space}, concludes the proof of Theorem \ref{thm:vc_2_to_3_main}.

\section{Lower Bounds for Streaming Network Design}\label{sec:vc-sndp-lb}

We describe a lower bound for the vertex-connectivity tree-augmentation problem (VC-TAP), i.e., $1$-VC-CAP. 

\begin{theorem}\label{thm:vc-tap-spanner-lb}
Consider the unweighted VC-TAP where both $E$ and $L$ arrives as a stream in an arbitrary order. Any single-pass algorithm returning a better than $(2t+1)$-approximation requires $\Omega(n^{1+1/t})$ space.
\end{theorem}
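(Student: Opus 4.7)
The plan is to adapt the EC-TAP lower bound construction of \cite{JKMV24} to the vertex-connectivity setting, exploiting graphs of high girth to obtain the sharper $(2t+1)$-approximation barrier. First I would fix an unweighted graph $H=(V_H,E_H)$ with $|V_H|=\Theta(n)$, $|E_H|=m=\Omega(n^{1+1/t})$, and girth strictly greater than $2t+2$; such graphs exist by the Erd\H{o}s--Sachs construction. The key consequence is that for every edge $e=(u,v)\in E_H$, every $uv$-walk in $H\setminus\{e\}$ that is internally vertex-disjoint from $\{u,v\}$ has length at least $2t+1$. This extra ``$+1$'' over the edge-connectivity version is precisely what would upgrade the bound from $(2t-1)$ to $(2t+1)$.

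Next I would construct the VC-TAP instance. The underlying $1$-connected graph $G'$ is a tree containing $V_H$ in which each $v\in V_H$ is a cut vertex isolating a dedicated pendant $v'$, so that covering the cut at $v$ in any $2$-vertex-connected augmentation requires a link incident to $v'$. The link set $L$ is placed in bijection with $E_H$: each $e=(u,v)\in E_H$ becomes a link of the form $(u',v')$ (or a close variant), and by design this is the unique ``cheap'' way to cover the cuts at both $u$ and $v$. If a candidate solution omits the link $\ell_e$ associated with $e$, then vertex-disjointness forces any alternative covering of those two cuts to trace a detour in $H\setminus\{e\}$ using other links in $L$, and by the girth property this detour has length at least $2t+1$, inflating the cost by the claimed factor.

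The space bound then follows from a reduction to the one-way two-party \textsc{Index} problem on $\Theta(m)$ bits, in the same spirit as~\cite{JKMV24}. Alice interprets her bitstring $x\in\{0,1\}^m$ as a subset of $E_H$, streams the corresponding links through the VC-TAP streaming algorithm, and forwards the memory state to Bob. Given an index $i\in[m]$ corresponding to some $e_i=(u,v)\in E_H$, Bob appends a short suffix of tree/link edges that isolates $\ell_{e_i}$ so that the optimum of the extended instance has some value $c$ if $x_i=1$ and is at least $(2t+1)c$ if $x_i=0$. Any streaming algorithm with approximation ratio strictly better than $2t+1$ thus decides \textsc{Index}, and the $\Omega(m)=\Omega(n^{1+1/t})$-bit hardness of \textsc{Index} transfers to its memory, completing the proof.

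The main obstacle I expect is designing the pendant--link gadget so that vertex-disjointness genuinely costs an additional unit of length beyond the edge-disjoint case: the tree itself must not provide a short vertex-disjoint detour around a missing link. A path- or spider-shaped tree whose internal nodes are exactly the vertices of $V_H$ (with pendants attached) should achieve this because any alternative augmentation is forced to route through $L$-links, but verifying the exact $(2t+1)$ gap in this construction, together with engineering Bob's suffix so that the optimum cleanly signals $x_i$ and the promise-gap lines up with the approximation-ratio threshold, is where most of the technical effort lies.
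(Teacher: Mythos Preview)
Your high-level plan---reduce from one-way \textsc{Index} via a high-girth host graph---is the same as the paper's, but the construction you sketch is more complicated than needed and the ``pendant gadget'' obstacle you flag at the end is one the paper avoids entirely. In the paper there are no pendants: the tree $E$ is just a Hamiltonian path on $V_H$ and the link set is $L=E(G')$ directly, with no relabeling to primed vertices. The idea you are missing is how Bob orders the path. Given the query edge $e=(u,v)$, Bob sets $x_1:=u$, $x_{|V|}:=v$, and places the remaining vertices in nondecreasing order of $d_{H}(u,\cdot)$ where $H:=G\setminus\{e\}$. With this monotone ordering, for any minimal feasible augmentation $S=\{(x_{i_k},x_{j_k})\}_{k=1}^{s}$ the $2$-vertex-connectivity of $E\cup S$ forces the intervals $[i_k+1,j_k-1]$ to cover $\{2,\dots,|V|-1\}$, and after pruning to a staircase one shows by induction that $d_H(u,x_{j_k})\le k$ (the inductive step uses $i_k<j_{k-1}$ together with the monotonicity of $d_H(u,x_{\cdot})$ along the path). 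Hence $d_H(u,v)\le s$, so $G$ has a cycle of length $\le s+1$; girth $\ge 2t+2$ then gives $s\ge 2t+1$. In the YES case the single link $(u,v)$ turns the path into a cycle, so the optimum is $1$ and the gap is exactly $2t+1$.

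Two concrete issues with your pendant route. First, you only need girth $>2t+1$; your $>2t+2$ is a unit too strong. Second, and more seriously, if every $v\in V_H$ carries a pendant $v'$ and links live only between pendants, then in the YES instance the optimum is not $1$: all the other pendants still need to be covered, so the clean $1$-versus-$(2t+1)$ gap does not materialize unless Bob's unspecified ``short suffix'' neutralizes every pendant except $u',v'$---at which point you are essentially back to a path on $V_H$ anyway. The paper's ordered-path construction delivers the bound directly without any gadget engineering.
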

\begin{proof}
Let $G = (V, E)$ be a fixed graph on $n$ vertices with girth strictly larger than $2t + 1$. Consider the INDEX problem: Alice has a bit string from $\{0,1\}^{E(G)}$, representing a subgraph $G' \subseteq G$. Alice then sends a message to Bob, who must recover the $i$-th bit of the string for a given index $i$, or equivalently, determine whether $(u, v) \in G'$ for a specified edge $(u, v) \in G$. It was shown by Miltersen { \em et al.}~\cite{jcss/MiltersenNSW98} that any bounded-error randomized protocol for INDEX requires a message of size $\Omega(|E(G)|)$ bits. Note that the graph $G$ and its edges are known to both Alice and Bob.

\smallskip
We now use the streaming algorithm $\mathcal{A}$ for VC-TAP to design a protocol for the INDEX problem. Alice and Bob jointly construct a TAP instance $(E, L)$ for $\mathcal{A}$ as follows:
\begin{itemize}
    \item First, Alice sets $L := E(G')$ and feeds it into $\mathcal{A}$. She then sends the memory contents of $\mathcal{A}$ to Bob. To determine whether $(u, v) \in G'$, Bob constructs a chain $E := \{(x_1, x_2), (x_2, x_3), \dots, (x_{|V|-1}, x_{|V|})\}$ and feeds $E$ to $\mathcal{A}$, where $x_1 := u$, $x_{|V|} := v$, and the intermediate vertices $\{x_2, \dots, x_{|V|-1}\} = V \setminus \{u, v\}$ are ordered so that $d_{H}(u, x_{j-1}) \leq d_{H}(u, x_{j})$ for $2 \leq j \leq |V| - 1$. Here, the graph $H$ is defined as $G$ with the edge $(u, v)$ removed.
    \item Bob then determines that $(u, v) \in G'$ if and only if $\mathcal{A}$ returns an approximate solution with cost less than $2t + 1$ for the instance $(E, L)$.
\end{itemize}
Next, we prove the correctness of the described protocol that uses $\mathcal{A}$.

\begin{itemize}
  \item {\bf If $(u,v)\in E(G')$:} In this case, the optimal solution for augmenting the chain $E$ is to add the single edge $(u, v) \in L = E(G')$, completing a spanning cycle. Hence, $\mathcal{A}$ will report an approximate solution of cost at most $2t+1$ and Bob can correctly decide $(u,v)\in E(G')$.

\item {\bf If $(u,v)\notin E(G')$:} To show that Bob can correctly decide $(u,v)\notin E(G')$, it suffices to show that any feasible augmentation set $S=\{(x_{i_1},x_{j_1}),(x_{i_2},x_{j_2}),\dots,$ $(x_{i_s},x_{j_s})\}\subseteq L=E(G')$ has size at least $2t+1$.

We assume $i_k<j_k$ for all $1\le k\le s$. 
Since $\{(x_1,x_2),(x_2,x_3),\dots,$ $(x_{|V|-1},x_{|V|})\}\cup S$ is $2$-vertex-connected, the intervals $[i_1+1,j_1-1],\dots,[i_s+1,j_s-1]$ covers all $\{2,\dots,|V|-1\}$. This is because, once we add an edge $(i_k, j_k)$, removing any vertex in $[i_k + 1, j_k - 1]$ does not disconnect the graph, as the connectivity is preserved by the edge $(i_k, j_k)$. We can assume, without loss of generality, that $1 = i_1 < i_2 < \dots < i_s$ and $j_1 < \dots < j_s = |V|$, with $j_{k-1} > i_k$, by keeping a minimal feasible subset of $S$. This is because if $i_{k-1} < i_k$ and $j_{k-1} > j_k$, then removing $(i_k, j_k)$ still leaves all cuts covered by $(i_{k-1}, j_{k-1})$. Moreover, if $j_{k-1} \le i_k$, the removing any nodes in $[j_{k-1},i_k]$ will leave the graph $\{(x_1,x_2),(x_2,x_3),\dots, (x_{|V|-1},x_{|V|})\}\cup S$ disconnected.

Note that $S\subseteq E(G') \subseteq E(G) \setminus \{(u,v)\} = E(H)$.
Now we inductively prove for every $1\le k \le s$ that $d_H(u,x_{j_k})\le k$.   The base case $k=1$ is immediate: $d_H(u,x_{j_1}) = d_H(x_{i_1},x_{j_1}) = 1$. For the inductive step $2\le k\le s$, we have
\begin{align*}
    d_H(u,x_{j_k}) 
    &\le d_H(u,x_{i_{k}})+d_H(x_{i_k},x_{j_k})\\
    & = d_H(u,x_{i_{k}}) + 1\\
    & \le d_H(u,x_{j_{k-1}}) +1  &&\hspace{-1.1cm}\rhd\text{by $i_k\le j_{k-1}<|V|$ and monotonicity of $d_H(u,x_{*})$}\\
    & \le k &&\hspace{-1.1cm}\rhd\text{by induction hypothesis}
\end{align*}
Hence, this shows that $d_H(u,v) = d_{H}(u,x_{j_s})\le s$. So $G = H \cup \{(u,v)\}$ contains a cycle of length at most $s+1$.  Since $G$ has girth at least $2t+2$, we conclude $s \ge 2t+1$.
\end{itemize}
\end{proof}
Theorem~\ref{thm:vc-tap-spanner-lb} immediately implies the following for $k$-VC-CAP. 
\begin{corollary}\label{cor:vc-cap-lb}
Any algorithm approximating $k$-VC-CAP with a factor better than $2t+1$ in fully streaming requires $\Omega(n^{1+1/t})$ space.
\end{corollary}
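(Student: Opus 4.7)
The plan is to reduce VC-TAP, i.e.\ $1$-VC-CAP, to $k$-VC-CAP so that the bound of Theorem~\ref{thm:vc-tap-spanner-lb} transfers directly. Starting from a VC-TAP instance with tree $E$ and link set $L$ on vertex set $V$ of size $n$, I would introduce $k-1$ new ``universal'' vertices $U$ and form a $k$-VC-CAP base graph $G'$ on $V \cup U$ whose edge set consists of $E$ together with every edge incident to a vertex of $U$ (i.e.\ a clique on $U$ together with the complete bipartite graph between $U$ and $V$). The link set of the new instance remains $L$.

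The first step is to verify two structural properties. First, $G'$ is $k$-vertex-connected: after removing any $k-1$ vertices $S$, either some $u \in U \setminus S$ survives (and is adjacent to every remaining vertex), or $S = U$ and the residual graph is the tree $(V, E)$. Second, for any subset $L' \subseteq L$, the augmented graph $G' \cup L'$ is $(k+1)$-vertex-connected iff $(V, E \cup L')$ is $2$-vertex-connected. The ``if'' direction is immediate; for ``only if'', the only way a size-$k$ set $S$ can fail to leave $G' \cup L'$ connected is $U \subseteq S$, so $S = U \cup \{v\}$ for some $v \in V$, and the residual graph is $(V \setminus \{v\}, (E \cup L')[V \setminus \{v\}])$, which is connected for every choice of $v$ exactly when $(V, E \cup L')$ is $2$-vertex-connected. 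Since only links in $L$ contribute to the cost, feasible subsets of $L$ and their costs coincide in the two instances.

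The streaming reduction is then direct. Given an alleged single-pass $k$-VC-CAP algorithm $\mathcal{A}$ with space $s$ and approximation ratio strictly below $2t+1$, I would simulate it on any VC-TAP stream by first inserting the (fixed, publicly known) $O(nk)$ gadget edges incident to $U$ and then forwarding the remainder of the VC-TAP stream unchanged; the returned subset of $L$ is simultaneously feasible and near-optimal for both problems. Applying Theorem~\ref{thm:vc-tap-spanner-lb} yields $s = \Omega(n^{1+1/t})$. Writing $n' = n + k - 1$ for the size of the $k$-VC-CAP instance and restricting to the only non-trivial regime $k = O(n')$, we have $n = \Theta(n')$ and hence $s = \Omega({n'}^{\,1+1/t})$, as required.

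The only mild obstacle is the $\Theta(nk)$ extra gadget edges; however these are inputs inserted into the stream and are not stored by $\mathcal{A}$, so they play no role in the space lower bound. No modification of Theorem~\ref{thm:vc-tap-spanner-lb} is needed since that argument is insensitive to the order in which $E$ and $L$ arrive, so interleaving the universal-vertex gadget with the original stream causes no issue.
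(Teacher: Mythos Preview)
Your reduction is correct and your proof is more thorough than what the paper actually provides. The paper's entire argument is the single sentence ``Theorem~\ref{thm:vc-tap-spanner-lb} immediately implies the following for $k$-VC-CAP,'' which is most naturally read as: VC-TAP is literally the case $k=1$ of $k$-VC-CAP, so the lower bound carries over to the problem family without further work. You instead give an explicit reduction that establishes the bound for every fixed $k$, by adjoining $k-1$ universal vertices and checking that feasibility and cost of link sets are preserved. This is a genuinely different and stronger route: it shows that the $\Omega(n^{1+1/t})$ barrier persists even when the algorithm is promised a particular value of $k>1$, whereas the paper's one-liner only directly covers $k=1$. Your argument also interfaces cleanly with the communication protocol behind Theorem~\ref{thm:vc-tap-spanner-lb}, since the gadget edges are known to both parties and can be fed into the simulated stream without any additional communication.

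One small imprecision: you write ``restricting to the only non-trivial regime $k = O(n')$, we have $n = \Theta(n')$.'' The hypothesis $k = O(n')$ is vacuous (it always holds, since $k \le n'-1$), and it does not by itself give $n = \Theta(n')$; for instance if $k = n'-2$ then $n=3$. What you need is $k \le (1-c)\,n'$ for some constant $c>0$, which is the standard regime in which the bound is stated and matches the paper's implicit assumption (cf.\ the $n = \Omega(k^3)$ condition on the upper bound side). With that adjustment the conclusion $s = \Omega({n'}^{\,1+1/t})$ goes through.
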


Moreover, by Theorem~\ref{thm:vc-tap-spanner-lb} and the fact that any feasible solution for $k$-VCSS and VC-SNDP (in general) has size $\Omega(nk)$, the following corollaries hold.

\begin{corollary}\label{cor:vc-vcss-lb}
Any algorithm approximating $k$-VCSS with a factor better than $2t+1$, in insertion-only streams, requires $\Omega(nk + n^{1+1/t})$ space.
\end{corollary}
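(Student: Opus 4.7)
The plan is to combine two separate lower bounds whose sum is the claimed $\Omega(nk + n^{1+1/t})$: the first via a reduction from VC-TAP, the second via an output-size / counting argument.

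For the $\Omega(n^{1+1/t})$ term, I would reduce VC-TAP to $k$-VCSS by a padding construction. Given a VC-TAP instance $(V_0, E, L)$ on $n_0 = n - (k-2)$ vertices (tree $E$, links $L$), form a $k$-VCSS input on $n$ vertices by introducing $k-2$ ``universal'' vertices $u_1,\dots,u_{k-2}$, each joined to every vertex of $V_0$ by a $0$-weight edge; tree edges also get weight $0$, while links retain their weights. A short case analysis shows that $E \cup \{\text{universal edges}\}$ is exactly $(k-1)$-vertex-connected, and that its $(k-1)$-vertex cuts are precisely the sets $\{u_1,\dots,u_{k-2},v\}$ with $v$ an internal tree vertex; hence augmenting to $k$-connectivity with links reduces \emph{exactly} to covering every tree cut vertex, i.e., to VC-TAP. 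Any $(2t+1)$-approximate streaming $k$-VCSS algorithm $\mathcal{A}$ therefore solves VC-TAP within the same factor: Alice streams the links (encoding her INDEX string as in Theorem~\ref{thm:vc-tap-spanner-lb}), sends $\mathcal{A}$'s memory to Bob, who streams the $n_0-1$ tree edges and the $(k-2)n_0$ universal edges and reads off the output cost. Theorem~\ref{thm:vc-tap-spanner-lb} then forces $\mathcal{A}$ to use $\Omega(n^{1+1/t})$ space.

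For the $\Omega(nk)$ term, I would use a standard counting argument. Take the input to be a uniformly random edge-minimal $k$-vertex-connected graph $G$ on $n$ vertices, e.g., a Harary-style graph with exactly $\lceil nk/2 \rceil$ unit-weight edges. Every $k$-vertex-connected spanning subgraph must have minimum degree $\ge k$ and hence $\ge nk/2$ edges, so the only feasible (and in particular the only $(2t+1)$-approximate) $k$-VCSS output is $G$ itself. Thus the memory at the end of the stream must uniquely determine $G$; since the number of such graphs is $2^{\Omega(nk)}$, the memory has size $\Omega(nk)$. Taking the maximum of the two bounds yields $\Omega(nk + n^{1+1/t})$.

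The main obstacle is the structural claim in the reduction: proving that the padded graph is exactly $(k-1)$-connected and that its $(k-1)$-cuts are exactly the ``all universals plus one internal tree vertex'' sets. This requires a careful case split on how many universal versus original vertices are deleted, together with checking the degenerate small-$n_0$ regime so that no additional ``all-tree-vertices'' cut survives; once this is done the reduction is approximation-preserving and everything else is immediate from Theorem~\ref{thm:vc-tap-spanner-lb}. The $\Omega(nk)$ counting step is conceptually simple but requires choosing the instance (edge-minimal, unit weights) so that the $(2t+1)$ approximation slack cannot be exploited by a sparser feasible output.
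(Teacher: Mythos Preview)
Your proposal is correct and in fact considerably more detailed than the paper's own justification, which is a single sentence: the corollary follows from Theorem~\ref{thm:vc-tap-spanner-lb} together with ``the fact that any feasible solution for $k$-VCSS \ldots\ has size $\Omega(nk)$.'' The paper does not spell out how Theorem~\ref{thm:vc-tap-spanner-lb} (which concerns VC-TAP, i.e.\ the case $k=2$) transfers to general $k$; your padding construction with $k-2$ universal zero-weight vertices is precisely the standard way to fill this gap, and your identification of the structural claim that needs checking (the padded graph is exactly $(k-1)$-connected with cuts of the form $\{u_1,\dots,u_{k-2},v\}$) is on point. One small remark: the reduction yields $\Omega((n-k)^{1+1/t})$ rather than $\Omega(n^{1+1/t})$ directly, but this is harmless since for $k \ge n/2$ the $\Omega(nk)$ term already dominates $n^{1+1/t}$.

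For the $\Omega(nk)$ term, the paper's argument is simpler than yours: any feasible output has minimum degree $\ge k$ and hence $\ge nk/2$ edges, so the algorithm must store that many edges just to report its answer. Your counting argument is a legitimate (and arguably more rigorous) formalization of the same idea, but your chosen witness family is too small: ``Harary-style'' graphs form essentially a single isomorphism class per $(n,k)$, giving at most $n! = 2^{O(n\log n)}$ labeled graphs, which falls short of $2^{\Omega(nk)}$ when $k = \omega(\log n)$. To make the counting go through you would need a richer family, e.g.\ $k$-regular graphs (of which there are $2^{\Theta(nk\log n)}$, and a constant fraction are $k$-vertex-connected). Alternatively, just invoke the output-size observation directly as the paper does and skip the counting.
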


\begin{corollary}\label{cor:vc-sndp-lb}
Any algorithm approximating VC-SNDP with maximum connectivity requirement $k$ with a factor better than $2t+1$, in insertion-only streams, requires $\Omega(nk + n^{1+1/t})$ space.
\end{corollary}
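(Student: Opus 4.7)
\smallskip
\noindent\textbf{Proof Proposal.}
The plan is to derive the $\Omega(nk + n^{1+1/t})$ bound by combining two independent lower bounds: an $\Omega(n^{1+1/t})$ bound that is inherited from the VC-TAP lower bound (Theorem~\ref{thm:vc-tap-spanner-lb}), and an $\Omega(nk)$ bound that arises from the sheer size that any feasible solution to VC-SNDP must have. Since these come from different hard inputs, the overall lower bound is the sum (up to constants), which the algorithm must respect in the worst case.

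For the $\Omega(n^{1+1/t})$ part, the observation is that VC-TAP is the special case of VC-SNDP obtained by setting the connectivity requirement $r(uv) = 2$ for every pair $u,v \in V$; in particular this is a valid VC-SNDP instance whenever $k \ge 2$. A streaming algorithm that outputs a feasible VC-SNDP solution of cost strictly less than $2t+1$ times the optimum would in particular do so on these instances, contradicting Theorem~\ref{thm:vc-tap-spanner-lb}. For the case $k=1$ the bound $\Omega(nk) = \Omega(n)$ is trivial so we may assume $k \ge 2$.

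For the $\Omega(nk)$ part, I would reduce from INDEX on $\Theta(nk)$ bits, exploiting the fact that any feasible VC-SNDP solution has min-degree at least $k$ on terminals and thus at least $\Omega(nk)$ edges. Concretely, I would take a base graph built from $\Theta(n/k)$ disjoint ``gadgets'' (for instance, $K_{2k}$-like blocks linked minimally), each of which admits $2^{\Omega(k^2)}$ distinct minimum-weight $k$-vertex-connected spanning subgraphs determined by $\Theta(k^2)$ adversarially chosen bits; summed over all gadgets this yields a family of $2^{\Omega(nk)}$ instances whose unique near-optimal solutions differ pairwise, so that any streaming algorithm approximating strictly better than $2t+1$ (in fact any constant) must distinguish them. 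Alice encodes her INDEX input by choosing the appropriate combination of ``optional'' edges per gadget and streaming it; Bob then streams a small fixed completion that forces the algorithm's output, restricted to the gadget of interest, to reveal the queried bit. By the $\Omega(nk)$ lower bound of Miltersen et al.\ for INDEX of length $\Theta(nk)$, the algorithm's memory at the moment Alice finishes must be $\Omega(nk)$ bits.

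The main obstacle I expect is in designing the gadgets so that (i) different bit vectors genuinely force different near-optimal solutions (so no approximation better than $2t+1$ can avoid distinguishing them), and (ii) the gadgets are locally $k$-vertex-connected yet independent of one another, so that the INDEX bits remain recoverable in parallel. Getting these two properties simultaneously is the delicate step; the standard recipe is to use $k$-edge-disjoint copies of a TAP-like base structure per gadget, verifying that vertex-connectivity (rather than edge-connectivity) is still tight. Once the gadget construction is in place, the reduction and the union of the two bounds is routine, and the corollary follows immediately.
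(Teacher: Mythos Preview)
Your decomposition into two separate lower bounds and your treatment of the $\Omega(n^{1+1/t})$ term are correct and match the paper exactly: VC-TAP is a special case of VC-SNDP with $k\ge 2$, so Theorem~\ref{thm:vc-tap-spanner-lb} applies directly.

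For the $\Omega(nk)$ term, however, you are working far harder than necessary and leave the argument incomplete. The paper does not use any INDEX reduction or gadget construction here at all. Its entire argument is one sentence: on $k$-VCSS instances (which are VC-SNDP instances with maximum requirement $k$), \emph{every} feasible solution has $\Omega(nk)$ edges, because every vertex must have degree at least $k$. Hence any streaming algorithm that outputs a feasible solution---let alone an approximate one---must write down $\Omega(nk)$ edges and therefore use $\Omega(nk)$ space. This is a pure output-size lower bound; it holds for any finite approximation ratio and requires no communication complexity.

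Your proposed gadget-based INDEX reduction might be made to work, but you yourself flag the key step (forcing different bit vectors to yield distinguishable near-optimal solutions under vertex connectivity) as unresolved, and it is genuinely nontrivial. Since the elementary output-size argument already gives the bound, you should replace that entire paragraph with the one-line observation above.
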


\section*{Acknowledgment}

This work was conducted in part while Chandra Chekuri, Sepideh Mahabadi and Ali Vakilian were visitors at the Simons Institute for the Theory of Computing as part of the Data Structures and Optimization for Fast Algorithms program. 
The authors thank Ce Jin for his contributions during the early stage of the project. The authors also thank an anonymous reviewer for pointing out polynomial-time computability issues in the greedy fault-tolerant spanner construction, and for pointers to related literature on efficient constructions and improved bounds for fault-tolerant spanners. 

\bibliographystyle{alpha}
\bibliography{main}

\end{document}